\DeclareMathOperator*{\argmin}{arg\,min}
\DeclareMathOperator*{\indeg}{indeg}
\DeclareMathOperator*{\outdeg}{outdeg}
\newcommand{\eG}[2]{\ensuremath{#1\langle #2\rangle}}
\newcommand{\OPT}{\text{OPT}}
\newcommand{\pWR}{\beta}
\newcommand{\pBal}{\gamma}
\newcommand{\pM}{\delta}
\newcommand\setrow[1]{\gdef\rowmac{#1}#1\ignorespaces}
\newcommand\clearrow{\global\let\rowmac\relax}
\numberwithin{equation}{section}
\numberwithin{figure}{section}
\newcommand{\EE}{Eulerian extension}
\newcommand{\M}{\mathcal M}
\newcommand{\sol}{RPP tour}
\newcommand{\ba}{balanced}
\newcommand{\imba}{imbalanced}
\DeclareMathOperator{\poly}{poly}
\newcommand{\emi}{edge\hyp minimizing}
\newcommand{\cost}{\omega}
\newcommand{\N}{\mathbb{N}}
\newcommand{\apxfakt}{\alpha}
\newcommand{\Q}{\mathbb{Q}}
\newtheoremstyle{sans}
  {\topsep}   {\topsep}   {}  {0pt}       {\bfseries\sffamily} {.}         {5pt plus 1pt minus 1pt} {}          
\theoremstyle{sans}
\newtheorem{theorem}{Theorem}[section]
\newtheorem{observation}[theorem]{Observation}
\newtheorem{problem}[theorem]{Problem}
\newtheorem{lemma}[theorem]{Lemma}
\newtheorem{proposition}[theorem]{Proposition}
\newtheorem{definition}[theorem]{Definition}
\newtheorem{conjecture}[theorem]{Conjecture}
\newtheorem{remark}[theorem]{Remark}
\newtheorem{rrule}[theorem]{Reduction Rule}
\crefname{rrule}{Reduction Rule}{Reduction Rules}
\crefname{problem}{Problem}{Problems}
\crefname{observation}{Observation}{Observations}
\crefname{construction}{Construction}{Constructions}
\crefname{figure}{Figure}{Figures}
\crefname{conjecture}{Conjecture}{Conjectures}
\newcommand{\RD}{$(\Rightarrow)$}
\newcommand{\LD}{$(\Leftarrow)$}
\date{}
\author{René van Bevern\\
  Department of Mathematics and Mechanics,
  Novosibirsk State University,
  Novosibirsk, Russian Federation,
  \texttt{rvb@nsu.ru}
  \and
  Till Fluschnik
  \\
  Technische Universit\"at Berlin,
  Faculty~IV,
  Algorithmics and Computational Complexity,
  Berlin, Germany,
  \texttt{till.fluschnik@tu-berlin.de}
  \and
  Oxana Yu.\ Tsidulko\\
  Department of Mathematics and Mechanics,
  Novosibirsk State University,
  Novosibirsk, Russian Federation,
  \\
  Sobolev Institute of Mathematics of the Siberian Branch of the Russian Academy of Sciences,
  Novosibirsk, Russian Federation,
  \texttt{tsidulko@math.nsc.ru}
}
\title{On approximate
  data reduction for
  the Rural Postman Problem:
  Theory and experiments\thanks{A preliminary version of this work appeared in the
    Proceedings of the 18th International Conference
    on Mathematical Optimization Theory and Operations Research
    (MOTOR 2019), Ekaterinburg, Russian Federation, July 8-12, 2019
    \citep{BFT19}.
    This work provides all proofs of the theorems stated in the conference version,
    a stronger version of \cref{thm:easy},
    WK[1]-completeness results (\cref{sec:hard}),
    and an experimental evaluation
    of our data reduction algorithm (\cref{sec:exp}).
  }
}
\begin{document}
\maketitle

\begin{abstract}
  \noindent
  \looseness=-1
  Given an undirected graph with edge weights
  and a subset~$R$ of its edges,
  the Rural Postman Problem (RPP)
  is to find a closed walk
  of minimum total weight
  containing all edges of~$R$.
  We prove that RPP is WK[1]-complete
  parameterized by
  the number and weight~$d$
  of edges
  traversed additionally to the required ones.
  Thus RPP instances cannot be polynomial\hyp time
  compressed to instances of size polynomial in~$d$
  unless the polynomial\hyp time hierarchy collapses.
  In contrast,
  denoting by~$b\leq 2d$ the number of vertices
  incident to an odd number of edges of~$R$
  and by~$c\leq d$ the number of connected
  components formed by the edges in~$R$,
  we show how to reduce
  any RPP instance~$I$
  to an
  RPP instance~$I'$ with
  \(2b+O(c/\varepsilon)\)~vertices
  in $O(n^3)$~time
  so that any \(\alpha\)-approximate solution
  for~$I'$
  gives an \(\alpha(1+\varepsilon)\)-approximate
  solution for~$I$,
  for any~$\alpha\geq 1$ and $\varepsilon>0$.
  That is,
  we provide a
  polynomial\hyp size approximate
  kernelization scheme (PSAKS).
  We experimentally evaluate it
  on wide-spread benchmark data sets
as well as on two real snow plowing instances from Berlin.
We also make first steps
  towards a PSAKS for the parameter~$c$.

\end{abstract}

\paragraph{Keywords:}
  Eulerian extension;
  capacitated arc routing;
  lossy kernelization;
  above-guarantee parameterization;
  NP-hard problem;
  parameterized complexity

\section{Introduction}

In the framework of lossy kernelization
\citep{LPRS17,FKRS18},
we study trade-offs between
the provable effect of data reduction
and
the provably achievable solution quality
for the following classical
vehicle routing problem \citep{Orl74}.

\pagebreak[3]
\begin{problem}[Rural Postman Problem, RPP]\leavevmode
  \begin{compactdesc}
    
  \item[Instance:] An undirected graph~$G=(V,E)$
    with $n$~vertices,
    edge weights~\(\cost\colon E\to\N\cup\{0\}\),
    and a multiset~$R$
    of \emph{required edges} of~$G$.
    
  \item[Task:] Find
    a closed walk~$W^*$ in~$G$
    containing each edge of~$R$
    and minimizing
    the total weight~$\cost(W^*)$
    of the edges on~$W^*$.
  \end{compactdesc}
\end{problem}
\noindent
We call any closed walk containing
each edge of~$R$ an \emph{\sol{}}.
By RPP we will also refer to
the \emph{decision problem}
where one additionally gets a non\hyp negative integer~$k\in\N$
in the input
and the task is to decide
whether there is an \sol{}~$W$
of weight~$\cost(W)\leq k$.

RPP has direct applications
in snow plowing,
street sweeping,
meter reading \citep{EGL95,CL15},
vehicle depot location \citep{GL99},
drilling, and plotting \citep{GJR91,GI01}.
The undirected version
occurs especially
in rural areas, where
service vehicles
can operate in both directions
even on one\hyp way roads \citep{EM91}.
Moreover,
RPP is a special case of
the Capacitated Arc Routing Problem (CARP)
\citep{GW81}
and used in all ``route first, cluster second'' algorithms
for CARP \citep{Ulu85,BBLP06,BE08},
which are notably the only ones
with proven
approximation guarantees \citep{Jan93,Woe08,BKS17}.
Improved approximations for RPP
automatically
lead to better approximations for CARP.

\looseness=-1
There is a folklore
polynomial\hyp time 3/2-approximation for RPP
based on the \citeauthor{Chr76}-\citeauthor{Ser78}
algorithm for the metric Traveling Salesman Problem \citep{Chr76,Ser78,BSxxb}
(we refer to \citet{EGL95} or \citet{BNSW15}
for a detailed algorithm description).
We aim for \((1+\varepsilon)\)\hyp approximations
for all~$\varepsilon>0$.
Unfortunately,
RPP contains the metric Traveling Salesman Problem
as a special case,
which cannot be
polynomial\hyp time
approximated within any factor smaller than 123/122
unless P${}={}$NP \citep{KLS15}.
Thus,
finding $(1+\varepsilon)$-approximations
even for constant small~$\varepsilon>0$
typically requires exponential time.
We present data reduction rules
for this task.
Their effectivity depends
on the desired~$\varepsilon$.

\subsection{Our contributions and outline of this paper}
In \cref{sec:prelims},
we introduce basic notation
of graph theory,
approximation algorithms,
parameterized complexity,
problem kernelization,
and WK[1]-completeness.
In \cref{sec:solstruct},
we prove basic
properties of optimal \sol{}s.

Using the recently introduced concept
of WK[1]-hardness \citep{HKS+15b},
in \cref{sec:hard},
we prove that it is hard
to reduce exactly solving RPP
to solving instances
of size polynomial in $\cost(W^*)-\cost(R)+|W^*|-|R|$,
which is
the weight and number
of the \emph{deadheading edges}
traversed additionally to the required ones:
\begin{theorem}
  \label{thm:wk1}
  RPP is WK[1]-complete
  parameterized by~$\Theta(\cost(W^*)-\cost(R)+|W^*|-|R|)$,
  where $W^*$~is an optimal \sol{}
  with a minimum amount of edges and
  WK[1]-hardness holds
  even in complete graphs
  with metric edge weights 1 and~2.
\end{theorem}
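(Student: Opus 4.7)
The plan is to prove WK[1]-membership and WK[1]-hardness separately, with the main effort on the hardness direction. For hardness I would give a polynomial parameter transformation from Hamilton Cycle parameterized by the number of vertices~$n$, a natural WK[1]-hard starting problem, to RPP on complete graphs with metric $\{1,2\}$-weights. Given $G=(V,E)$ with $|V|=n$, build a complete graph $H$ on $V\cup V'$, where $V'=\{v':v\in V\}$ is a disjoint copy; designate $R=\{\{v,v'\}:v\in V\}$ (a perfect matching on $V(H)$) as the required edges, each of weight~$1$; and for the non-required edges set $\cost(\{v',u\})=1$ whenever $\{v,u\}\in E(G)$ and $\cost(e)=2$ otherwise. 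All weights lie in $\{1,2\}$, so the triangle inequality is automatic.

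The key claim is that $G$ has a Hamilton cycle if and only if $H$ admits an \sol{} of weight $2n$, and in that case the parameter satisfies $d=\Theta(n)$. The forward direction is immediate: a Hamilton cycle $v_1\cdots v_n v_1$ of~$G$ induces the tour $v_1 v_1' v_2 v_2' \cdots v_n v_n' v_1$ in $H$ of weight exactly~$2n$, where the $n$ weight-$1$ deadheading edges are precisely the $G$-edges $\{v_i,v_{i+1}\}$ realised as $\{v_i',v_{i+1}\}$. The converse uses a short structural argument: any \sol{} traverses each of the $n$ required edges (weight at least~$n$) and at least $n$ further edges to close the walk, so weight~$2n$ forces exactly $n$ deadheading edges, each of weight~$1$ and hence of the form $\{v',u\}$ with $\{v,u\}\in E(G)$; contracting the required edges turns the tour into a closed walk on the $n$ vertices of~$G$ using these $n$ edges, which must be a Hamilton cycle.

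For WK[1]-membership, I would give a polynomial parameter transformation from RPP to a canonical WK[1] problem (for instance, CNF-SAT parameterized by the number of variables), exploiting that an optimal \sol{} with a minimum number of edges contains only $O(d)$ deadheading edges; its combinatorial skeleton — the cyclic order in which the connected components of $R$ are visited, together with the $O(d)$ entry and exit vertices and inter-component shortcut paths — is describable in few bits relative to the parameter, so one can reduce to the canonical problem of checking whether such a skeleton can be realised in the input graph. The main obstacle is the converse direction of the hardness equivalence: one has to rule out degenerate weight-$2n$ tours, such as tours that traverse a required edge more than once or join two consecutive required edges through a multi-edge deadheading path, and verify that the contracted tour is a single Hamilton cycle of $G$ rather than a disjoint union of shorter cycles. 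A careful edge- and parity-count on the incidence pattern of the $n$ deadheading edges at the contracted vertices should settle this.
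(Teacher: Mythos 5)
Your hardness reduction fails at its very first step: Hamiltonian Cycle parameterized by the number~$n$ of vertices is \emph{not} a WK[1]-hard problem, so nothing follows from PPT-reducing it to RPP. An instance of that problem has size $O(n^2)$, i.e., polynomial in its parameter, so it trivially admits a polynomial problem kernel (the instance itself); if it were WK[1]-hard, then NDTM Halting would inherit a polynomial compression and hence a polynomial kernel, collapsing the polynomial-time hierarchy. The same defect is visible on the output side of your construction: the RPP instances you produce have $d=\Theta(n)$ deadheading edges and total size $\poly(n)=\poly(d)$, so they themselves constitute a polynomial kernel in~$d$ and can never witness a kernelization lower bound. (As a plain NP-hardness reduction your gadget is fine---it is essentially the classical Lenstra--Rinnooy Kan construction, and your parity/connectivity argument for the converse direction goes through---but that is not what is needed here.) The paper avoids exactly this trap by reducing from Multicolored Cycle parameterized by the number~$k$ of colors, which is WK[1]-complete by Hermelin et al.: it places a required cycle on each color class of an arbitrarily large graph, so that $\eG{G'}{R}$ is balanced, the only deadheading cost is the $k$ weight-one edges of a colorful cycle, and the RPP parameter is $O(k)$ while the instance remains as large as the input graph.

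The membership half is also aimed at the wrong target. CNF-SAT parameterized by the number of variables is MK[2]-complete, and the known containment runs WK[1]$\,\subseteq\,$MK[2]; a PPT into CNF-SAT would therefore only place RPP in MK[2], not in WK[1]. Containment must be shown by a PPT to NDTM Halting parameterized by $t\log|\M|$ (or another WK[1]-complete problem), and the bookkeeping you wave away is where the actual work lies: your ``skeleton'' of $O(d)$ entry/exit vertices needs $\Theta(\log n)$ bits per vertex, and $\log n$ is not bounded in the parameter. The paper handles this by hard-coding the graph into the machine's state names (so it enters only through $\log|\M|$), capping edge weights, and treating the case $d\leq\log n$ separately by solving the instance outright with the $O(4^d n^3)$-time algorithm; without such a case distinction and an explicit bound on $t\log|\M|$, your sketch does not establish membership in WK[1].
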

\noindent

\noindent
In contrast to \cref{thm:wk1},
in \cref{sec:psaks},
we show that effective data reduction for RPP
is possible
if one is interested in \((1+\varepsilon)\)-approximations.

\begin{theorem}
  \label{thm:psaks}
  There is an algorithm that,
  given \(\varepsilon>0\)
  and
  an RPP instance~$(G,R,\cost)$
  with $b$~vertices incident to an odd number of edges in~$R$
  and
  whose edges in~$R$ form $c$~connected components,
  reduces~$(G,R,\cost)$ to an RPP instance~$(G',R',\cost')$
  in $O(n^3+|R|)$~time
  such that
  \begin{compactenum}[(i)]
  \item\label{psaks1}
    the number of vertices in~$G'$ is $2b+O(c/\varepsilon)$,
    
  \item\label{psaks2}
    the number of required edges is~$|R'|\leq 4b+O(c/\varepsilon)$,
    
  \item\label{psaks3} the maximum edge weight
    with respect to~$\cost'$ is $O((b+c)/\varepsilon^2)$,
    and
  \item\label{psaks4}
    any $\alpha$-approximate solution
    for~$I'$
    for some \(\alpha\geq 1\)
    can be transformed into
    an $\alpha(1+\varepsilon)$-approximate solution
    for~$I$ in polynomial time.
  \end{compactenum}
\end{theorem}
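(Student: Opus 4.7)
The plan is to replace $(G,R,\cost)$ by a kernel supported on a small set of ``terminal'' vertices --- essentially the $b$ odd-$R$-degree vertices plus one representative per $R$-component --- obtained from the metric closure of $G$, and then to coarsen the auxiliary weights enough to reach the promised $O((b+c)/\varepsilon^2)$ weight bound while losing only a factor $(1+\varepsilon)$.

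First I would carry out an exact contraction. Let $T\subseteq V(G)$ consist of the $b$ odd-$R$-degree vertices together with one chosen representative per $R$-component, and compute all-pairs shortest-path distances $d_G$ by Floyd--Warshall in $O(n^3)$ time. Invoking the structural results of \cref{sec:solstruct}, an optimal \sol{} is $R$ plus an Eulerian extension whose deadheading part is a union of at most $b/2+c-1$ shortest paths with both endpoints in $T$. Accordingly, I build $G'$ by retaining only the required edges of $R$ (with their endpoints) and, for every pair $\{t,t'\}\subseteq T$, adding one auxiliary non-required edge of weight $d_G(t,t')$. A short exchange argument yields that $(G',R,\cost')$ is solution-equivalent to $(G,R,\cost)$ and has $2b+O(c)$ vertices and $4b+O(c)$ required edges, matching \eqref{psaks1} and \eqref{psaks2} up to the $\varepsilon$-dependent component-gadget refinement described below.

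To reach the weight bound in \eqref{psaks3} and the lossy lifting in \eqref{psaks4}, I would then rescale and round. Compute an upper bound $\tilde W$ on $\OPT$ via the folklore $3/2$-approximation; discard every auxiliary edge of weight exceeding $\tilde W$, since no $(1+\varepsilon)$-approximate tour contains a deadheading path that alone exceeds $\OPT$. Round each surviving weight to the nearest multiple of $\tau := \varepsilon\tilde W/(b+c/\varepsilon)$ and rescale by $1/\tau$, producing integer weights bounded by $O((b+c)/\varepsilon^2)$. Because any near-optimal tour uses at most $O(b+c/\varepsilon)$ auxiliary edges, the cumulative rounding error is $O(\tau\cdot(b+c/\varepsilon)) = O(\varepsilon\cdot\OPT)$. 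Pulling back an $\alpha$-approximate kernel solution by expanding each auxiliary edge to its underlying shortest path in $G$ then gives an $\alpha(1+\varepsilon)$-approximate tour for $(G,R,\cost)$.

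The hardest part, I expect, is the fine-grained vertex bound $2b+O(c/\varepsilon)$: simply keeping $T$ and its clique of pairwise distances yields $O(b+c)$ vertices but $\Theta((b+c)^2)$ auxiliary edges of potentially widely varying magnitude, giving neither the $\varepsilon$-dependent vertex count nor the $\varepsilon^{-2}$ weight bound automatically. I would aim for a per-component gadget that represents the relevant row of the rounded distance matrix using only $O(1/\varepsilon)$ auxiliary vertices --- for example, by replacing the $c\times c$ inter-component shortest-path clique with a sparsified ``bucket'' graph in which each component attaches to $O(1/\varepsilon)$ intermediate nodes, one per distance scale of $\tau$. A secondary obstacle is verifying that the solution-lifting step actually produces a closed walk covering $R$: the pairing of deadheading endpoints in any kernel tour must be realizable in $G$ without reusing required edges in a way that breaks parity, which should follow from the same parity argument that justifies restricting deadheading paths to $T$ in the first place, but requires careful bookkeeping.
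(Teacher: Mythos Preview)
Your exact contraction step has a genuine gap. You claim that retaining the required edges of~$R$ together with a clique of shortest-path edges on~$T$ yields a graph with $2b+O(c)$ vertices, but $R$ is unchanged and its endpoint set~$V(R)$ can be arbitrarily large compared to~$b+c$. The heart of the theorem is reducing $|V(R)|$ itself, not merely restricting where deadheading edges may start and end; your proposal never modifies~$R$, so the asserted vertex bound does not follow.

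Moreover, the structural claim that an optimal Eulerian extension uses only endpoints in~$T$ (the odd-degree vertices plus one \emph{fixed} representative per component) is not correct. By \cref{lem:2C-2}, an edge-minimizing extension touches up to $2c-2$ \emph{balanced} vertices of~$\eG{G}{R}$, but these can be any balanced vertices, not the representatives you chose in advance. Rerouting them to your representatives already incurs error, so the contraction you describe is not exact, and the later ``bucket'' gadget is aimed at the wrong object (sparsifying the auxiliary edge set rather than $V(R)$).

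The paper's route is quite different. It actually rewrites~$R$: for each component~$C_i$ it fixes an inclusion-maximal set~$B_i$ of vertices at pairwise distance exceeding $\gamma=\Theta(\varepsilon\,\cost(R)/c)$, and then \emph{extracts} every balanced vertex outside $B=\bigcup_i B_i$ by deleting its incident required edges and restoring the neighbors' parities with a cheap matching (\cref{def:extr}, \cref{rrule:ballred}). A packing argument bounds $|B|$ by $O(c/\varepsilon)$, and a block-cut-tree count bounds the surviving balanced cut vertices, giving $|V(R')|\leq 2b+O(c/\varepsilon)$ directly (\cref{lem:ballred}\eqref{ballred4}). Any edge-minimizing extension for the original instance is rerouted to the reduced vertex set at cost $O(\gamma c)=O(\varepsilon\,\cost(R))$, precisely because each extracted vertex has a $B$-representative within distance~$\gamma$ and only $2c-2$ balanced vertices need rerouting (\cref{lem:ballred}\eqref{ballred3}). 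After that, removing redundant required cycles (\cref{rrule:delcyc}) bounds~$|R'|$, and the generic lossy weight-rounding of \cref{lem:looseweight} gives the $O((b+c)/\varepsilon^2)$ weight bound. The $c/\varepsilon$ in the vertex count thus comes from the packing radius~$\gamma$, not from any per-component distance-bucket gadget.
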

\noindent
Finally,
in \cref{sec:exp},
we experimentally evaluate
the data reduction algorithm from \cref{thm:psaks}.

\paragraph{Discussion of our results.}
\cref{thm:wk1,thm:psaks} complement each other
since the number $|W^*|-|R|$ of deadheading arcs
is at least $\max\{b/2,c\}$
(see \cref{sec:bounds}).
Thus,
\cref{thm:wk1} shows that
it is hard to polynomial\hyp time reduce RPP to instances
of size $\poly(b+c)$ without
loss in the solution quality,
whereas \cref{thm:psaks} allows to do so
with arbitrarily small loss.
In \cref{sec:psaks-c},
we will also discuss difficulties
of getting rid of~$b$ or~$c$
in the size of the reduced instance~$I'$.

Notably,
the \(\alpha\)-approximate solution
for~$I'$ in \cref{thm:psaks}
may be obtained by any means,
for example exact algorithms or heuristics.
Thus, \cref{thm:psaks}
can be used to speed up expensive heuristics
without much loss in the solution quality.
In terms of
the recently introduced
concept of lossy kernelization
\citep{LPRS17},
\cref{thm:psaks}
yields a \emph{polynomial\hyp size approximate
  kernelization scheme (PSAKS)}.

In experiments,
on instances with few connected components,
the number of vertices and required edges is reduced
to about $50\,\%$
at an $1\,\%$ loss in the solution quality.
On real\hyp world snow plowing data from Berlin,
the number of vertices is reduced to
about $20\,\%$
without loss in the solution quality.

\subsection{Related work}
\paragraph{Classical complexity.}
RPP is strongly NP-hard \citep{LR76,Fre77},
its special case
with $R=E$ is the
polynomial\hyp time solvable
Chinese Postman problem
\citep{Chr73,EJ73}.
Containing the metric Traveling Salesman Problem
as a special case,
RPP is APX-hard \citep{KLS15}.
There is a folklore
polynomial\hyp time 3/2-approximation
based on the \citeauthor{Chr76}-\citeauthor{Ser78}
algorithm \citep{Chr76,Ser78,BSxxb} for the metric Traveling Salesman Problem 
(we refer to arc routing surveys \citep{EGL95,BNSW15}
for a detailed algorithmic description).
The Chinese Postman Problem
is equivalent to finding a minimum\hyp weight set of edges
whose addition makes a connected graph Eulerian \citep{EJ73,Chr73,Ser74}.
For a disconnected graph,
this is exactly RPP \citep{DMNW13,SBNW12,BNSW15}.

\paragraph{Parameterized complexity.}
\looseness=-1
\citet{DMNW13} showed an $O(4^d\cdot n^3)$-time
algorithm for the directed RPP,
where $d=|W^*|-|R|$~is the minimum number
of deadheading arcs
in an optimal solution~$W^*$.
It can be easily adapted to the undirected RPP.
\citet{SBNW11} showed an $O(4^{c\log b^2}\poly(n))$-time
algorithm for the directed RPP,
where
$c$~is the number of (weakly) connected components
induced by the required arcs in~$R$ and
$b=\sum_{v\in V}|\indeg(v)-\outdeg(v)|$.
It~is not obvious whether this algorithm
can be adapted to the undirected RPP
maintaining its running time.
\citet{GWY17} showed
a randomized algorithm
that solves the directed and undirected
RPP in $f(c)\poly(n)$~time
if edge weights are bounded polynomially in~$n$.
It is based on the Schwartz-Zippel Lemma \citep{Zip79,Sch80}
for randomized polynomial identity testing.
The existence of a deterministic algorithm
with this running time is open
\citep{SBNW12,BNSW15,GWY17}.

\paragraph{Exact kernelization.}
RPP can easily be reduced
to an equivalent instance with $2|R|$~vertices
\citep{BNSW15}.
By shrinking the weights using a theorem of \citet{FT87}
one gets a so\hyp called \emph{problem kernel}
of size polynomial
in the number of required edges
(we refer to \citet[Section~5.3]{BBF+xx} for details).
In contrast,
\citet{SBNW11} showed that,
unless the polynomial\hyp time
hierarchy collapses,
the directed RPP has no problem kernel
of size polynomial in the
number of deadheading arcs.
This result is strengthened by our \cref{thm:wk1},
which shows even WK[1]-hardness,
also of the directed RPP.

\paragraph{Lossy kernelization.}
\looseness=-1
Due to the kernelization hardness of many problems,
recently the concept of approximate kernelization
has gained increased interest \citep{FKRS18,LPRS17}.
In this context,
\citet{EHR17} called for finding
connectivity\hyp constrained problems
that do not have polynomial\hyp size kernels
but \(\alpha\)-approximate polynomial\hyp size kernels.
Our \cref{thm:wk1,thm:psaks} exhibit
that RPP is such a problem.
Among the so far few known lossy kernels
\citep{KMRT16,EHR17,EKM+18,KMR18,LPRS17},
our \cref{thm:psaks}
stands out since it shows a
time \emph{and} size efficient PSAKS,
which is a property previously observed
only in results of \citet{KMR18}.
Moreover,
\cref{thm:psaks}
is apparently the first lossy kernelization result
for parameters above lower bounds,
which previously got attention in exact kernelization.

\section{Preliminaries}
\label{sec:prelims}
\subsection{Set and graph theory}

\paragraph{Sets and multisets.}
By \(\N\) we denote the set of natural numbers
including zero.
For two multisets~$A$ and~$B$,
$A\uplus B$ is the multiset
obtained
by adding the multiplicities
of elements in~$A$ and~$B$.
By $A\setminus B$ we denote the multiset
obtained by subtracting
the multiplicities of elements in~$B$
from the multiplicities of elements in~$A$.
Finally,
given some weight function~\(\cost\colon A\to\N\),
the \emph{weight} of a multiset~$A$
is $\cost(A):=\sum_{e\in A}\mathbb \nu(e)\cost(e)$,
where $\mathbb \nu(e)$~is
the multiplicity of~$e$ in~$A$.

\paragraph{Graph theory.}
Graphs in our work
are allowed to have loops and parallel edges,
so that they are actually
\emph{multigraphs}~$G=(V,E)$
with a set~$V(G):=V$ of \emph{vertices},
a multiset~$E(G):=E$ over $\{\{u,v\}\mid u,v\in V\}$
of \emph{(undirected) edges},
and \emph{edge weights}~$\cost\colon E\to\N$.
Parallel edges in our graphs are indistinguishable
from each other and all have the same weight.
For a multiset~$R$ of edges,
we denote by~$V(R)$ the set of their incident vertices.

\paragraph{Paths and cycles.}
\looseness=-1
A \emph{walk from~$v_0$ to~$v_\ell$ in~$G$} is a
sequence~$w=(v_0,e_1,v_1,e_2,v_2,\dots,\allowbreak
e_\ell,v_\ell)$
such that $e_i$~is an edge
with end points~$v_{i-1}$ and~$v_{i}$
for each~$i\in\{1,\dots,\ell\}$.
If $v_0=v_\ell$,
then we call~$w$ a \emph{closed walk}.
If all vertices on~$w$ are pairwise distinct,
then $w$~is a \emph{path}.
If only its first and last vertex coincide,
then $w$~is a \emph{cycle}.
By $E(w)$~we denote the multiset of edges on~$w$,
that is,
each edge appears on~$w$ and in~$E(w)$ equally often.
The \emph{length} of walk~$w$
is its number $|w|:=\ell=|E(w)|$ of edges.
The \emph{weight} of walk~$w$
is $\cost(w):=\sum_{i=1}^\ell\cost(e_\ell)$.
An \emph{Euler tour for~$G$}
is a closed walk that traverses
each edge of~$G$
exactly as often as it is present in~$G$.
A graph is \emph{Eulerian}
if it allows for an Euler tour.

\paragraph{Connectivity and blocks.}
\looseness=-1
Two vertices~$u,v$ of~$G$ are \emph{connected} if there is a path from~$u$ to~$v$ in~$G$.
A \emph{connected component} of~$G$
is a maximal subgraph of~$G$
in which the vertices are mutually connected.
A vertex~$v$ of~$G$ is a
\emph{cut vertex}
if removing~$v$
and its incident edges
increases the number of
connected components of~$G$.
A~\emph{biconnected component}
or \emph{block} of~$G$
is a maximal subgraph
without cut vertices.

\paragraph{Edge- and vertex\hyp induced subgraphs.}
For a subset~$U\subseteq V$ of vertices,
\emph{the subgraph~$G[U]$ of~$G=(V,E)$ induced by~$U$}
consists of the vertices of~$U$
and all edges of~$G$ between them
(respecting multiplicities).
For a multiset~$R$ of edges of~$G$,
$\eG G R:=(V(R),R)$
is the graph
\emph{induced by the edges in~$R$}.
For a walk~$w$,
we also denote $\eG{G}{w}:=\eG{G}{E(w)}$.
Note that $\eG{G}{R}$ and~$\eG{G}{w}$
do not contain isolated vertices yet
might contain
edges with a higher multiplicity than~$G$ and,
therefore,
are not necessarily sub(multi)graphs of~$G$.

\subsection{Decision problems, optimization problems, approximation}

\begin{definition}
  A \emph{decision problem} is a subset~$\Pi\subseteq\Sigma^*$
  for some finite alphabet~$\Sigma$.
  The task is,
  given an \emph{instance}~$x\in\Sigma^*$,
  determining whether $x\in \Pi$.
  If $x\in\Pi$,
  then $x$~is a \emph{yes\hyp instance}.
  Otherwise,
  it is a \emph{no\hyp instance}.
\end{definition}

\noindent
For optimization problems,
we use the terminology of \citet{GJ79}.
We will only consider \emph{minimization problems} in our work.
\begin{definition}
  An \emph{combinatorial optimization problem}~$\Pi$ is a triple~$\Pi=(D_\Pi,S_\Pi,m_\Pi)$,
  where
  \begin{compactenum}
  \item $D_\Pi$ is a set of \emph{instances},
  \item $S_\Pi$ is a function
    assigning to each instance $I\in D_\Pi$
    a finite set~$S_\Pi(I)$ of
    \emph{(feasible) solutions}, and
  \item $m_\Pi$ is a function
    that assigns a \emph{solution cost}~$m_\Pi(I,\sigma)$
    to each feasible solution~$\sigma\in S_\Pi(I)$
    of an instance~$I\in D_\Pi$.
  \end{compactenum}
  An \emph{optimal solution}
  for an instance~$I\in D_\Pi$
  is a feasible solution~$\sigma\in S_\Pi(I)$
  minimizing~$m_\Pi(I,\sigma)$.
  Its cost is denoted as $\OPT_\Pi(I)$,
  where we drop the subscript~$\Pi$
  when the optimization problem is clear from context.
\end{definition}

\noindent
If the set~$S_\Pi(I)$ for any instance~$I\in D_\Pi$
is polynomial\hyp time recognizable,
any solution~$\sigma\in S_\Pi(I)$ has size~$|\sigma|\leq\poly(|I|)$,
and the function~$m_\Pi$ is polynomial\hyp time computable,
then one also calls~$\Pi$ an \emph{NP-optimization problem}.

\begin{definition}
  An \emph{$\alpha$\hyp approximate solution} for an instance~$I$
  of a combinatorial optimization problem~$\Pi$
  is a feasible solution of cost at most $\alpha\cdot\OPT_\Pi(I)$.
\end{definition}

\subsection{Kernelization}

\looseness=-1
Kernelization is the main notion
of data reduction with provable performance guarantees
\citep{FLSZ19}.
Since proving that a polynomial\hyp time algorithm
always shrinks the input instance of an NP-hard problem,
say from size~$n$ to $n-1$, would imply P${}={}$NP,
the size of the reduced instance is measured
in dependence of a
\emph{parameter} of the input instance.

We formalize parameters using the terminology of \citet{FG06},
since it allows to parameterize decision and optimization problems
in a uniform way:

\begin{definition}
  A \emph{parameterization} is a
  polynomial\hyp time computable
  mapping~$\kappa\colon\Sigma^*\to\N$
  of instances (of decision or optimization problems)
  to a \emph{parameter}.
  For a (decision or optimization) problem~$\Pi$
  and parameterization~$\kappa$,
  $(\Pi,\kappa)$ is called
  a \emph{parameterized (decision or optimization) problem}.
\end{definition}

\begin{definition}
 \label{def:compression}
 A~\emph{kernelization} for a parameterized decision problem~$(\Pi,\kappa)$
 is a polynomial\hyp time algorithm
 that maps any instance~$x\in\Sigma^*$
 to an instance~$x'\in\Sigma^*$
 such that
 \begin{compactenum}[(i)]
  \item $x\in \Pi \iff x'\in \Pi$, and
  \item $|x'|\leq g(\kappa(x))$ for some computable function~\(g\).
  \end{compactenum}
  We call \(x'\) the \emph{problem kernel}
  and \(g\) its \emph{size}.
\end{definition}

\noindent
A generalization of problem kernels
are \emph{Turing kernels},
where one is allowed to generate multiple reduced
instances instead of a single one.

\begin{definition}
  A~\emph{Turing kernelization} for
  a parameterized decision problem~$(\Pi,\kappa)$
  is an algorithm~$A$ that decides~$x\in \Pi$
  in polynomial time
  given access to an oracle
  that answers $x'\in \Pi$ in constant time
  for any \(x'\in\Sigma^*\) with
  $|x'|\leq g(\kappa(x))$,
  where $g$~is an arbitrary function
  called the \emph{size} of the Turing kernel.
\end{definition}

\noindent
Since \cref{thm:wk1} means that
it is hard to obtain problem kernels for RPP
even with size polynomial in a relatively large parameter,
we will consider approximate problem kernels
\citep{LPRS17}:

\begin{definition}
  A \emph{$\beta$-approximate kernelization}
  for a parameterized optimization problem~$(\Pi,\kappa)$
  consists of two polynomial\hyp time algorithms:
  \begin{compactenum}[(i)]
  \item The first algorithm reduces an instance~$I$
    of~$\Pi$ to an instance~$I'$ of~$\Pi$
    such that $|I'|\leq g(\kappa(I))$
    for some computable function~$g\colon\N\to\N$.
  \item The second algorithm turns any
    \(\alpha\)\hyp approximate solution for~$I'$
    into an \(\alpha\beta\)\hyp
    approximate
    solution for~$I$.
  \end{compactenum}
  We call $g$ the \emph{size} of the \emph{approximate kernel}~$I'$.
\end{definition}

\noindent
In fact, we show polynomial\hyp size
approximate kernelization schemes
\citep{LPRS17}:

\begin{definition}
  A \emph{polynomial\hyp size approximate
    kernelization scheme (PSAKS)}
is a family of
  $(1+\varepsilon)$-approximate kernelizations
  yielding approximate kernels of polynomial size for every fixed~$\varepsilon>0$.
\end{definition}

\subsubsection{Kernelization hardness}
\emph{WK[1]-complete} parameterized decision problems
do not have problem kernels of polynomial size
unless the polynomial\hyp time hierarchy collapses
and are conjectured not to have
Turing kernels of polynomial size either~\citep{HKS+15b}.
An archetypal WK[1]-complete
problem is the following~\citep{HKS+15b}:
\begin{problem}[NDTM Halting]\leavevmode
  \label{prob:ndtm}
  \begin{compactdesc}
  \item[Instance:] A nondeterministic Turing machine~$\M$ and an integer~$t$.
  \item[Parameter:] $t\log |\M|$.
  \item[Question:] Does~$\M$ halt in~$t$
    steps on the empty input string?
  \end{compactdesc}
\end{problem}

\noindent
The class WK[1] can now be defined
as the class of all parameterized problems
reducible to NDTM Halting using
the following type of reduction.

\begin{definition}
  A \emph{polynomial parameter transformation (PPT)}
  of a parameterized decision problem~$(\Pi,\kappa)$
  into a parameterized decision problem~$(\Pi',\kappa')$
  is an algorithm that
  maps any instance~$x\in\Sigma^*$
  to an instance~$x'\in\Sigma^*$
  in polynomial time
  so that
  \begin{compactenum}[(i)]
  \item $x\in \Pi\iff x'\in \Pi'$ and
  \item $\kappa'(x')\in \poly(\kappa(x))$.
 \end{compactenum}
\end{definition}
\begin{definition}
  WK[1] is the class of parameterized decision problems
  PPT-reducible to NDTM Halting.
  A~parameterized decision problem~$(\Pi,\kappa)$ is \emph{WK[1]-hard}
  if every parameterized decision problem in WK[1]
  is PPT-reducible to~$(\Pi,\kappa)$.
  It is \emph{WK[1]-complete}
  if it is WK[1]-hard and contained in WK[1].
\end{definition}
\noindent
Notably,
since PPT-reducibility is a transitive relation,
to prove WK[1]-hardness of a parameterized decision problem,
it is enough to PPT-reduce any other WK[1]-hard parameterized decision
problem
to it.

\subsection{Approximate weight reduction}

We will use the following lemma
to shrink edge weights
so that their encoding length will be
polynomial in the number of vertices and edges of the graph.
It is a generalization
of an idea implicitly
used for weight reduction
in a proof of \citet[Theorem~4.2]{LPRS17}
and shrinks weights faster
and more significantly
than a theorem of \citet{FT87}
that is frequently used
in the exact kernelization of weighted
problems \citep{BFT20,EKMR17,MV15,BBF+xx}.
We first state the lemma,
and thereafter intuitively describe its application to RPP.

\begin{lemma}[lossy weight reduction]
  \label{lem:looseweight}
  Let $\mathcal F\subseteq\Q^n_{\geq 0}$ and $\cost\in\mathbb \Q_{\geq 0}^n$ be
such that
  \begin{compactitem}
  \item \(\|\cost\|_{\infty}\leq\pWR{}\) for some \(\pWR{}\in \Q\) and
  \item \(\|x\|_1\leq N\) for some \(N\in \N\) and all $x\in\mathcal F$.
  \end{compactitem}
  Let
  \(x^*\in\argmin\{\cost^\top x\mid x\in\mathcal F\}\)
  and
  \(\bar x^*\in\argmin\{\bar \cost^\top x\mid x\in\mathcal F\}\).
  Then, for any~$\varepsilon>0$,
  using $O(n)$ arithmetic operations involving only the numbers~$\varepsilon$,
  $\beta$, and the components of $\cost$,
  one can compute $\bar \cost\in \N^n$ such that
  \begin{compactenum}[(i)]
  \item\label{wr1} \(\|\bar \cost\|_\infty\leq N/\varepsilon\) and
  \item\label{wr2} for any \(\apxfakt{}\in\Q\) and
 \(x\in\mathcal F\) with
    \(\bar \cost^\top x\leq \apxfakt{}\cdot \bar \cost^\top\bar x^*\),
    one has
    \(
    		\cost^\top x\leq \apxfakt{}\cdot \cost^\top x^*+\varepsilon\pWR{}.
    \)
\end{compactenum}
\end{lemma}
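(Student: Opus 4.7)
The plan is to prove the lemma by the standard scale-and-round trick. Set $s := \varepsilon\pWR/N$ and define the rounded weights componentwise by
\[
  \bar\cost_i := \bigl\lfloor \cost_i / s \bigr\rfloor = \bigl\lfloor \cost_i \cdot N/(\varepsilon\pWR) \bigr\rfloor.
\]
Each $\bar\cost_i\in\N$, each is computed by one arithmetic operation in $\cost_i,\varepsilon,\pWR{}$ (together with the known $N$), so this takes $O(n)$ arithmetic operations in total. Part~\eqref{wr1} follows because $\cost_i\leq\|\cost\|_\infty\leq\pWR{}$ implies $\bar\cost_i\leq N/\varepsilon$.

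For part~\eqref{wr2} I would first record the elementary two\hyp sided bound that the floor gives, namely $s\bar\cost_i \leq \cost_i < s\bar\cost_i + s$ for every~$i$. Multiplying by $x_i\geq 0$ and summing over~$i$ yields, for every $x\in\mathcal F$,
\[
  s\,\bar\cost^\top x \;\leq\; \cost^\top x \;\leq\; s\,\bar\cost^\top x + s\|x\|_1 \;\leq\; s\,\bar\cost^\top x + sN \;=\; s\,\bar\cost^\top x + \varepsilon\pWR{},
\]
using $\|x\|_1\leq N$ in the final inequality on the right.

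Now suppose $x\in\mathcal F$ satisfies $\bar\cost^\top x\leq \apxfakt{}\,\bar\cost^\top \bar x^*$. Because $\bar x^*$~minimizes $\bar\cost^\top y$ over~$\mathcal F$, we have $\bar\cost^\top \bar x^*\leq \bar\cost^\top x^*$. Chaining these together with the two\hyp sided rounding bound yields
\[
  \cost^\top x \;\leq\; s\,\bar\cost^\top x + \varepsilon\pWR{} \;\leq\; \apxfakt{}\,s\,\bar\cost^\top \bar x^* + \varepsilon\pWR{} \;\leq\; \apxfakt{}\,s\,\bar\cost^\top x^* + \varepsilon\pWR{} \;\leq\; \apxfakt{}\,\cost^\top x^* + \varepsilon\pWR{},
\]
where the last step applies $s\,\bar\cost^\top x^*\leq \cost^\top x^*$ together with $\apxfakt{}\geq 0$. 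This gives~\eqref{wr2}.

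The argument is essentially routine; the only minor subtlety is making sure the rounding direction is consistent with the two inequalities one wants in~\eqref{wr2} (one bounding $\cost^\top x$ from above by a scaled $\bar\cost^\top x$, the other bounding $s\,\bar\cost^\top x^*$ from above by $\cost^\top x^*$), which both hold simultaneously precisely because floor\hyp rounding gives $s\bar\cost_i\leq\cost_i\leq s\bar\cost_i+s$. No substantial obstacle is expected.
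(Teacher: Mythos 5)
Your proof is correct and follows essentially the same route as the paper: the identical scaling factor $\varepsilon\pWR{}/N$ with floor rounding, the two-sided bound $s\bar\cost_i\leq\cost_i\leq s\bar\cost_i+s$, and the same chain of inequalities through $\bar\cost^\top\bar x^*\leq\bar\cost^\top x^*$. (Both your argument and the paper's implicitly use $\apxfakt{}\geq 0$ in the final step, so this is no point of divergence.)
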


\noindent
Note that one can easily prove a version of \cref{lem:looseweight}
for maximization problems.
To apply \cref{lem:looseweight} to RPP,
we will take $\mathcal F$~to be the set
of inclusion\hyp minimal \sol{}s
(encoded as vectors~$x\in\mathcal F$
having an entry for each edge
that specifies how often it is included in the \sol{}),
and $\cost$~to be a vector
having an entry for each edge specifying its weight.
Then the linear forms $\cost^\top x$ and $\bar\cost^\top x$
give the weight of the \sol{} encoded by~$x$
with respect to the initial weights~$\cost$
and the reduced weights~$\bar\cost$,
respectively.
The linear forms occurring in the lemma
seem to limit it to problems with
linear or additive goal functions,
yet in fact are powerful enough
to model many non\hyp additive goal functions
as well
\citep{BBF+xx}.

\begin{proof}[Proof of \cref{lem:looseweight}]
  Choose $M=(\varepsilon \pWR{})/N$
  and
  $\bar{\cost}_i = \lfloor \cost_i/M \rfloor$
  for each~$i\in\{1,\ldots,n\}$.
  Since $\cost_i\geq 0$ for each $i\in\{1,\ldots,n\}$,
  we have $\bar{\cost}\in \N^n$.
  Moreover,
  due to $\|\cost\|_\infty \leq \pWR{}$,
  for each~$i\in\{1,\ldots,n\}$,
  we have~$\bar{\cost}_i \leq \pWR{}/M = N/\varepsilon$,
  proving \eqref{wr1}.
  
  To prove \eqref{wr2},
  let $x\in \mathcal F$~be
  such that $\bar \cost^\top x\leq \apxfakt{}\cdot \bar \cost^\top\bar x^*$.
  By the choice of $\bar{\cost}$
  for each~$i\in\{1,\ldots,n\}$,
  we have~$\cost_i\leq M\cdot (\bar{\cost}_i+1)$.
  Moreover, we have
  \begin{align}\label{eq-w-shrinked}
    M \cdot \bar{\cost}^\top x&\leq \cost^\top x \qquad \text{ for all } x\in \mathcal F.
  \end{align} 
  It follows that
  \(\cost^\top x  \leq M\cdot (\bar{\cost}^\top x + \|x\|_1) 
    \leq M\cdot \bar{\cost}^\top x + \varepsilon\pWR{} 
    \leq \apxfakt{}\cdot M\cdot \bar{\cost}^\top \bar x^* + \varepsilon\pWR{}.\)
    By (\ref{eq-w-shrinked})
  and the choice of $\bar{x}^*$,
  we have
  \(\bar{\cost}^\top \bar{x}^* \le \bar{\cost}^\top x^* \le \cost^\top x^*/M.\)
  Finally,
  $\cost^\top x \le \apxfakt{}\cdot M\cdot \bar{\cost}^\top \bar x^* + \varepsilon\pWR{} \le \apxfakt{}
  \cdot \cost^\top x^* + \varepsilon\pWR{}.$
\end{proof}

\section{Solution structure}
\label{sec:solstruct}
In this section,
we prove fundamental properties
of optimal \sol{}s.
To make these hold,
we first establish the triangle inequality in \cref{sec:metric}.
In \cref{sec:eeprop},
we translate RPP to the problem
of finding \emph{Eulerian extensions}.
In \cref{sec:bounds},
we derive inequalities
to bound parts of
optimal \sol{}s.

\subsection{Triangle inequality}
\label{sec:metric}
Without loss of generality,
we will assume that the weight function
satisfies the triangle inequality:

\begin{proposition}[\cite{BHNS14}]
  \label{lem:triangle}
  \looseness=-1
  In $O(n^3)$~time,
  an RPP instance $(G,R,\cost)$
  can be turned into an RPP instance $(G',R,\cost')$
  such that
\begin{compactenum}
  \item $G'$~is a complete graph,
  \item $\cost'$~satisfies the triangle inequality, and
  \item any \(\alpha\)-approximate solution
  for~$(G,R,\cost')$
  can be turned into an
  \(\alpha\)-approximate solution for~$(G,R,\cost)$
  in polynomial time. 
  \end{compactenum}
\end{proposition}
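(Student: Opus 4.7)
My plan is the classical metric-closure construction. Using Floyd--Warshall in $O(n^3)$~time, I would compute the shortest-path distances $\dist_G(u,v)$ between all pairs of vertices of~$G$. Let~$G'$ be the complete graph on~$V(G)$, set $\cost'(\{u,v\}):=\dist_G(u,v)$ for every pair, and keep~$R$ as a multiset of edges of~$G'$. Properties~(1) and~(2) are then immediate: $G'$ is complete by construction, and shortest-path distances satisfy the triangle inequality.

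For~(3), given an $\alpha$-approximate \sol{}~$W'$ for $(G',R,\cost')$, I would lift~$W'$ to a walk~$W$ in~$G$ as follows. For each required edge $e=\{u,v\}\in R$ of multiplicity~$\mu_e$ in~$R$, designate~$\mu_e$ of its traversals in~$W'$ (which exist by feasibility of~$W'$) as ``required'' and lift each such traversal to the actual edge~$e$ of~$G$; every other transition $\{u,v\}$ in~$W'$ is lifted to a shortest $u$-$v$-path in~$G$. Since replacing an edge in a closed walk by a path with the same endpoints preserves closedness and changes the degree of each internal vertex only by two, $W$ is a single closed walk in~$G$ containing~$R$ with the right multiplicities. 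A direct bookkeeping gives
\begin{equation*}
  \cost(W)=\cost'(W')+\cost(R)-\cost'(R),
\end{equation*}
because the designated required traversals are the only transitions whose weight changes, namely from $\cost'(\{u,v\})=\dist_G(u,v)$ to~$\cost(e)$.

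The crucial remaining step is the identity $\OPT_G=\OPT_{G'}+\cost(R)-\cost'(R)$. I would prove it by showing that the minimum weight of a ``deadheading'' multiset~$X$ making $R\uplus X$ connected and Eulerian is the same whether measured by~$\cost$ in~$G$ or by~$\cost'$ in~$G'$: an optimal deadheading in~$G'$ lifts to~$G$ via shortest paths at the same $\cost$-weight by the same parity-preservation argument as above, while any deadheading in~$G$ has $\cost'$-weight at most its $\cost$-weight. Given this identity, and using $\cost(R)\geq\cost'(R)$ together with $\alpha\geq 1$, we obtain $\cost(W)\leq\alpha\OPT_{G'}+(\cost(R)-\cost'(R))\leq\alpha\OPT_{G'}+\alpha(\cost(R)-\cost'(R))=\alpha\OPT_G$, as required. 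I anticipate the main obstacle to be carrying out the parity-and-connectivity bookkeeping carefully enough to justify, in both directions, that lifting an edge to a shortest path preserves the property of $R\uplus X$ being traversable by a single closed walk.
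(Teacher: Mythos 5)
Your proposal is correct and follows the same route as the proof the paper relies on (the metric-closure construction cited from \cite{BHNS14}): replace the weights by shortest-path distances, lift deadheading traversals back to shortest paths and required traversals to the original edges, and absorb the additive difference $\cost(R)-\cost'(R)$ into the bound via the easy direction $\OPT_{G'}+\cost(R)-\cost'(R)\leq\OPT_G$ together with $\alpha\geq 1$. The only detail worth adding is the routine handling of required loops (which a loop-free ``complete graph'' does not contain and which must be kept with their original weights), but this does not affect the argument.
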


\begin{remark}
  \label{rem:metric-rpp}
  \looseness=-1
  \cref{lem:triangle} holds
  in particular for \(\alpha=1\)
  and
  does not increase
the number of connected components of~$\eG{G}{R}$,
  the number of odd\hyp degree vertices of~$\eG{G}{R}$,
  the number and weight of deadheading edges
  of an optimal \sol{}.
  Thus,
  it is sufficient to prove
  \cref{thm:wk1,thm:psaks}
  for RPP with triangle inequality.
  We will henceforth assume that the input graph is complete
  and satisfies the triangle inequality.
\end{remark}

\subsection{Edge-minimizing Eulerian extensions}
\label{sec:eeprop}
\noindent
\looseness=-1
Consider any \sol{}~$W$
for an RPP instance~$(G,R,\cost)$.
Then $\eG{G}{W}$~is an Eulerian
supergraph of~$\eG{G}{R}$
whose total edge weight is~\(\cost(W)\).
Moreover,
any Eulerian supergraph~$\eG{G}{W'}$
of~$\eG{G}{R}$
yields an \sol{} for~$(G,R,\cost)$
of total weight~\(\cost(W')\).
Thus,
\sol{}s one-to-one
correspond to \emph{\EE{}s}
\citep{DMNW13}:\footnote{Before, this correspondence was
  observed for the Chinese Postman Problem
  independently by \citet{Chr73,EJ73}, and \citet{Ser74}.}

\begin{definition}An \emph{\EE{}} (EE)
  for an RPP instance~$(G,R,\cost)$
  is a multiset~$S$
  of edges such that
  \(\eG{G}{R\uplus S}\) is Eulerian.
  We say that an \EE{}~$S$ is
  \emph{\emi{}}
  if there is no \EE{}~$S'$
  with \(|S'|<|S|\) and \(\cost(S')\leq \cost(S)\).
\end{definition}

\noindent
In the following,
we will concentrate on finding minimum\hyp weight
\EE{}s rather than \sol{}s
and exploit that a graph without
isolated vertices is Eulerian
if and only if it is connected and \emph{balanced}:
\begin{definition}
  A vertex is \emph{balanced}
  if it has even degree.
  A graph is \emph{balanced}
  if each of its vertices is balanced.
\end{definition}

\noindent
Thus,
solving RPP reduces to finding
a minimum\hyp weight set~$S$ of edges
such that~$\eG{G}{R\uplus S}$ is connected and balanced.
Since an Euler tour
in the Eulerian graph~$\eG{G}{R\uplus S}$
is computable in linear time
using Hierholzer's algorithm \citep{HW73,Fle91},
we can easily recover an \sol{} from an \EE{}.
\begin{proposition}
  \label{prop:linearlift}
  Let $(G,R,\cost)$~be an RPP instance.

  \begin{compactenum}[(i)]
  \item From any \sol{}~$W$ for~$(G,R,\cost)$,
  one can compute an \EE{}~$S$
  of weight \(\cost(S)=\cost(W)-\cost(R)\)
  in time linear in~$|W|$.
  
\item   From any \EE{}~$S$ for~$(G,R,\cost)$,
  one can compute an \sol{}~$W$
  of weight \(\cost(W)=\cost(R)+\cost(S)\)
  in time
  linear in~$|R|+|S|$.
\end{compactenum}
\end{proposition}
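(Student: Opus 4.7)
The plan is to prove the two parts separately, both of which rest on the observation (already implicit in the discussion preceding the proposition) that, for any \sol{}~$W$, the graph $\eG{G}{E(W)}$ is Eulerian with $W$ itself as an Euler tour, while for any \EE{}~$S$, any Euler tour of $\eG{G}{R\uplus S}$ is an \sol{}. Once this correspondence is set up on the level of multisets of edges, both directions are essentially immediate and the running-time bounds reduce to standard facts.

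For (i), given an \sol{}~$W$, I would set $S:=E(W)\setminus R$. Since $W$ contains each edge of~$R$ with at least its multiplicity in~$R$, this multiset subtraction is well defined and satisfies $R\uplus S=E(W)$. The closed walk~$W$ traverses each edge of~$E(W)$ exactly as often as it appears, so $\eG{G}{R\uplus S}=\eG{G}{E(W)}$ is Eulerian and hence $S$ is an \EE{}. The weight identity follows from $\cost(W)=\cost(E(W))=\cost(R)+\cost(S)$. Computing~$S$ amounts to a single scan of~$W$ during which one decrements the multiplicity of each edge still present in~$R$; after an initial preprocessing step storing the multiplicities of~$R$ in a lookup table, this runs in $O(|W|)$ time overall.

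For (ii), given an \EE{}~$S$, the graph $\eG{G}{R\uplus S}$ is Eulerian by definition, and Hierholzer's algorithm (cited in the paragraph preceding the proposition) produces an Euler tour~$W$ for it in time linear in the number of edges, i.e., in $O(|R|+|S|)$. Because an Euler tour traverses each edge of the underlying multiset exactly once per occurrence, $W$ contains every edge of~$R$ with the correct multiplicity and is therefore an \sol{}, with $\cost(W)=\cost(R\uplus S)=\cost(R)+\cost(S)$. The only mild obstacle is bookkeeping: one must consistently distinguish the multiset $E(W)$ of edges traversed by~$W$ (counted with multiplicity) from the underlying set of edges, so that the identity $R\uplus S=E(W)$ holds on the multiplicity level and not merely on the level of underlying edges; the algorithmic ingredients themselves are standard.
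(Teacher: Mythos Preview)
Your proof is correct and follows exactly the approach the paper takes: the paper does not give an explicit proof of this proposition, treating it as immediate from the preceding discussion that \sol{}s and \EE{}s are in one-to-one correspondence and that Hierholzer's algorithm recovers an Euler tour in linear time. Your write-up simply spells out those two steps, so there is nothing to compare.
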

\noindent
Assuming the triangle inequality,
any \sol{}
can be shortcut 
to contain only vertices incident to required edges.
\begin{observation}
  \label{obs:allR}
  Any \emi{} \EE{}~$S$ for an RPP instance~$(G,R,\cost)$
  satisfies $V(S)\subseteq V(R)$.
\end{observation}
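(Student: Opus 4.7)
The plan is to argue by contradiction: suppose there is a vertex $v\in V(S)\setminus V(R)$, and use the triangle inequality together with an Euler tour of $\eG{G}{R\uplus S}$ to build a strictly smaller extension $S'$ of weight at most $\cost(S)$, contradicting that $S$ is edge-minimizing.

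First I would observe that, because $v\notin V(R)$, every edge of $\eG{G}{R\uplus S}$ incident to $v$ belongs to $S$. Since $S$ is an Eulerian extension, $v$ has positive even degree in $\eG{G}{R\uplus S}$, hence at least two $S$-edges are incident to $v$. Fix an Euler tour $T$ of $\eG{G}{R\uplus S}$ (this exists by definition of an EE together with \cref{prop:linearlift}). In $T$, pick any occurrence of $v$ with predecessor $u$ and successor $w$, so that $\{u,v\}$ and $\{v,w\}$ are two (not necessarily distinct) edges of $S$ that are consecutive in $T$.

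Next I would define the candidate extension
\[
S':=\bigl(S\setminus\{\{u,v\},\{v,w\}\}\bigr)\uplus\{\{u,w\}\}.
\]
Two properties are immediate: $|S'|=|S|-1$, and by the triangle inequality (available by \cref{lem:triangle} and \cref{rem:metric-rpp}) one has $\cost(\{u,w\})\leq\cost(\{u,v\})+\cost(\{v,w\})$, so $\cost(S')\leq\cost(S)$. It then remains to check that $S'$ is an EE, that is, that $\eG{G}{R\uplus S'}$ is Eulerian. For this I would exhibit an Euler tour explicitly: replacing the subsequence $(u,\{u,v\},v,\{v,w\},w)$ of $T$ by $(u,\{u,w\},w)$ yields a closed walk $T'$ that traverses every edge of $R\uplus S'$ exactly with its multiplicity, so $\eG{G}{R\uplus S'}$ is connected and balanced, hence Eulerian. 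Together with $|S'|<|S|$ and $\cost(S')\leq\cost(S)$, this contradicts the edge-minimality of $S$.

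The only subtle point I expect is the degenerate case $u=w$, in which $\{u,w\}$ becomes a loop at $u$. This is harmless: a loop contributes $2$ to the degree of its endpoint, which exactly compensates the removal of the two $u$-incident edges $\{u,v\}$ and $\{v,w\}$, so the balance argument above goes through unchanged, and the walk $T'$ is still a valid Euler tour of $\eG{G}{R\uplus S'}$. No other case analysis seems necessary, so the main care is just in book-keeping degrees and in being explicit that the shortcut of the Euler tour certifies both balance and connectivity of the new extension.
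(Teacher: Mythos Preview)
Your proof is correct and follows exactly the shortcutting idea the paper alludes to in the sentence preceding the observation (``any \sol{} can be shortcut to contain only vertices incident to required edges''); the paper treats this as evident and gives no further proof, whereas you spell out the Euler-tour shortcut and the triangle-inequality bound in full. The reference to \cref{prop:linearlift} is unnecessary (Eulerianness of $\eG{G}{R\uplus S}$ already guarantees the Euler tour), but this does not affect correctness.
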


\noindent
The following lemma,
in particular,
shows that no \emi{} \EE{}
contains required edges
between balanced vertices.
\begin{lemma}
  \label{lem:onebal}
  An \emi{} \EE{}~$S$
  for an RPP instance~$(G,\allowbreak
  R,\cost)$
  does not contain any
  edge~\(\{u,v\}\)
  such that $u$ and $v$ belong
  to the same connected component of~\(\eG{G}{R}\)
  and such that \(u\)~is \ba{} in~\(\eG{G}{R}\).
\end{lemma}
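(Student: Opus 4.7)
The plan is to argue by contradiction: suppose an \emi{} \EE{}~$S$ contains an edge $e=\{u,v\}$ with $u,v$ in the same connected component of~$\eG{G}{R}$ and with $u$ balanced in~$\eG{G}{R}$. I will build an \EE{}~$S'$ with $|S'|<|S|$ and $\cost(S')\leq\cost(S)$, contradicting the edge\hyp minimality of~$S$.

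The construction is a local short\hyp cut. Since $u$ is balanced in $\eG{G}{R}$ and, because $S$ is an \EE{}, also balanced in $\eG{G}{R\uplus S}$, the number of edges of~$S$ incident to~$u$ must be even and thus at least~$2$. Hence $S$ contains a second edge $e'=\{u,w\}$, possibly with $w=v$ (i.e., $e'$ a parallel copy of~$e$). Define $S':=(S\setminus\{e,e'\})\uplus\{\{v,w\}\}$ if $v\neq w$, and $S':=S\setminus\{e,e'\}$ if $v=w$.

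I then need to check three things.
\emph{Balance:} removing $e$ and $e'$ from $R\uplus S$ decreases the degree of $u$ by~$2$ and the degrees of $v$ and $w$ each by~$1$; the added edge $\{v,w\}$ restores the missing unit at~$v$ and at~$w$, and if $v=w$ no addition is needed since the degree of $v=w$ simply drops by~$2$; so all vertex parities in $\eG{G}{R\uplus S'}$ agree with those in $\eG{G}{R\uplus S}$.
\emph{Connectivity:} the hypothesis that $u$ and $v$ lie in the same component of~$\eG{G}{R}$ provides a $u$-to-$v$ walk entirely inside $\eG{G}{R}\subseteq\eG{G}{R\uplus S'}$ that bypasses the removal of~$e$; combined with the added edge $\{v,w\}$ (trivially when $v=w$), it also reroutes any walk that used~$e'$, so the component of $\eG{G}{R\uplus S}$ containing $u,v,w$ remains a single component of $\eG{G}{R\uplus S'}$ and no other component is touched.
\emph{Size and weight:} by construction $|S'|\leq|S|-1<|S|$, and the triangle inequality from~\cref{rem:metric-rpp} gives $\cost(\{v,w\})\leq\cost(\{v,u\})+\cost(\{u,w\})$ when $v\neq w$, while no edge is added when $v=w$; either way $\cost(S')\leq\cost(S)$.

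The step I expect to require the most care is the connectivity verification, since deleting two edges from a connected graph can disconnect it and adding a single edge need not repair the cut. This is precisely the place where the ``same component of~$\eG{G}{R}$'' hypothesis is essential: the $u$-$v$ path inside the required edges alone survives every modification and glues the endpoints of the removed edges back together, so no new component is created.
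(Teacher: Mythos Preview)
Your proof is correct and follows essentially the same approach as the paper: assume such an edge~$e=\{u,v\}$ exists, use the balancedness of~$u$ to find a second edge~$e'=\{u,w\}\in S$, and shortcut via $\{v,w\}$, checking balance, connectivity (using the $u$--$v$ path in~$\eG{G}{R}$), and weight via the triangle inequality. The only cosmetic difference is that the paper treats $v=w$ uniformly by allowing $\{v,w\}$ to be a loop, whereas you split off that case and drop both copies without adding anything; both variants work.
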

\begin{proof}
  \looseness=-1
  Towards a contradiction,
  assume that \(\{u,v\}\in S\).
Since $u$ is
  \ba{} in~$\eG{G}{R}$ and~$\eG{G}{R\uplus S}$,
  \(S\)~additionally contains
  an edge~$\{u,w\}$
  (possibly, $v=w$).
Then
  $(S'\setminus\{\{u,v\},\{u,w\}\})\uplus\{\{v,w\}\}$
  satisfies \(|S'|<|S|\)
  and also is an \EE{}:
  the balance of $u$, $v$ and $w$
  is the same in $\eG{G}{R\uplus S}$ and $\eG{G}{R\uplus S'}$,
  and $u$~still
  is connected to~$v$ in~$\eG{G}{R\uplus S'}$
  since $u$ and~$v$ belong
  to the same connected component of~$\eG{G}{R}$.
  Finally,
  using the triangle inequality, \(\cost(S')\leq \cost(S)\),
  contradicting the fact that
  $S$~is \emi{}.
\end{proof}

\begin{lemma}
  \label{lem:2C-2}
  \looseness=-1
  Let $(G,R,\cost)$~be an RPP instance
  and $c$~be the number of connected components
  of~$\eG{G}{R}$.
  At most $2c-2$ \ba{} vertices
  in~$\eG{G}{R}$
  are incident to edges of
  an \emi{} \EE{}
  and this bound is tight.
\end{lemma}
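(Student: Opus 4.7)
The plan is to split the proof into three structural lemmas, then exhibit a tight construction.

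First, I would establish that every $u \in B$ has exactly $\deg_S(u)=2$ in an edge-minimizing EE $S$. Since $u$ is balanced in $\eG G R$ and in $\eG G {R\uplus S}$, $\deg_S(u)$ is even; since $u\in B$, it is positive. Suppose for contradiction $\deg_S(u)\ge 4$. By \cref{lem:onebal} every $S$-edge at $u$ leaves $C(u)$, so pick any two such edges $\{u,v_1\},\{u,v_2\}$ and replace them by $\{v_1,v_2\}$. The triangle inequality ensures the cost does not increase; parity at $u,v_1,v_2$ is preserved; connectivity is preserved because $u$ still has at least two further $S$-edges keeping $C(u)$ linked to the rest. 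This produces an EE with strictly fewer edges than $S$ and cost at most $\cost(S)$, contradicting edge-minimality.

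Second, I would translate the problem to the contracted multigraph $G^*$ obtained by collapsing each $R$-component $C_i$ to a super-vertex. Each $u\in B$ contributes exactly two edges to $G^*$, both meeting $C(u)$. I would argue that these two edges form a $2$-edge-cut of $G^*$: the ``shortcut'' operation above preserves parity and cost, so if it did not disconnect $\eG G {R\uplus S}$ it would yield a smaller EE. Hence the associated partition $P_u=(A_u,\bar A_u)$ has $C(u)\in A_u$, contains $C(v_1^u),C(v_2^u)$ on the other side, and all other edges of $G^*$ lie on one side of the partition.

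Third, and this is the main technical step, I would bound $|B|$ by analysing these $2$-cut partitions. Each partition $P_u$ is certified by two specific $S$-edges sharing the endpoint $u$; hence the unordered pair of cut edges of $P_u$ determines $u$ unless the two edges are parallel (same pair of endpoints), in which case at most two vertices can serve as the common endpoint in $B$. So each distinct partition accounts for at most $2$ elements of $B$. To show there are at most $c-1$ distinct partitions, I would argue by induction on $c$: a parallel double edge $\{u,v\}\in S$ with $u,v\in B$ lets me contract $C(u)\cup C(v)$ into a single super-vertex and invoke the inductive bound $2((c-1)-1)$ on the smaller instance, gaining at most two extra $B$-vertices for the contracted pair; the remaining case, where no $B$-vertex sits on a parallel pair, is handled using submodularity of the cut function in $G^*$ (a crossing pair of $B$-partitions would yield additional $2$-cuts that, together with edge-minimality, permit a shortcut). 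I expect the crossing/submodularity analysis to be the main obstacle, because crossing $2$-cut partitions need not immediately contradict edge-minimality; one must exploit the specific form $P_u=(A_u,\bar A_u)$ forced by $u$'s two $S$-edges and the fact that every $B$-vertex has $\deg_S=2$.

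Finally, I would exhibit tightness by taking $c$ pairwise disjoint $R$-triangles $C_1,\dots,C_c$ (all $R$-vertices balanced) and, for $i=1,\dots,c-1$, inserting two parallel $S$-edges between a vertex $u_i\in C_i$ and a vertex $w_{i+1}\in C_{i+1}$, choosing $u_i\ne w_i$ inside every middle component. Assigning edge weights so that the ``doubled path'' dominates every alternative EE with fewer edges (achievable because triangle inequality constrains but does not prevent such weight patterns), this $S$ is edge-minimizing and yields $|B|=1+2(c-2)+1=2(c-1)$.
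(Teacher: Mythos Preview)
Your step~4 contains an actual error, not just an unfinished idea. The intermediate claim ``there are at most $c-1$ distinct partitions'' is false. Take $c\ge 3$ disjoint $R$-triangles (all vertices balanced) and let $S$ be a single cycle $u_1u_2\cdots u_cu_1$ with one vertex $u_i$ chosen in each triangle. This $S$ is an edge-minimizing EE: $|S|=c$ is minimum, since any EE with $c-1$ edges would be a spanning tree of $G^*$ and force some super-vertex to have odd degree. Every $u_i$ lies in $B$, so $|B|=c$, and the $2$-cut at $u_i$ is the partition $(\{C_i\},\,V(G^*)\setminus\{C_i\})$. These are $c$ pairwise distinct and pairwise non-crossing partitions; so neither your ``$\le c-1$'' count nor the deferred submodularity/uncrossing argument can hold in this case. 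Your scheme as stated only yields $|B|\le 2c$, which does not prove the (tight) bound $2c-2$.

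The paper's proof avoids the $2$-cut machinery entirely. Pick an inclusion-minimal $T\subseteq S$ with $\eG{G}{R\uplus T}$ connected; then $|T|=c-1$. One checks that $S\setminus T$ is an edge-minimizing EE for $(G,R\uplus T,\cost)$: if some $S'$ had $|S'|<|S\setminus T|$ and $\cost(S')\le\cost(S\setminus T)$, then $S'\uplus T$ would contradict the edge-minimality of $S$ for $(G,R,\cost)$. Since $\eG{G}{R\uplus T}$ has a single component, \cref{lem:onebal} applied to $(G,R\uplus T,\cost)$ says $S\setminus T$ contains no edge incident to any balanced vertex of $\eG{G}{R\uplus T}$. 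Hence every balanced vertex of $\eG{G}{R}$ that is touched by $S$ must be an endpoint of some edge of $T$, and there are at most $2(c-1)$ such endpoints. Your tightness construction coincides with the paper's.
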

\begin{proof}
  Let \(S\)~be an \emi{} \EE{} for~$(G,R,\cost)$
  and $T\subseteq S$ be an inclusion\hyp minimal
  subset such that $\eG{G}{R\uplus T}$~is connected.
  Then $|T|=c-1$
  and $S\setminus T$~is
  an \emi{} \EE{} for~$(G,R\uplus T,\cost)$.
  Thus,
  by \cref{obs:allR},
  $V(S\setminus T)\subseteq V(R\uplus T)$.
  Combining this with
  \cref{lem:onebal},
  $S\setminus T$~does not contain
  any edges incident to \ba{} vertices
  of~$\eG{G}{R\uplus T}$.
  The only vertices that might
  be \ba{} in~$\eG{G}{R}$
  but not in~$\eG{G}{R\uplus T}$
  are the at most \(2c-2\)~end points
  of edges in~$T$.
  In the worst case,
  all of them are incident to edges in~$S$.
  \cref{fig:2c-2-tight}
  shows that the bound is tight.
  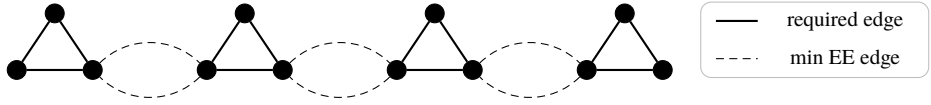
\begin{figure}[t]
  \centering
  \begin{tikzpicture}

    \def\xr{1}
    \def\yr{1}
    \tikzstyle{xnode}=[circle, fill, scale=3/4, draw]
    \tikzstyle{redge}=[thick, black]
    \tikzstyle{sedge}=[densely dashed, black]

    \begin{scope}[]
    \node (a) at (0,0)[xnode]{};
    \node (b) at (0.5*\xr,0.75*\yr)[xnode]{};
    \node (c) at (1*\xr,0)[xnode]{};
    \draw[redge] (a) -- (b) -- (c) -- (a);
    \end{scope}

    \begin{scope}[xshift=2.5*\xr cm]
    \node (ax) at (0,0)[xnode]{};
    \node (bx) at (0.5*\xr,0.75*\yr)[xnode]{};
    \node (cx) at (1*\xr,0)[xnode]{};
    \draw[redge] (ax) -- (bx) -- (cx) -- (ax);
    \end{scope}

    \draw[sedge] (c) to [out=45,in=135](ax);
    \draw[sedge] (ax) to [out=-135,in=-45](c);

    \begin{scope}[xshift=5*\xr cm]
    \node (a) at (0,0)[xnode]{};
    \node (b) at (0.5*\xr,0.75*\yr)[xnode]{};
    \node (c) at (1*\xr,0)[xnode]{};
    \draw[redge] (a) -- (b) -- (c) -- (a);
    \end{scope}

    \draw[sedge] (cx) to [out=45,in=135](a);
    \draw[sedge] (a) to [out=-135,in=-45](cx);

    \begin{scope}[xshift=7.5*\xr cm]
    \node (ax) at (0,0)[xnode]{};
    \node (bx) at (0.5*\xr,0.75*\yr)[xnode]{};
    \node (cx) at (1*\xr,0)[xnode]{};
    \draw[redge] (ax) -- (bx) -- (cx) -- (ax);
    \end{scope}

    \draw[sedge] (c) to [out=45,in=135](ax);
    \draw[sedge] (ax) to [out=-135,in=-45](c);

    \begin{scope}[xshift=9*\xr cm, yshift=-0.1*\yr cm]
      \draw[lightgray, rounded corners] (0,0) rectangle (3*\xr,1*\yr);
      \node at (1.9*\xr,0.75*\yr)[scale=0.8]{required edge};
    \draw[redge] (0.2*\xr,0.75*\yr) -- (0.75*\xr,0.75*\yr);
      \node at (1.9*\xr,0.25*\yr)[align=left,scale=0.8]{min EE edge};
    \draw[sedge] (0.2*\xr,0.25*\yr) -- (0.75*\xr,0.25*\yr);
    \end{scope}

    \end{tikzpicture}
    \caption{Proof that the bound given in \cref{lem:2C-2} is tight:
      $\eG{G}{R}$~has $c=4$~connected
      components
      and $2c-2=6$~vertices
      are incident to the \EE{}.}
  \label{fig:2c-2-tight}
\end{figure}
\end{proof}

\begin{remark}
  \label{rem:one-or-two}
  The following lemma shows that
  an \emi{} \EE{}
  contains exactly one edge
  incident to each unbalanced vertex of~$\eG{G}{R}$
  and either no or two edges
  incident to each balanced vertex of~$\eG{G}{R}$.
\end{remark}

\begin{lemma}
  \label{lem:at-most-two}
  Each vertex~$v\in V$
  is incident to at most two edges of
  an \emi{} \EE{}~$S$
  for an RPP instance~$(G,R,\cost)$.
\end{lemma}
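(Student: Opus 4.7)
The plan is to argue by contradiction: I would assume that some vertex $v$ has at least three $S$-edges incident to it and construct a multiset $S'$ with $|S'|<|S|$, $\cost(S')\leq\cost(S)$, and $\eG{G}{R\uplus S'}$ still Eulerian, contradicting edge-minimality. The main tool is a \emph{shortcut}: take two edges $\{v,u_1\},\{v,u_2\}\in S$ and replace them by the single edge $\{u_1,u_2\}$ (or, if $u_1=u_2$, simply delete both parallel edges). Balance is preserved automatically---$\deg(v)$ decreases by~$2$ while the other degrees are unchanged---and the cost does not grow thanks to the triangle inequality (\cref{lem:triangle}). The real difficulty is to choose the pair so that $\eG{G}{R\uplus S'}$ stays connected.

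To control connectivity I would set $H:=\eG{G}{R\uplus S}$, which is connected and balanced because $S$ is an EE, and analyze the components $C_1,\dots,C_k$ of $H-v$. For each $C_j$ let $d_j$ be the number of $H$-edges between $v$ and $C_j$, and let $s_j$ be the number of those that belong to $S$. Summing the even $H$-degrees of the vertices of $C_j$ shows that every $d_j$ is even, and the connectivity of $H$ forces $d_j\geq 2$.

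The case analysis, which I expect to be the main obstacle, splits on whether the shortcut can be carried out inside a single component. If some $C_{j^*}$ satisfies $s_{j^*}\geq 2$ and $d_{j^*}\geq 3$, I perform the shortcut on two $S$-edges of $v$ into $C_{j^*}$: afterwards $v$ is still joined to $C_{j^*}$ by at least $d_{j^*}-2\geq 1$ edges, and $u_1,u_2$ remain connected through the interior of $C_{j^*}$. Otherwise every $j$ with $s_j\geq 2$ has $d_j=2$, and since $\sum_j s_j=\deg_S(v)\geq 3$ there must exist two distinct components $C_{j_1}\neq C_{j_2}$ each carrying an $S$-edge from $v$. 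An inter-component shortcut between one such edge from each component then merges $C_{j_1}$ and $C_{j_2}$ via the new edge $\{u_1,u_2\}$, while $v$ retains at least $d_{j_i}-1\geq 1$ edges into each of $C_{j_1},C_{j_2}$ and all other components remain untouched. In every case $S'$ is an EE with $|S'|<|S|$ and $\cost(S')\leq\cost(S)$, contradicting edge-minimality; the delicate configuration $s_j=2,\,d_j=2$---where a naive intra-component shortcut would disconnect $C_j$ from $v$---is precisely the scenario that the inter-component shortcut is designed to rescue, so the key step is verifying that the dichotomy based on whether some $d_{j^*}\geq 3$ really covers all possibilities.
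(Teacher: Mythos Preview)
Your proof is correct. Both you and the paper argue by contradiction via the same shortcut move---replace two $S$-edges $\{v,u_1\},\{v,u_2\}$ by $\{u_1,u_2\}$ and invoke the triangle inequality for the cost bound---but the connectivity arguments differ. The paper does not look at $H-v$: it fixes an arbitrary labeling $e_1,e_2,e_3$, shortcuts $e_1,e_2$, and appeals to $2$-edge-connectivity of the Eulerian graph $H=\eG{G}{R\uplus S}$ to get two edge-disjoint $u_2$--$u_3$ paths, at most one of which uses $e_2$; together with the surviving edge $e_3=\{v,u_3\}$ and the new edge $\{u_1,u_2\}$ this is taken to certify connectivity of $\eG{G}{R\uplus S'}$. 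Your analysis of the components $C_j$ of $H-v$ with the parameters $d_j,s_j$ is longer but more robust: by choosing which pair to shortcut according to whether some component has $s_j\geq 2$ and $d_j\geq 3$, you guarantee that $v$ keeps at least one edge into every affected component, which is exactly what rules out the delicate configuration $s_j=d_j=2$. The paper's quicker argument in fact glosses over the possibility that the $u_2$--$u_3$ path avoiding $e_2$ still passes through $e_1$ and is therefore also lost; your case split handles this cleanly.
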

\begin{proof}
  Towards a contradiction,
  assume that $S$~contains \(e_i=\{u_i,v\}\)
  for \(i\in\{1,2,3\}\).
  Obviously,
  \(S'=(S\setminus\{e_1,e_2\})\uplus\{\{u_1,u_2\}\}\)
  satisfies $|S'|<|S|$.
  Moreover,
  $\cost(S')\leq \cost(S)$~follows from the triangle inequality.
  We argue that $S'$~is an \EE{},
  contradicting the choice of~$S$.
  The proof is illustrated in \cref{fig:at-most-two}.
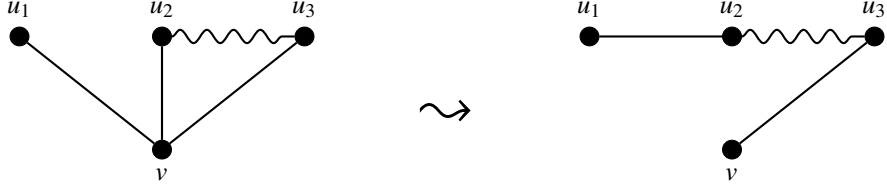
\begin{figure}[t]
    \centering
\begin{tikzpicture}
\def\xr{1.25}
      \def\yr{1}
      \tikzstyle{xnode}+=[circle, fill, scale=3/4, draw]
      \tikzstyle{xedge}+=[-,thick]
      \tikzstyle{xxedge}+=[-,thick, decorate, decoration=snake]

      \begin{scope}
        \node (v) at (0,0)[xnode,label=-90:{$v$}]{};
        \node (u1) at (-1.5*\xr,1.5*\yr)[xnode,label=90:{$u_1$}]{};
        \node (u2) at (0*\xr,1.5*\yr)[xnode,label=90:{$u_2$}]{};
        \node (u3) at (1.5*\xr,1.5*\yr)[xnode,label=90:{$u_3$}]{};
        \draw[xedge] (u1) to (v) to (u2);
        \draw[xedge] (v) to (u3);
        \draw[xxedge] (u2) to (u3);
      \end{scope}

      \node at (3*\xr,0.5*\yr)[scale=2]{$\leadsto$};

      \begin{scope}[xshift=6*\xr cm]
        \node (v) at (0,0)[xnode,label=-90:{$v$}]{};
        \node (u1) at (-1.5*\xr,1.5*\yr)[xnode,label=90:{$u_1$}]{};
        \node (u2) at (0*\xr,1.5*\yr)[xnode,label=90:{$u_2$}]{};
        \node (u3) at (1.5*\xr,1.5*\yr)[xnode,label=90:{$u_3$}]{};
        \draw[xedge] (u1) to (u2);
        \draw[xedge] (v) to (u3);
        \draw[xxedge] (u2) to (u3);
      \end{scope}
      
    \end{tikzpicture}
    \caption{Illustration of the proof of \cref{lem:at-most-two}. The wavy edge is a~$u_2$-$u_3$ path.}
    \label{fig:at-most-two}
  \end{figure}

  The balance of \(v\), \(u_1\), \(u_2\), and $u_3$
  is the same in \(\eG{G}{R\uplus S}\) and \(\eG{G}{R\uplus S'}\).
  It remains to show that
  the vertices~$v,u_1,u_2,u_3$ are connected
  in $\eG{G}{R\uplus S'}$.
  To this end,
  observe that
  $\eG{G}{R\uplus S}$ is Eulerian
  and thus contains two edge\hyp disjoint
  paths between~$u_2$ and~$u_3$.
  At most one of these paths
  contains the edge~$e_2$ and
  is lost in~$\eG{G}{R\uplus S'}$.
  Thus,
  $\eG{G}{R\uplus S'}$~contains the
  edges~$\{v,u_3\}$, $\{u_1,u_2\}$,
  and a path between $u_2$ and~$u_3$.
\end{proof}

\subsection{Inequalities}
\label{sec:bounds}

\begin{definition}
  \label{def:bounds}
  In the context of an RPP instance~$(G,R,\cost)$,
  we denote by
\begin{compactitem}[$W^*$ --]
\item[$R$ --] the set of required edges,

\item[$c$ --] the number of connected components in~$\eG{G}{R}$,
  
\item[$b$ --] the number of imbalanced vertices in~$\eG{G}{R}$,
  
\item[$W^*$ --] a minimum\hyp weight \sol{}
  with a minimum number of edges,

\item[$D$ --] a minimum\hyp weight \emi{}
  \EE{} for~$(G,R,\cost)$,

\item[$T$ --] a minimum\hyp weight
  set of edges such that~$\eG{G}{R\uplus T}$
  is connected,
  of minimum cardinality,
  and
 
\item[$M$ --] a minimum\hyp weight
  set of edges such that~$\eG{G}{R\uplus M}$
  is balanced,
  of minimum cardinality.
\end{compactitem}
\end{definition}
\begin{lemma}
  \label{lem:bounds}
  The following relations hold:

  \noindent
  \begin{minipage}{0.45\linewidth}
  \begin{align}
    \cost(W^*)&=\cost(R)+\cost(D),
                \label{cW=R+D}\\
    \cost(M)&\leq\cost(D),
              \label{cMD}
    \\
    \cost(T)&\leq\cost(D),
              \label{cTD}\\
    \cost(D)&\leq\cost(M)+2\cost(T),
              \label{cM+2T}
  \end{align}
\end{minipage}
\hfill
\begin{minipage}{0.45\linewidth}
  \begin{align}
    |W^*|&=|R|+|D|,
           \label{W=R+D}\\
    2b=|M|&\leq|D|,
            \label{MD}\\
    c-1=|T|&\leq|D|,
             \label{TD}\\
    |D|&\leq|M|+2|T|,
          \label{M+2T}
  \end{align}
\end{minipage}

\medskip
\noindent
where $|S|\leq|M|+2|T|$
holds for \emph{any} \emi{} \EE{}~$S$.
\end{lemma}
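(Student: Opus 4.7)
The plan is to split the eight relations into three groups. First, \eqref{cW=R+D} and \eqref{W=R+D} are correspondence identities that follow directly from \cref{prop:linearlift}. Second, \eqref{cMD}, \eqref{cTD}, \eqref{MD}, and \eqref{TD} lower\hyp bound $D$ by exploiting that $D$ itself witnesses the feasibility conditions defining $M$ and $T$. Third, \eqref{cM+2T}, \eqref{M+2T}, and the closing remark upper\hyp bound~$D$ (and any edge\hyp minimizing EE) via a direct construction combined with a degree\hyp counting argument built on the structural lemmas of this section.

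For the correspondence identities I would apply \cref{prop:linearlift} in both directions: converting $D$ into an \sol{} gives $\cost(W^*)\le\cost(R)+\cost(D)$ and $|W^*|\le|R|+|D|$, while converting $W^*$ into an EE produces a candidate competing with $D$ in both weight and edge count, yielding the reverse inequalities. For the lower bounds, since $D$ is an EE, $\eG{G}{R\uplus D}$ is Eulerian and therefore both balanced and connected. Thus $D$ is itself a feasible candidate in the defining minimum of $M$, which gives \eqref{cMD} and \eqref{MD}; similarly, any inclusion\hyp minimal subset of $D$ that keeps $\eG{G}{R}$ connected contains exactly $c-1$ edges and witnesses the defining minimum of $T$, yielding \eqref{cTD} and \eqref{TD}. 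The identities $|M|=b/2$ and $|T|=c-1$ use the complete metric graph from \cref{rem:metric-rpp}: a minimum\hyp cardinality balancing set is a perfect matching on the $b$ odd\hyp degree vertices of $\eG{G}{R}$, and a minimum\hyp cardinality connecting set is a spanning tree of the component graph of $\eG{G}{R}$.

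For the cost upper bound \eqref{cM+2T}, I would form the multiset $M\uplus T\uplus T$: doubling each edge of $T$ preserves the balance of $\eG{G}{R\uplus M}$, and the connectivity of $\eG{G}{R\uplus T}$ is retained after adding $M$, using that $V(M),V(T)\subseteq V(R)$ in the complete metric graph. Hence $M\uplus T\uplus T$ is an EE, and the minimality of $D$ gives $\cost(D)\le\cost(M)+2\cost(T)$. The cardinality bound \eqref{M+2T} together with the closing remark cannot be obtained by this weight comparison, because $M\uplus T\uplus T$ need not be cheaper than an arbitrary edge\hyp minimizing EE~$S$, and edge\hyp minimality imposes no size control over more expensive EEs. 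This is the step I expect to be the main obstacle. I would instead count degrees: \cref{obs:allR} gives $V(S)\subseteq V(R)$; \cref{lem:at-most-two} forces $\deg_S(v)\le 2$ everywhere; the balance condition of an EE then pins down $\deg_S(v)=1$ at each of the $b$ imbalanced vertices of $\eG{G}{R}$ and $\deg_S(v)\in\{0,2\}$ at each balanced vertex; \cref{lem:2C-2} caps the number of balanced vertices with $\deg_S(v)=2$ by $2c-2$. Summing gives $2|S|=\sum_v\deg_S(v)\le b+2(2c-2)$, so $|S|\le b/2+2c-2=|M|+2|T|$. Specialising to $S=D$ yields \eqref{M+2T}, and leaving $S$ arbitrary proves the remark.
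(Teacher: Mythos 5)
Your proposal is correct and follows essentially the same route as the paper's proof: \cref{prop:linearlift} for the two equalities, the choice of~$M$ and~$T$ together with the characterizations $|M|=b/2$ and $|T|=c-1$ for the lower bounds, the multiset $M\uplus T\uplus T$ for \eqref{cM+2T}, and the degree count via \cref{obs:allR}, \cref{lem:at-most-two} (i.e.\ \cref{rem:one-or-two}), \cref{lem:2C-2}, and handshaking for $|S|\leq|M|+2|T|$. You also correctly identify that the cardinality bound cannot come from the weight comparison, which is exactly how the paper handles it.
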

\begin{proof}
Equations
\eqref{cW=R+D} and \eqref{W=R+D} follow from
\cref{prop:linearlift}.
Inequalities
\eqref{cMD} and \eqref{MD}
follow by choice of~$M$
and the fact that,
since we assume the triangle inequality,
$M$~is simply a
minimum\hyp weight
perfect matching on the $b$~\imba{}
vertices in~$\eG{G}{R}$ \citep{EJ73,Chr73,Ser74}.
Inequalities
\eqref{cTD} and \eqref{TD}
follow by choice of~$T$.
Inequality
\eqref{cM+2T}
follows from the fact that
$\eG{G}{R\uplus M}$~is balanced
and adding each edge of~$T$ twice to it
does not change the balance of vertices,
yet connects the graph.
We now derive inequality~\eqref{M+2T}.
Consider any \emi{} \EE{}~$S$.
By \cref{lem:2C-2},
  a set~$X$ of at most $2c-2$~balanced vertices in~$\eG{G}{R}$
  are incident to edges of~$S$.
By \cref{rem:one-or-two},
  $S$~contains exactly one edge
  incident to each imbalanced vertex in~$\eG{G}{R}$
  and exactly two edges
  incident to each vertex in~$X$.
Thus,
by the handshaking lemma,
we get
\(
2|X|+b=2|S|.
\)
Therefore,
$|S|=|X|+b/2\leq 2c-2+|M|=2|T|+|M|$.
\end{proof}

\section{Hardness of kernelization}
\label{sec:hard}
\label{sec:wk1}
In this section,
we prove \cref{thm:wk1}.
We first show WK[1]-hardness in \cref{lem:wk1-hard},
then we show containment in WK[1] in
\cref{lem:wk1-containment}.
\cref{thm:wk1}
immediately follows from
\cref{lem:wk1-hard,lem:wk1-containment}
using \eqref{cW=R+D} and \eqref{W=R+D}.

\cref{lem:wk1-hard} means that,
according to the conjecture of \citet{HKS+15b},
RPP has no Turing kernels with size polynomial
in the number and weight of deadheading edges
in an optimal \sol{}.

\begin{lemma}
  \label{lem:wk1-hard}
  RPP is WK[1]-hard
  parameterized by
  $|T|+|M|+\cost(T)+\cost(M)\in\Theta(|D|+\cost(D))$
  even in complete graphs
  with metric edge weights one and two.
\end{lemma}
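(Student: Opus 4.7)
The plan is to give a polynomial parameter transformation (PPT) from a WK[1]\hyp hard problem such as \textsc{Hitting Set} parameterized by the universe size~$n$, which is WK[1]-complete by \citet{HKS+15b}, to RPP on complete graphs with edge weights in~$\{1,2\}$. Because weights are bounded by~$2$, we have $\cost(T)\le 2|T|$ and $\cost(M)\le 2|M|$, so bounding $|T|+|M|+\cost(T)+\cost(M)$ by~$\poly(n)$ amounts to bounding~$c$ and~$b$ by~$\poly(n)$ independently of~$|\mathcal F|$.

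Given a \textsc{Hitting Set} instance $(U,\mathcal F,k)$, I~would construct a complete graph with one \emph{element vertex}~$u_e$ for each $e\in U$, one \emph{set vertex}~$v_F$ for each $F\in\mathcal F$, and a constant number of global bookkeeping vertices. The required edge multiset~$R$ would consist of a carefully chosen $O(n)$\hyp edge \emph{skeleton} on the element vertices that (i)~yields~$O(n)$ connected components in~$\eG GR$ and (ii)~makes a prescribed set of at most~$O(n)$ element vertices imbalanced, while leaving every set vertex outside~$V(R)$. The membership structure of~$\mathcal F$ is encoded purely on the weights: assign weight~$1$ to each skeleton edge and to every pair $\{v_F,u_e\}$ with $e\in F$, and weight~$2$ to all remaining pairs. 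Under $\{1,2\}$\hyp weights on a complete graph, the triangle inequality is automatic. The cost budget~$B$ is tuned so that the cheapest way to simultaneously balance and reconnect~$\eG GR$ is to route each imbalanced element vertex through a weight-$1$ edge to some set vertex acting as a shared "hub"; reusing one hub~$v_F$ for several element vertices~$u_e$ with $e\in F$ then corresponds exactly to one set of the cover hitting several elements, so that an \sol{} of weight $\le B$ exists iff there is a hitting set of size~$\le k$.

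Correctness is then proved using the structural lemmas of \cref{sec:solstruct}. Direction~\RD{}: given a hitting set~$H$, construct an \EE{} by assigning each imbalanced element vertex~$u_e$ to a hub~$v_F$ with $F\in H$ and $e\in F$, and adding two weight-$1$ edges~$\{u_e,v_F\}$; by \cref{lem:at-most-two}, the result has the expected structure and cost fits within~$B$. Direction~\LD{}: given an optimal \EE{} within budget, \cref{obs:allR,lem:onebal,lem:at-most-two} force the set vertices incident to weight-$1$ deadheading edges to cover the universe and to number at most~$k$. The transformation is polynomial\hyp time, and by construction $b=O(n)$ and $c=O(n)$, so $|T|+|M|+\cost(T)+\cost(M)=O(n)=\poly(n)$, yielding a PPT.

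The main obstacle will be engineering the skeleton, the $\{1,2\}$\hyp weight pattern, and the budget~$B$ so that (a)~it is strictly cheaper to balance each imbalanced element vertex via a hub than via a direct weight-$2$ edge to another imbalanced element vertex, forcing the optimum \EE{} to expose the hitting\hyp set structure, and (b)~reusing one hub for several element vertices is strictly cheaper than introducing a fresh one, so that the number of hubs used matches the cover size. The naïve idea of placing one required\hyp edge component per set of~$\mathcal F$ would drive~$c$ up to~$|\mathcal F|$ and destroy the parameter bound; the set identity therefore \emph{must} be carried by the deadheading weight structure rather than by~$R$ itself. Calibrating the cost gap while keeping all weights in~$\{1,2\}$ and $b,c=O(n)$ is the delicate combinatorial core of the reduction; once it is in place, the WK[1]-hardness statement follows immediately.
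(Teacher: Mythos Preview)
Your plan has a structural inconsistency that cannot be repaired within the framework you describe. You place every set vertex~$v_F$ outside~$V(R)$ and then invoke \cref{obs:allR} and \cref{lem:at-most-two} in the backward direction---but these are exactly the lemmas that kill the hub idea. By \cref{obs:allR}, an edge\hyp minimizing \EE{} never touches a vertex outside~$V(R)$, so no optimal \EE{} visits any~$v_F$. Concretely, the triangle inequality gives $\cost(\{u_e,u_{e'}\})\le\cost(\{u_e,v_F\})+\cost(\{v_F,u_{e'}\})=2$, so routing through a hub is never strictly cheaper than the direct edge, and an edge\hyp minimizing \EE{} will always shortcut. Moreover, by \cref{rem:one-or-two} a balanced vertex such as~$v_F$ is incident to at most two edges of an edge\hyp minimizing \EE{}, so ``reusing one hub for several element vertices'' is impossible as well. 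The upshot is that the optimal \EE{} on your instance depends only on the skeleton and on the element\hyp element weights---which you set uniformly to~$2$---and is therefore entirely independent of~$\mathcal F$; the Hitting Set structure is invisible to RPP. The gap~(a) you isolate is not a calibration issue but an impossibility under metric $\{1,2\}$ weights. (A smaller symptom of the same problem: ``adding two weight-$1$ edges~$\{u_e,v_F\}$'' in your forward direction does not balance an imbalanced~$u_e$, since two parallel edges change its degree by~$2$.)

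The paper avoids all of this by reducing from \textsc{Multicolored Cycle} parameterized by the number~$k$ of colors rather than from \textsc{Hitting Set}. Each color class becomes a required cycle, so $\eG{G'}{R}$ is already balanced ($b=0$, $|M|=0$) with $c=k$ components, giving the parameter bound immediately. Weight~$1$ is assigned exactly to the edges of the source graph and to~$R$; an \EE{} of weight at most~$k$ is then forced to be a single weight-$1$ cycle meeting every color class, i.e., a multicolored cycle in the source graph. Because the source problem is itself graph\hyp based, the $\{1,2\}$ weights can encode its adjacency relation directly, with no hubs and no vertices outside~$V(R)$.
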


\noindent
To prove \cref{lem:wk1-hard},
we provide a
polynomial parameter transformation
from the following
known WK[1]-complete parameterized
problem \citep{HKS+15b}.

\begin{problem}[Multicolored Cycle]\leavevmode
  \begin{compactdesc}
  \item[Instance:] An undirected 
    graph~$G=(V,E)$
    with a
    vertex coloring \(c\colon V\to\{1,\dots,k\}\).
  \item[Parameter:] $k$.
  \item[Question:] Is there a cycle in~$G$
    containing exactly one vertex
    of each color?
  \end{compactdesc}
\end{problem}
\begin{proof}[Proof of \cref{lem:wk1-hard}]
  Let $I:=(G,c)$
  with a graph~$G=(V,E)$
  and a vertex $k$\hyp coloring~$c\colon V\to\{1,\dots,k\}$
  be an instance
  of Multicolored Cycle.
  For $i\in\{1,\dots,k\}$,
  we denote by $V_i:=\{v\in V\mid c(v)=i\}$
  the vertices of color~$i$.
  Now, consider the RPP instance~$I'=(G', R, \cost)$,
  illustrated in \cref{fig:wk1-hard}:
\begin{figure}[t]
    \centering
    \begin{tikzpicture}
      \def\xr{0.8}
      \tikzstyle{xnode}+=[circle, fill, scale=3/4, draw]
      \tikzstyle{redge}=[-,very thick,color=black];
      \tikzstyle{eeedge}=[-,densely dashed, black]
      \newcommand{\insta}[3]{
        \draw[rounded corners, lightgray] (0,#1-0.25) rectangle (10*\xr,#1+0.4);
      \node at (-0.75*\xr,#1)[]{#3};
        \node (a1#2) at (1*\xr,#1)[xnode]{};
        \node (a2#2) at (2*\xr,#1)[xnode]{};
        \node (a3#2) at (3*\xr,#1)[xnode]{};
        \node (ald#2) at (5*\xr,#1)[]{$\cdots$};
        \node (a4#2) at (7*\xr,#1)[xnode]{};
        \node (a5#2) at (8*\xr,#1)[xnode]{};
        \node (a6#2) at (9*\xr,#1)[xnode]{};
      \draw[redge] (a1#2) -- (a2#2) -- (a3#2) -- (ald#2) -- (a4#2) -- (a5#2) -- (a6#2); 
      \draw[redge] (a1#2) to [out=172, in =8](a6#2);
}

      \insta{0}{1}{$V_1$};
      \insta{-0.8}{2}{$V_2$};
\node (ld1) at (2.25*\xr,-1.3)[]{$\vdots$};
      \node (ld2)  at (5*\xr,-1.3)[]{$\vdots$};
      \node (ld3) at (7.75*\xr,-1.3)[]{$\vdots$};
\insta{-2.3}{3}{$V_{k}$};
      
      \draw[eeedge] (a21) to (a22) to (a23) to[out=120,in=240] (a21);

    \end{tikzpicture}
\caption{Illustration for the proof of \cref{lem:wk1-hard}.
      Thick solid edges are the required edges~$R$.
      Thin dashed edges are a colorful cycle
      and,
      at the same time,
        an \EE{}.}
    \label{fig:wk1-hard}
  \end{figure}
$G'=(V,E')$~is a complete graph,
  the set~$R$
  contains a cycle on the vertices in~$V_i$
  for each $i\in\{1,\dots,k\}$,
  and
  \[
    \cost \colon E'\to\N,\, e\mapsto
    \begin{cases}
      1&\text{ if $e\in E\cup R$},\\
      2&\text{ otherwise}.
    \end{cases}
  \]
  Note that,
  since all edge weights~\(\cost\)
  are one and two,
  \(\cost\)~is metric.
  Thus,
  by
  \cref{lem:bounds},
  $|T|+|M|+\cost(T)+\cost(M)\in\Theta(|D|+\cost(D))$.
  Moreover,
  since $\eG{G'}{R}$~is balanced,
  $|T|+|M|+\cost(T)+\cost(M)=|T|+\cost(T)\in O(k)$.
  We show that
  $I$~is a yes\hyp instance
  if and only if
  $I'$ has an \sol{} of weight~$\cost(R)+k=|R|+k$,
  which,
  by \cref{prop:linearlift},
  is equivalent to having
  an \EE{}~$S$ of weight~$\cost(S)\leq k$.
  
  \RD{}
  Let~$S$ be a multicolored cycle in~$G$.
  Since $\eG{G'}{R}$ is a disjoint union of cycles,
  $\eG{G'}{R}$~is balanced.
  Since $S$~is a cycle,
  $\eG{G'}{R\uplus S}$ is also balanced.
  Since $S$~contains one vertex of each color,
  $\eG{G'}{R\uplus S}$~is additionally connected.
  Thus, $S$~is an \EE{} for $(G',R,\cost)$.
  Since $S$~consists of edges of~$G$,
  we conclude \(\cost(S)=|S|=k\).

  \LD{}
  Let $S$~be an \emi{} \EE{} with \(\cost(S)\leq k\)
  for~$(G',\allowbreak R,\cost)$.
  Since $\eG{G'}{R}$ and $\eG{G'}{R\uplus S}$
  are balanced,
  so is $\eG{G'}{S}$.
  Since $\eG{G'}{R\uplus S}$~is connected
  and $\eG{G'}{S}$~is balanced,
  $S$~contains at least two edges incident
  to a vertex in~$V_i$ for each~$i\in\{1,\dots,k\}$.
  Thus,  
  since $\cost(S)\leq k$,
  $\eG{G'}{S}$ has to contain \emph{exactly} $k$~edges,
  all of weight one,
  and \emph{exactly} one vertex
  of~$V_i$ for each~$i\in\{1,\dots,k\}$,
  that is,
  $k$~vertices.
  Since $\eG{G'}{S}$~is balanced,
  it follows that $\eG{G'}{S}$~is a
  collection of cycles
  whose color sets do not intersect.
  Thus,
  if $\eG{G'}{S}$~was not connected,
  then $\eG{G'}{R\uplus S}$ would not be either.
  We conclude that~$\eG{G'}{S}$
  is connected,
  that is,
  a \emph{single} cycle
  containing exactly one vertex of each color.
  By \cref{lem:onebal},
  none of the edges in~$S$ are in~$R$.
  Since all of them have weight one,
  they are in~$G$.
  It follows that $S$~forms
  a multicolored cycle in~$G$.
\end{proof}

\noindent
Having shown WK[1]-hardness
in \cref{lem:wk1-hard},
we now show containment in~WK[1],
concluding the proof of
\cref{thm:wk1}.
Note that we showed hardness
for a parameter in~$\Theta(|D|+\cost(D))$,
whereas containment we show
for an even smaller
parameter in~$O(|D|+\log(1+\cost(D)))\subseteq O(|D|+\cost(D))$.
This means that,
if any problem in WK[1] turns out to have
a polynomial\hyp size Turing kernel,
then there will be a Turing kernel
for RPP with size polynomial even in
$|D|+\log(1+\cost(D))$.

\begin{lemma}
  \label{lem:wk1-containment}
  RPP parameterized by
  $|T|+|M|+\log(1+\cost(T)+\cost(M))\in O(|D|+\log(1+\cost(D)))$
  is in WK[1].
\end{lemma}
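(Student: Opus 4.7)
To show containment in WK[1], I would give a polynomial parameter transformation (PPT) from RPP, with parameter $p := |T|+|M|+\log(1+\cost(T)+\cost(M))$, to NDTM Halting. Given an RPP instance $(G,R,\cost,k)$, the plan is to construct a nondeterministic Turing machine $\M$ and a time bound~$t$ with $t\log|\M|\in\poly(p)$ such that $\M$ halts on the empty input in at most~$t$ steps iff the instance admits an \sol{} of weight at most~$k$. By \cref{prop:linearlift} combined with \cref{lem:bounds}, this is equivalent to deciding whether an \EE{} $S$ exists with $|S|\leq|M|+2|T|\leq 2p$ and $\cost(S)\leq\cost(M)+2\cost(T)\leq 2^{p+1}$.

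First I would preprocess the instance: apply \cref{lem:triangle} to impose the triangle inequality, restrict to $V(R)$ via \cref{obs:allR}, cap each edge weight exceeding $\cost(M)+2\cost(T)$ at this threshold (since a heavier edge cannot appear in any \emi{} \EE{} of cost at most $\cost(D)$), and apply a lossless weight reduction (e.g., via Frank--Tardos) so all remaining weights are encoded in $\poly(p)$~bits. By \cref{lem:onebal}, I would further contract every maximal chain of balanced degree-2 vertices of $\eG{G}{R}$ into a single weighted required edge and simplify pairs of parallel required edges by adjusting $\cost(R)$. If at this stage $\eG{G}{R}$ is connected and Eulerian, the answer reduces to the comparison $\cost(R)\leq k$.

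Then $\M$ hardcodes the preprocessed instance together with lookup tables for the parity of each vertex in $\eG{G}{R}$, its connected-component label, and the capped edge weights. In its nondeterministic phase, $\M$ writes an \EE{}~$S$ by guessing at most $|M|+2|T|$ vertex pairs from~$V(R)$ in $O(p\log n)$~steps. In its deterministic phase, $\M$ verifies in $\poly(p)$~time, using the hardcoded tables, that every \imba{} vertex of $\eG{G}{R}$ is re-balanced by~$S$, that $\eG{G}{R\uplus S}$ is connected (via a union--find over the precomputed components of $\eG{G}{R}$ joined through the edges of~$S$), and that $\cost(R)+\cost(S)\leq k$. Only the $O(p)$ many guessed vertex pairs and their $O(p^2)$ pairwise distances have to be consulted during verification.

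The main obstacle will be ensuring $t\log|\M|\in\poly(p)$, which amounts to showing that the preprocessed instance has size $2^{\poly(p)}$, so that both $\log n$ and $\log|\M|$ lie in $\poly(p)$ while $t=\poly(p)\cdot\log n\in\poly(p)$. The weight shrinking takes care of the weight bits, and the contractions remove all degree-2 balanced chains; the delicate part is bounding the number of surviving vertices in terms of~$b$ and $c$. Here one has to combine \cref{lem:onebal,lem:2C-2,lem:at-most-two} with a careful case analysis of dense Eulerian subcomponents of $\eG{G}{R}$, whose contribution to the answer can be separated out into their total cost so that only a $2^{\poly(p)}$-sized ``skeleton'' of the instance must actually be encoded into $\M$.
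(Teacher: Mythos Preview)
Your overall strategy---reduce to NDTM Halting by hardcoding the instance into~$\M$ and having~$\M$ guess at most $|M|+2|T|$ extension edges---matches the paper's.  The gap is in how you secure the bound $t\log|\M|\in\poly(p)$.  You attempt this by \emph{shrinking the instance} to size $2^{\poly(p)}$ via contractions of balanced degree-$2$ chains and an appeal to ``separating out dense Eulerian subcomponents.''  That step does not go through.  Take~$\eG{G}{R}$ to consist of two vertex-disjoint Eulerian graphs of minimum degree at least~$4$ (so no degree-$2$ chains exist); then $b=0$, $c=2$, and the parameter~$p$ depends only on the weight of a single cheapest crossing edge, yet the number of vertices is unbounded.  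Nothing in your preprocessing removes any vertex here, and you \emph{cannot} replace a component by its total cost: the optimal \EE{} consists of two parallel crossing edges whose weight depends on \emph{which} pair of vertices---one in each component---is chosen, so all $\Theta(n^2)$ inter-component distances are relevant.  The ``skeleton'' idea therefore does not yield a $2^{\poly(p)}$-sized object to hardcode.

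The paper sidesteps this entirely with a win--win case distinction that you are missing.  Set $d_1:=|M|+2|T|$.  If $d_1\leq\log n$, then the known $O(4^{d}\cdot n^3)$-time exact algorithm for RPP runs in $O(4^{\log n}\cdot n^3)=O(n^5)$ time, so one solves the instance outright and outputs a constant-size Turing machine.  Otherwise $\log n<d_1$, and now one may hardcode the \emph{entire} graph (of size $\poly(n)$) into~$\M$: since $\log|\M|\in O(\log n + \log d_1 + \log\log d_2)\subseteq O(d_1+\log d_2)\subseteq\poly(p)$ and each guessing/verification phase costs $\poly(d_1,\log n,\log d_2)\subseteq\poly(p)$ steps, the product $t\log|\M|$ is polynomial in~$p$ without any instance shrinking, Frank--Tardos, or structural analysis of~$\eG{G}{R}$.
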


\begin{proof}
  We prove a polynomial~parameter transformation
  from RPP parameterized by~$|T|+|M|+\log(1+\cost(T)+\cost(M))$
  to NDTM Halting (\cref{prob:ndtm}).
  By \cref{rem:metric-rpp},
  it is sufficient to reduce RPP
  with triangle inequality.
  To this end,
  we construct a number~$t\in\N$
  and a nondeterministic Turing
  machine~$\M$
that,
  given an empty input string,
  has a computation path
  halting within $t$
  steps if and only if
  a given RPP instance~$I=(G,R,\cost)$
  on a graph~$G=(V,E)$ with $n$~vertices,
  $m$~edges, and
  triangle inequality
  has an \sol{} of weight at most~$\cost(R)+k$,
  that is,
  an \EE{} of weight at most~$k$.
  To this end,
  let
  \begin{align*}
    d_1&:=|M|+2|T|&&\text{and}&d_2&:=\cost(M)+2\cost(T).
  \end{align*}  
By \eqref{M+2T},
  there is an optimal \EE{}
  of at most $d_1$~edges for~$(G,R,\cost)$.
  Thus,
  if $d_1\leq\log n$,
  or $n=0$,
  then we optimally solve $I$~in polynomial time
  \citep{DMNW13}
  and return $t=1$ with
  a Turing machine of constant size
  that immediately halts or never halts
  in dependence of whether $I$~is a
  yes\hyp instance.
  Thus,
  we henceforth assume
  \begin{equation}
    0\leq\log n< d_1.\label{eq:nleqd}
  \end{equation}
  If $k\geq d_2$,
  then, by \eqref{cM+2T}, $I$~is a yes\hyp instance
  and we simply return
  $t=1$ and
  a Turing machine~$\M$
  of constant size
  that immediately halts.
  Thus, we henceforth assume
  \begin{equation}
    0\leq k<d_2.\label{eq:kleqd}
  \end{equation}
  By \eqref{eq:kleqd},
  edges~$e\in E$
  with weight~$\cost(e)\geq d_2$
  will not be part of the sought \EE{}
  of weight~$k$,
  thus we lower their weight to~$d_2$
  and henceforth assume
  \begin{equation}
    \label{eq:eleqd}
    \cost(e)\leq d_2\qquad\text{ for all $e\in E$}.
  \end{equation}
  We now construct a nondeterministic Turing machine~$\M$.
  The Turing machine has a binary alphabet
  and $G$~is assumed to be encoded in binary.
  The state names of~$\M$
  encode the incidence matrix of~$G$,
  the weight~\(\cost(e)\) of each edge~$e\in E$
  in binary,
  and, for each vertex~$v\in V$,
  the number of its connected component in~$\eG{G}{R}$ in binary.
  Turing machine~$\M$ uses three tapes:
  on the \emph{edge tape}, it guesses
  at most $d_1$~edges,
  on the \emph{connection tape},
  it records
  which of the initially $O(d_1)$~connected
  components of~$\eG{G}{R}$
  (by \eqref{TD})
  are connected by the guessed edges,
  and
  on the \emph{balancing tape},
  it records all \imba{} vertices,
  of which initially there are $O(d_1)$
  by \eqref{MD}
  and whose number will never exceed $O(d_1)$
  by adding at most $d_1$~guessed edges.
  The program of Turing machine~$\M$
  is as follows.
  On the empty input,
  at most $d_1$~times:
  \begin{compactenum}
  \item Write the name~$\{u,v\}$ of an arbitrary edge of~$G$
    (listed in the state names)
    onto the edge tape.
    This takes $O(\log n)\in O(d_1)$  steps (by \eqref{eq:nleqd}).
    
  \item Flip the balance of $u$ and~$v$ on
    the balancing tape in
    $\poly(d_1)$ steps
    because there are only $O(d_1)$ vertices on it,
    each of which is encoded in $O(\log n)\subseteq O(d_1)$ bits
    (by \eqref{eq:nleqd}).
    
  \item Record the connectivity
    of the components containing~$u$ and~$v$
    on the connection tape
    in $\poly(d_1)$ steps
    because there are only
    $O(d_1)$~component names on it.
    
\end{compactenum}
  If,
  after at most $d_1$~guessed edges,
  the computation does \emph{not}
  reach a configuration where
  all vertices are \ba{}
  and all components of~$\eG{G}{R}$ are connected,
  then $\M$~goes into an infinite loop.
  Otherwise,
  in $\poly(d_1)$~steps,
  we reached
  such a configuration
  and it remains to check
  whether the guessed edges have weight at most~$k$.
  To this end,
  $\M$~writes down
  the weights of the at most $d_1$~guessed edges
  in binary,
  sums them up,
  and compares them to~$k$
  in~$\poly(d_1+\log d_2)$ steps
  because of \eqref{eq:kleqd} and \eqref{eq:eleqd}.
  If their weight is more than~$k$,
  then $\M$~goes into an infinite loop.
  Otherwise,
  $\M$~stops.
  Observe that
  each computation path of~$\M$,
  if it terminates,
  then it does so within~$t$
  steps for some~$t\in\poly(d_1+\log d_2)$.

  We have shown a correct reduction
  from RPP to NDTM Halting.
  To show that it is a polynomial~parameter transformation,
  it remains to show~\(t\log |\M|\in\poly(d_1+\log d_2)=\poly(|T|+|M|+\log(1+\cost(T)+\cost(M))\).
  Since $t\in\poly(d_1+\log d_2)$,
  it remains to show that
  \(\log|\M|\in\poly(d_1+\log d_2)\).
The graph~$G$
  can be hard\hyp coded
  in
  Turing machine~$\M$ 
  using \(\poly(n)\)~symbols.
  The encoded edge weights
  have total size \(\poly(n+ \log d_2)\)
  by \eqref{eq:eleqd}.
  Its program therefore has size \(|\M|\leq\poly(n+d_1+\log d_2)\).
  Thus,
  \begin{align*}
    \log |\M| &\leq\log(n+d_1+\log d_2)^c=c\log (n+d_1+\log d_2)&\text{for some constant~$c$}\\
              &=c\Bigl(\log n+\log\Bigl(1+\frac{d_1}{n}\Bigr)+\log\Bigl(1+\frac{\log d_2}{n+d_1}\Bigr)\Bigr)&\text{using $\log(a+b)=\log a +\log(1+a/b)$}\\
              &\leq c(d_1+\log(1+d_1)+\log(1+\log d_2))&\text{using \eqref{eq:nleqd},
                                                         $n\geq 1$, and $d_1\geq 1$}\\
    &\leq c(2d_1+\log d_2)\in\poly(d_1+ \log d_2)&\text{using $\log(1+x)\leq x$.}&\qedhere
  \end{align*}
\end{proof}

\section{Approximate kernelization schemes}
\label{sec:psaks}

In \cref{sec:wk1},
we have seen that
provably effective and efficient
data reduction for RPP
is hard when one requires
exact solutions.
In this section,
we show effective data reduction
rules that only slightly decrease
the solution quality.
Indeed,
we will prove \cref{thm:psaks}.
To this end,
in \cref{sec:rednr,sec:redr,sec:ba},
we present three data reduction rules.
In \cref{sec:psaksproof},
we then
show how to apply these rules
to obtain
a polynomial\hyp size approximate kernelization
scheme~(PSAKS)
of size~$2b+O(c/\varepsilon)$,
proving \cref{thm:psaks}.
Finally,
in \cref{sec:psaks-c},
we discuss some problems
that one faces when
trying to improve it
to a PSAKS for RPP parameterized only by~$b$ or only by~$c$.

\subsection{Removing vertices non-incident to required edges}
\label{sec:rednr}
\noindent
Recall that,
by \cref{rem:metric-rpp},
we can assume that the input graph~$G$ is complete
and the edge weights satisfy the triangle inequality.
Thus,
we can simply delete vertices that are not incident to required edges \citep{BNSW15}.

\begin{rrule}
  \label{rrule:delverts}
  Let $(G,R,\cost)$~be an RPP instance
  with triangle inequality.
  Delete all vertices
  that are not incident to edges in~$R$.
\end{rrule}

\noindent
Since,
by \cref{obs:allR},
no \emi{} \EE{} uses vertices
outside of~$V(R)$,
the following proposition is immediate.

\begin{proposition}
  \label{rem:allR}
  \cref{rrule:delverts}
  turns an RPP instance~$(G,R,\cost)$
  into an RPP instance~$(G',R,\cost)$
  such that
  \begin{compactitem}
  \item any \emi{} \EE{} for~$(G,R,\cost)$ is one
    for~$(G',R,\cost)$
    and 
  \item any
    \EE{} for~$(G',R,\cost)$
    is one for~$(G,R,\cost)$.
  \end{compactitem}
\end{proposition}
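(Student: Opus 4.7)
The plan is to prove both items by directly reducing them to the fact that \cref{rrule:delverts} does not remove any vertex in~$V(R)$, combined with \cref{obs:allR} for the first item. By \cref{rem:metric-rpp}, we are working under the assumption that $G$ is a complete graph with metric edge weights, so in particular $G'$, obtained by deleting vertices outside~$V(R)$, is again a complete graph on the vertex set~$V(R)$.

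For the first item, I would take an arbitrary \emi{} \EE{}~$S$ for~$(G,R,\cost)$. By \cref{obs:allR}, we have $V(S)\subseteq V(R)$, so every edge of~$S$ has both end points among the vertices retained in~$G'$. Since $G'$ is the complete graph on~$V(R)$, every such edge is in fact present in~$G'$, so $S$ is a valid multiset of edges of~$G'$. Moreover, $\eG{G'}{R\uplus S}$ coincides with $\eG{G}{R\uplus S}$ since neither graph contains any of the vertices deleted by \cref{rrule:delverts}; the latter is Eulerian by assumption, so the former is too, and therefore $S$ is an \EE{} for~$(G',R,\cost)$.

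For the second item, the argument is simpler and does not even require $S$ to be \emi{}. Any \EE{}~$S$ for~$(G',R,\cost)$ uses only edges of~$G'$, which are edges of~$G$ as well (as $G'$ is a subgraph of~$G$ on the vertex subset~$V(R)$). The graph $\eG{G}{R\uplus S}$ is identical to $\eG{G'}{R\uplus S}$ because no vertex incident to an edge in~$R\uplus S$ was removed; hence it is Eulerian and $S$ is an \EE{} for~$(G,R,\cost)$.

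There is no real obstacle here: the entire content of the proposition is packaged into \cref{obs:allR} together with the completeness of~$G$ guaranteed by \cref{rem:metric-rpp}. The only thing one has to be careful about is to observe that the induced graph notation~$\eG{\cdot}{R\uplus S}$ depends only on the edges of~$R\uplus S$ and their incident vertices, so deleting isolated (with respect to $R\uplus S$) vertices from the ambient graph does not affect whether $R\uplus S$ induces an Eulerian graph.
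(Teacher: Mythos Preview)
Your proof is correct and follows exactly the paper's approach: the paper states that the proposition is immediate from \cref{obs:allR}, and you have simply spelled out the details of why both directions follow once one knows that \emi{} \EE{}s use only vertices in~$V(R)$. There is nothing to add.
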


\subsection{Reducing the number of required edges}
\label{sec:redr}
In this section,
we present a data reduction rule
to shrink the set of required edges.
This will be crucial
since other data reduction rules
only reduce the number of vertices,
yet may leave the multiset of
required edges between them unbounded.

\begin{rrule}
  \label{rrule:delcyc}
  Let $(G,R,\cost)$~be an instance of RPP
  and $C$~be a cycle in~$\eG{G}{R}$
  such that \(\eG{G}{R\setminus C}\)
  has the same number of
  connected components
  as~$\eG{G}{R}$,
  then delete the edges of~\(C\) from~$R$.
\end{rrule}

\begin{lemma}
  \label{lem:delcyc}
  Using \cref{rrule:delcyc},
  one can in $O(|R|)$~time
  compute a set~$R'\subseteq R$
  of required edges
  with the following properties.
  \begin{compactenum}[(i)]
  \item\label{cyc2}
    Any \EE{} for~$(G,R',\cost)$
    is one for~$(G,R,\cost)$
    and vice versa.
  \item\label{cyc1}
    The number of edges
    in each connected component of~$\eG{G}{R'}$
    with $k$~vertices
    is at most $\max\{1,2k-2\}$.
  \end{compactenum}
\end{lemma}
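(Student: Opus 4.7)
The plan is to establish claims~(i) and~(ii) separately and then describe an $O(|R|)$-time implementation.

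For~(i), consider a single application of \cref{rrule:delcyc} that removes a cycle $C$ from $R$, and set $R'' := R \setminus C$. I~would show that the EEs of $(G, R, \cost)$ and of $(G, R'', \cost)$ coincide. Balance is immediate because every vertex is incident to an even number of edges of $C$, so $\eG{G}{R \uplus S}$ and $\eG{G}{R'' \uplus S}$ have identical vertex parities. For connectivity, the rule's hypothesis implies two facts: $V(C) \cap V(R'') \neq \emptyset$ (else the connected component of $\eG{G}{R}$ containing $C$ vanishes, decreasing the component count), and the vertices of $V(C) \cap V(R'')$ lie in one connected component of $\eG{G}{R''}$ (else the count increases). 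Together these let us reroute any path in $\eG{G}{R \uplus S}$ traversing a $C$-edge through $\eG{G}{R''}$; conversely, re-adding $C$ to $\eG{G}{R'' \uplus S}$ only adds edges and attaches any newly appearing vertex in $V(C)$ via $C$ itself to $V(C) \cap V(R'')$. Iterating this equivalence across repeated applications of the rule proves~(i).

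For~(ii), I~would show that once no cycle in $\eG{G}{R'}$ is removable via the rule, each connected component $K$ on $k$ vertices has $m \leq \max\{1, 2k-2\}$ edges. For $k = 1$, any second loop is a cycle whose removal preserves the single component, so $m \leq 1$. For $k \geq 2$, decompose $K$ into its $2$-edge-connected components $D_1, \ldots, D_p$ joined by $p - 1$ bridges; I~prove $m(D_i) \leq 2k_i - 2$ (with $k_i := |V(D_i)| \geq 2$) via an ear-decomposition analysis, using that in a stable~$K$, the internal vertices of each length-$\geq 2$ ear must be cut vertices of~$K$ (else the ear-cycle would be removable in $\eG{G}{R'}$), while length-$1$ ears (chords) combine with a base-cycle path to yield a further removable cycle. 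These constraints together bound the number of ears of $D_i$ by $k_i - 2$. Summing, $m \leq (p - 1) + \sum_i (2k_i - 2) = 2k - p - 1 \leq 2k - 2$, since $p \geq 1$ whenever $k \geq 2$.

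For the $O(|R|)$ running time, a single Tarjan-style DFS computes the block decomposition, bridges, and ear decompositions of each $2$-edge-connected component of $\eG{G}{R}$ in $O(|V(R)| + |R|) \subseteq O(|R|)$ total time. Each detected ``innermost'' ear contributes a bounded number of edges to delete from $R$, and since each removal strictly decreases $|R|$, the total work across all applications remains $O(|R|)$.

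The main obstacle will be the per-component structural bound $m(D_i) \leq 2k_i - 2$ in~(ii). Although the bound is in fact loose (the tight bound appears to be $2k_i - 3$ for $k_i \geq 3$, achieved by ``fans'' of triangles sharing a common edge), the ear-decomposition analysis still requires delicate reasoning about the interaction between a $2$-edge-connected component's internal structure and the external bridges of~$K$ that attach to it, in particular to rule out length-$1$ ears (chords) that would otherwise enable a removable cycle and thus contradict exhaustive application of the rule.
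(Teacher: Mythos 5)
Your approach diverges from the paper's, and the divergence introduces a genuine gap in part~(i). You claim that for \emph{any} single application of \cref{rrule:delcyc} the \EE{}s of $(G,R,\cost)$ and $(G,R\setminus C,\cost)$ coincide, but the direction ``\EE{} for $R$ implies \EE{} for $R\setminus C$'' fails whenever removing~$C$ deletes from $V(R)$ a vertex that an \EE{} uses as a through-vertex. Concretely, let $R$ consist of a triangle $C$ on $\{a,b,c\}$, a loop at~$a$, and a loop at~$d$; both $\eG{G}{R}$ and $\eG{G}{R\setminus C}$ have two connected components, so the rule may remove~$C$. The multiset~$S$ consisting of two parallel copies of $\{b,d\}$ is an \EE{} for $(G,R,\cost)$ (the resulting graph is connected and all degrees are even), but $\eG{G}{(R\setminus C)\uplus S}$ has the two components $\{a\}$ and $\{b,d\}$ and is therefore not Eulerian. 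Your rerouting argument breaks down exactly here: a path of $\eG{G}{R\uplus S}$ may leave~$C$ through an $S$-edge at a vertex of $V(C)\setminus V(R\setminus C)$, and such a vertex need not lie in the single component of $\eG{G}{R\setminus C}$ containing $V(C)\cap V(R\setminus C)$. Since the ear-based removals you propose for part~(ii) can delete degree-two internal ear vertices and hence produce exactly this situation, the gap is not cosmetic.

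The paper avoids the problem by not applying the rule exhaustively: for each connected component it fixes a DFS spanning tree~$T_i$ and deletes only cycles lying entirely in $R_i\setminus T_i$. This guarantees $V(R')=V(R)$ together with identical component vertex sets and identical degree parities, from which both directions of~(i) are immediate, and it makes~(ii) trivial ($k-1$ tree edges plus an acyclic leftover of at most $k-1$ edges); it also yields the $O(|R|)$ running time by a single pass per component, whereas your ``recompute ear decompositions after each removal'' accounting is not obviously linear. Incidentally, the same spanning-tree observation would give you~(ii) for free in any instance where the rule no longer applies: if some component on $k\geq 2$ vertices had more than $2k-2$ edges, its at least $k$ non-tree edges would contain a cycle that is removable while the tree keeps the component intact, so your ear-decomposition analysis is unnecessary. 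To salvage your scheme you would need to restrict the rule to cycles whose removal leaves $V(R)$ (and the component vertex sets) unchanged, or prove~(i) only for the specific $R'$ your procedure outputs rather than for every application of the rule.
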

\begin{proof}
  We apply \cref{rrule:delcyc} as follows.
  For $i\in\{1,\dots,c\}$,
  let $R_i\subseteq R$~be the set of required
  edges in the $i$-th connected component of~$\eG{G}{R}$.
  In $O(|R_i|)$~time,
  one can compute
  a depth\hyp first search tree~$T_i$ of~$\eG{G}{R_i}$,
  which is a spanning tree of~$\eG{G}{R_i}$.
  Now we remove all cycles from~$\eG{G}{R_i\setminus T_i}$ as follows.
  We start a depth\hyp first search on~$\eG{G}{R_i\setminus T_i}$.
  Whenever we meet a vertex~$v$ a second time,
  we backtrack to the previous occurrence of~$v$,
  deleting all visited edges from the graph on the way.
  This procedure removes
  all cycles from~$\eG{G}{R_i\setminus T_i}$
  and looks at each edge in~$R_i\setminus T_i$
  at most twice,
  thus works
  in $O(|R_i|)$~time.
  
  \eqref{cyc2}
  Any two vertices
  are connected in~$\eG{G}{R}$
  if and only if they
  are connected in~$\eG{G}{R'}$.
  Moreover,
  the balance
  of each vertex
  is the same in~$\eG{G}{R}$ and~$\eG{G}{R'}$.

  \eqref{cyc1}
  Each component of~$\eG{G}{R'}$
  with $k=1$~vertex has one edge (a loop).
  Each component of~$\eG{G}{R'}$
  with $k>1$~vertices
  consists of \(k-1\)~edges of a spanning tree~$T_i$
  for some~$i\in\{1,\dots,c\}$
  and at most \(k-1\)~additional edges,
  otherwise they would contain a cycle.
\end{proof}

\subsection{Reducing the number of balanced vertices}
\label{sec:ba}

In this section,
we present a data reduction rule
that removes balanced vertices.
To this end,
we introduce an operation
that allows us to remove balanced vertices
while maintaining the balance
of their neighbors.

First,
the following lemma in particular shows
that removing a balanced
vertex with all its incident edges
changes the balance of
an even number of vertices.
This allows us to restore
their original balance
by adding a matching to the set
of required edges,
not increasing
the total weight of required edges.
This will be crucial to prove
that our reduction rules
maintain approximation factors.

\begin{lemma}
  \label{lem:sew}
  Let $\Gamma=(V,E)$~be a graph,
  \(\cost\colon \{\{u,v\}\mid u,v\in V\}\to\N\)
  satisfy the triangle inequality,
  and
  $F$~be an even\hyp cardinality submultiset
  of edges incident to a common vertex~$v\in V$.
  Then
  \begin{compactenum}[(i)]
  \item\label{sew1}
    The set~$U\subseteq V\setminus\{v\}$
    of vertices
    incident to an odd number of edges of~$F$
    has even cardinality.
    
  \item\label{sew2} For any
  matching~$M_v$ in the complete graph on~$U$,
  \(\cost(M_v)\leq\cost(F)\) and $|M_v|\leq|F|$.
  \end{compactenum}
\end{lemma}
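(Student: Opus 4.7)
The plan is to handle the two parts separately. For~(i), I would apply the handshaking lemma to the multigraph $(V,F)$: the total vertex degree equals $2|F|$, so the number of odd-degree vertices is even. Since every edge of $F$ is incident to $v$ and $|F|$ is even, $v$'s degree in $F$ is even as well (each loop at $v$ contributes $2$ and each non-loop edge contributes $1$ to $\deg_F(v)$, so $\deg_F(v)$ shares parity with $|F|$). Hence $v \notin U$, and the odd-degree vertices in $(V,F)$ are exactly the elements of $U$, whose cardinality is therefore even.

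For~(ii), the cardinality bound is immediate: each $u \in U$ has odd and thus positive degree in $F$, so at least one non-loop edge $\{v,u\} \in F$ is incident to it; different $u$'s give edges with different other endpoints, hence $|U| \le |F|$, and any matching on $U$ uses at most $|U|/2 \le |F|$ edges.

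For the cost bound, I would first apply the triangle inequality edge by edge: for every $\{u_i,u_j\} \in M_v$,
\[
  \cost(\{u_i,u_j\}) \le \cost(\{v,u_i\}) + \cost(\{v,u_j\}).
\]
Summing and using that the vertices incident to edges of $M_v$ are pairwise distinct, I obtain $\cost(M_v) \le \sum_{u \in V(M_v)} \cost(\{v,u\})$. The key charging step is to pick, for each $u \in V(M_v) \subseteq U$, a representative copy $e_u \in F$ of an edge incident to $u$; since the representatives have pairwise distinct other endpoints, they are distinct copies of edges in the multiset $F$. Then $\cost(\{v,u\}) = \cost(e_u)$, and summing yields $\cost(M_v) \le \sum_{u \in V(M_v)} \cost(e_u) \le \cost(F)$.

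The main potential subtlety is the multiset bookkeeping with parallel edges and possible loops at $v$: one must be careful to verify that $v$'s degree is even (which is where the evenness of $|F|$ is really used in~(i)) and that each matched vertex can be charged to its own distinct copy inside~$F$. Once that is set up, both bounds in~(ii) follow cleanly from the triangle inequality.
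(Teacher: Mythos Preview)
Your argument is correct and essentially identical to the paper's: part~(i) via the handshaking lemma together with the evenness of~$|F|$, and part~(ii) by charging each matching edge $\{x_i,y_i\}$ to two distinct edges $\{v,x_i\},\{v,y_i\}\in F$---the paper phrases this same charging as exhibiting pairwise edge-disjoint paths $p_i=(x_i,v,y_i)$ in~$\Gamma\langle F\rangle$.

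One small slip: your parenthetical parity computation in the presence of loops does not actually go through. With $\ell$ loops at~$v$ and $m$ non-loop edges in~$F$, you have $\deg_F(v)=2\ell+m$ but $|F|=\ell+m$, and these differ in parity whenever $\ell$ is odd; so ``$|F|$ even'' alone does not force $\deg_F(v)$ even. The paper's one-line justification has the same gap. In the lemma's only application (vertex extraction), the hypothesis actually available is that $v$ is \emph{balanced}, i.e., $\deg_F(v)$ is even, which is what the handshaking step really needs.
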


\begin{proof}
  \eqref{sew1}
  Any graph, in particular~$\eG{\Gamma}{F}$,
  has an even number of odd\hyp degree vertices.
  Since $|F|$~is even,
  $v$~is not one of them.

  \eqref{sew2}
  Let $e_i:=\{x_i,y_i\}$
  for~$i\in\{1,\dots,|M_v|\}$
  be the edges of~$M_v$.
  Then there are
  pairwise edge\hyp disjoint
  paths~$p_i:=(x_i,v,y_i)$
  for~$i\in\{1,\dots,|M_v|\}$
  in~$\eG{\Gamma}{F}$.
  Thus
  \[
    \cost(M_v)=\sum_{i=1}^{|M_v|}\cost(e_i)\leq \sum_{i=1}^{|M_v|}\cost(p_i)\leq \cost(F).
  \]
  Finally,
  $|M_v|\leq |U|\leq|F|$.\end{proof}

\noindent
We now use \cref{lem:sew}
to define an operation that allows
us to remove a balanced vertex from~$\eG{G}{R}$.
It is illustrated in
\cref{fig:extr}.

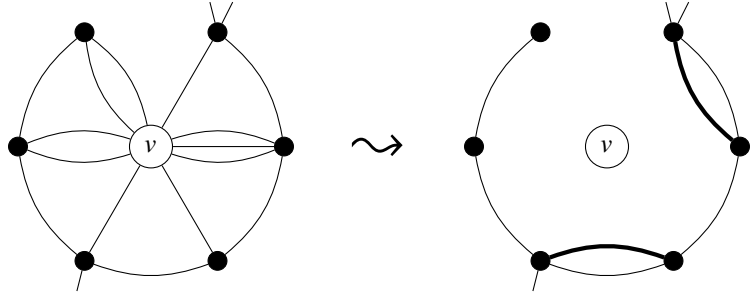
\begin{figure}[t]
  \centering
\begin{tikzpicture}
\def\xr{1}
	\def\yr{1}
	\def\dlea{1.75}
	\tikzstyle{xnode}+=[circle, fill, scale=3/4, draw]
	\tikzstyle{xedge}+=[-]

	\begin{scope}
		\node (v) at (0,0)[circle,draw]{$v$};
		\foreach \j in {1,...,6}{
				   \node (v\j) at ($(v)+ (\j*360/6:\dlea cm)$)[xnode]{};
	 		 }
		\draw[xedge] (v) to (v1);
		\draw[xedge] (v) to [bend left=20](v2);
		\draw[xedge] (v) to [bend right=20](v2);
		\draw[xedge] (v) to [bend left=20](v3);
		\draw[xedge] (v) to [bend right=20](v3);
		\draw[xedge] (v) to (v4);
		\draw[xedge] (v) to (v5);
		\draw[xedge] (v) to [bend left=20](v6);
		\draw[xedge] (v) to (v6);
		\draw[xedge] (v) to [bend right=20](v6);
	
		\draw[xedge] (v1) to [bend left=20](v6);
		\foreach \j in {6,...,3} { \pgfmathtruncatemacro{\i}{\j -1}; \draw[xedge] (v\j) to [bend left=20](v\i); }
				
		\draw[xedge] (v1) to ($(v1)+(-0.1*\xr,0.4*\yr)$);		
		\draw[xedge] (v1) to ($(v1)+(0.2*\xr,0.4*\yr)$);
		\draw[xedge] (v4) to ($(v4)+(-0.1*\xr,-0.4*\yr)$);		
	\end{scope}

	\node at (3*\xr,0)[scale=2]{$\leadsto$};

	\begin{scope}[xshift=6*\xr cm]
		\node (v) at (0,0)[circle,draw]{$v$};
		\foreach \j in {1,...,6}{
				   \node (v\j) at ($(v)+ (\j*360/6:\dlea cm)$)[xnode]{};
	 		 }
	
		\draw[xedge] (v1) to [bend left=20](v6);
		\draw[xedge,ultra thick] (v1) to [bend right=20](v6);
		\foreach \j in {6,...,3} { \pgfmathtruncatemacro{\i}{\j -1}; \draw[xedge] (v\j) to [bend left=20](v\i); }
		\draw[xedge,ultra thick] (v5) to [bend right=20](v4);

		\draw[xedge] (v1) to ($(v1)+(-0.1*\xr,0.4*\yr)$);		
		\draw[xedge] (v1) to ($(v1)+(0.2*\xr,0.4*\yr)$);
		\draw[xedge] (v4) to ($(v4)+(-0.1*\xr,-0.4*\yr)$);		
	\end{scope}
	
\end{tikzpicture}
\caption{Illustration of \cref{def:extr}\eqref{extr2}.
  Only required edges are shown.
  Thick edges on the right
  are the added matching~$M_v$.}
  \label{fig:extr}
\end{figure}

\begin{definition}[vertex extraction]
  \label{def:extr}
  Let $(G,R,\cost)$~be an RPP instance
  with $\cost$~satisfying the triangle inequality,
  $v$~be a vertex that
  \begin{compactitem}
  \item is balanced in a connected component of~$\eG{G}{R}$
    with at least three vertices and
  \item not a cut vertex of~$\eG{G}{R}$ or contained in exactly two blocks of~$\eG{G}{R}$,
  \end{compactitem}
  and let $R_v\subseteq R$~be
  the required edges incident to~$v$.
  The result of \emph{extracting~$v$ from~$\eG{G}{R}$} is defined as follows.
  \begin{compactenum}[(a)]
\item
    \label{extr2}
    If $v$~is not a cut vertex of~$\eG{G}{R}$,
    then let $M_v$~be any perfect matching on
    the set of vertices
    incident to an odd number of edges of~$R_v$.
    The result of extracting~$v$ is $R'=(R\setminus R_v)\uplus M_v$.
  \item
    \label{extr3}
    If $v$~is a cut vertex of~$\eG{G}{R}$
    contained in exactly two blocks~$A$
    and~$B$ of~$\eG{G}{R}$,
    then let $a$~be a neighbor of~$v$ in~$A$,
    $b$~be a neighbor of~$v$ in~$B$,
    and $R'=(R\setminus\{\{a,v\},\{b,v\}\})\uplus\{a,b\}$.
    \begin{compactenum}[1.]
    \item\label{extr3a}     If $v$~is not contained in~$\eG{G}{R'}$,
    then $R'$ is the result of extracting~$v$.
  \item\label{extr3b}     Otherwise,
    $v$~is not a cut vertex of~$\eG{G}{R'}$
    and the result of extracting~$v$ from~$\eG{G}{R}$
    is defined as the result of extracting~$v$ from~$\eG{G}{R'}$.
  \end{compactenum}
  \end{compactenum}
\end{definition}

\begin{lemma}
  \label{lem:extrcorrect}
  Let $(G,R,\cost)$~be
  an RPP instance
and $R'$~be the result
  of extracting a balanced vertex~$v$ of~$\eG{G}{R}$.
  Then the following properties hold.
  \begin{compactenum}[(i)]
  \item
    \label{extrcorr1b}
    \(V(R')= V(R)\setminus\{v\}\).
  \item
    \label{extrcorr1}
    \(\cost(R')\leq\cost(R)\) and $|R'|\leq|R|$.
  \item
    \label{extrcorr2}
    Each vertex of~$\eG{G}{R'}$
    is balanced if and only
    if it is balanced in~$\eG{G}{R}$.
  \item
    \label{extrcorr3}
    Two vertices of~$\eG{G}{R'}$
    are connected if and only if
    they are so in~$\eG{G}{R}$.
  \item
    \label{extrcorr1c}
    Any multiset~$S$
    of edges with~$V(S)\subseteq V(R')$
    is an \EE{} for~\((G,R',\cost)\)
    if and only if it is one for~$(G,R,\cost)$.
  \end{compactenum}
\end{lemma}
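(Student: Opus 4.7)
The plan is to exploit the recursive structure of \cref{def:extr}: prove all five properties first for the non-cut-vertex case~(a), then for subcase (b.1) by a direct triangle-inequality argument, and finally for subcase (b.2) by composing a block-merging preprocessing step with case~(a) applied to the intermediate instance. Property~(v) is saved for the end since it follows uniformly from (iii) and (iv) together with the balance of~$v$.

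For case~(a), properties~(i) and~(ii) fall out of the definition together with \cref{lem:sew}: the matching $M_v$ lives entirely on $V(R)\setminus\{v\}$, giving $v\notin V(R')$; and every neighbor of~$v$ has an $R$-edge to some vertex other than~$v$, because otherwise removing~$v$ would strand that neighbor as an isolated component, contradicting the non-cut-vertex hypothesis (in a component of size at least three). Property~(iii) reduces to observing that a vertex $u\neq v$ belongs to~$V(M_v)$ iff it is incident to an odd number of edges of~$R_v$, so the net change of its degree is always even. For~(iv), deleting~$v$ does not split any component by the non-cut-vertex hypothesis, so connections in~$\eG{G}{R}$ among vertices of $V(R)\setminus\{v\}$ survive in $\eG{G}{R\setminus R_v}\subseteq\eG{G}{R'}$; conversely, every matching edge $\{x,y\}\in M_v$ of~$R'$ corresponds to the path $x,v,y$ in~$\eG{G}{R}$, so no spurious connections appear.

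For case~(b), I would first analyse the common intermediate set $\bar R:=(R\setminus\{\{a,v\},\{b,v\}\})\uplus\{\{a,b\}\}$. Properties~(ii)--(iv) transfer to $\bar R$ locally: the triangle inequality yields $\cost(\bar R)\le\cost(R)$ and $|\bar R|\le|R|$, the degrees of $a$ and $b$ change by zero modulo two while no other degree changes, and every $a$-$v$-$b$ subpath in~$\eG{G}{R}$ can be rerouted via the shortcut edge~$\{a,b\}$ in~$\eG{G}{\bar R}$. In subcase~(b.1) this suffices because $v$ disappears from $\eG{G}{\bar R}$. In subcase~(b.2) I would check that case~(a) applies to $\bar R$: the degree of~$v$ drops by exactly two, so $v$ remains balanced; $|A\cup B|\ge 3$ keeps the component large enough; and crucially $v$ is no longer a cut vertex of $\eG{G}{\bar R}$, because the block-cut-tree identity tells us that removing~$v$ from $\eG{G}{R}$ had split its component into exactly the two pieces $A\setminus\{v\}$ and $B\setminus\{v\}$, which the new edge $\{a,b\}$ now rejoins via the neighbors~$a$ and~$b$. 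Composing the two transformations and chaining (i)--(iv) concludes this case.

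Property~(v) then follows without reopening the case analysis: for balance, each $u\in V(R')$ has degree of the same parity in~$\eG{G}{R\uplus S}$ and~$\eG{G}{R'\uplus S}$ by~(iii), while $v$ has degree $d_R(v)+d_S(v)=d_R(v)$ in~$\eG{G}{R\uplus S}$, which is even by assumption, and $v\notin V(R'\uplus S)$ because $V(S)\subseteq V(R')$; for connectivity, the equivalence relation on $V(R')$ obtained by taking the transitive closure of $\eG{G}{R}$-connectivity together with the $S$-edges coincides, by~(iv) and since $S$ is the same multiset in both graphs, with the one obtained from $\eG{G}{R'}$ and~$S$, and in $\eG{G}{R\uplus S}$ the vertex~$v$ is automatically linked to its $R$-neighbors (all in~$V(R')$), so the whole graph is connected iff $V(R')$ forms a single component in~$\eG{G}{R'\uplus S}$. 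The main obstacle will be the bookkeeping in subcase~(b.2): verifying cleanly via the block-cut-tree argument that the intermediate~$\bar R$ satisfies the non-cut-vertex precondition of case~(a) at~$v$, and that composing the two steps preserves (i)--(iv) without double-counting any parity change at~$a$ or~$b$.
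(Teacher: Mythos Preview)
Your proposal is correct and follows essentially the same approach as the paper's proof: handle case~(a) directly via \cref{lem:sew} and the non-cut-vertex hypothesis, reduce both subcases of~(b) to the intermediate multiset $\bar R=(R\setminus\{\{a,v\},\{b,v\}\})\uplus\{\{a,b\}\}$ (with (b.2) then invoking case~(a) on~$\bar R$), and finally derive~(v) uniformly from~(iii),~(iv), and the balance of~$v$. Your block-cut-tree argument for why $v$ ceases to be a cut vertex in~$\bar R$ is in fact more explicit than the paper, which simply asserts this; otherwise the two proofs coincide in structure and content.
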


\begin{proof}
  \eqref{extrcorr1b}
First, assume that
  $R'$~was obtained
  according to  \cref{def:extr}\eqref{extr2}.
  Let $R_v\subseteq R$~be
  the required edges incident to~$v$
  and
  $U\subseteq V$~be
  the set of vertices incident to
  edges of~$R_v$.
Obviously,
  $V(R')\subseteq V(R)\setminus\{v\}$
  and
  $V(R')\supseteq V(R)\setminus U$.
  Moreover,
  $V(R')\supseteq U\setminus\{v\}$:
  since $v$~is in a connected component
  of~$\eG{G}{R}$ with at least three vertices
  but not a cut vertex,
  the vertices in~$U\setminus\{v\}$
  are incident to edges in~$R\setminus R_v$.
  These are retained in~$\eG{G}{R'}$.
  If $R'$~was obtained
  according to \cref{def:extr}\eqref{extr3},
  then $R'$~is the same as
  if it were obtained from
  \cref{def:extr}\eqref{extr2}
  by extracting~$v$ from~$\eG{G}{R\uplus\{\{a,b\}\}}$,
  where it is not a cut vertex.

\eqref{extrcorr1}--\eqref{extrcorr3}
  If $R'$~was obtained
  according to
  \cref{def:extr}\eqref{extr2},
  then
  \eqref{extrcorr1} and \eqref{extrcorr2}
  follow from
  \cref{lem:sew}
  applied to~$\Gamma=\eG{G}{R}$
  and~$F=R_v$,
  whereas \eqref{extrcorr3} is clear
  since $v$~is not a cut vertex of~$\eG{G}{R}$
  and $v$~is not in~$\eG{G}{R'}$.
  If $R'$~was obtained
  according to  \cref{def:extr}\eqref{extr3a},
  then \eqref{extrcorr1}--\eqref{extrcorr3}
  trivially hold since $\cost(\{a,b\})\leq\cost(\{a,v\})+\cost(\{b,v\})$
  and $v$~is not in~$\eG{G}{R'}$.
  If $R'$~was obtained
  according to  \cref{def:extr}\eqref{extr3b},
  then \eqref{extrcorr1}--\eqref{extrcorr3}
  hold since~$R'$
  is the same as extracting~$v$
  from~$\eG{G}{(R\setminus\{\{a,v\},\{b,v\}\})\uplus\{\{a,b\}\}}$,
  where it is not a cut vertex,
  and from
  $\cost(\{a,b\})\leq\cost(\{a,v\})+\cost(\{b,v\})$.

  \looseness=-1
  \eqref{extrcorr1c}
  We show that $\eG{G}{R\uplus S}$
  is connected and balanced
  if and only if~$\eG{G}{R'\uplus S}$ is.

  \looseness=-1
  \textit{Connectivity.}
  By \eqref{extrcorr3},
  two vertices of~$V(R')$
  are connected in~$\eG{G}{R'}$
  if and only if they are connected in~$\eG{G}{R}$.
  Since $V(S)\subseteq V(R')\subseteq V(R)$
  by \eqref{extrcorr1b},
  two vertices in~$V(R')=V(R'\uplus S)$
  are connected
  in~$\eG{G}{R'\uplus S}$
  if and only if
  they are connected in~$\eG{G}{R\uplus S}$.
  By \eqref{extrcorr1b},
  the only vertex of~$\eG{G}{R\uplus S}$
  that is absent from~$\eG{G}{R'\uplus S}$ is~$v$,
  which is not isolated in~$\eG{G}{R\uplus S}$
  since it is not isolated in~$\eG{G}{R}$.
  
  \textit{Balance.}
  By \eqref{extrcorr2},
  each vertex in~$V(R')$ is
  balanced in~$\eG{G}{R'}$ if and only if it is balanced in~$\eG{G}{R}$.
  Since $V(S)\subseteq V(R')\subseteq V(R)$
  by \eqref{extrcorr1b},
  each vertex in~$V(R')=V(R'\uplus S)$
  is balanced in~$\eG{G}{R'\uplus S}$
  if and only if it is balanced in~$\eG{G}{R\uplus S}$.
  By \eqref{extrcorr1b},
  the only vertex in~$\eG{G}{R\uplus S}$
  that is absent from~$\eG{G}{R'\uplus S}$
  is~$v$.
  If so,
  then $v\notin V(S)$
  and $v$~is balanced in~$\eG{G}{R\uplus S}$
  because it is balanced in~$\eG{G}{R}$.
\end{proof}

\noindent
\looseness=-1
We can now turn \cref{def:extr} into
a data reduction rule.
Its parameter~\(\pBal{}\in\Q\)
allows a trade\hyp off
between its strength
and the introduced error.

\begin{rrule}
  \label{rrule:delpath}
  \label{rrule:ballred}
  \looseness=-1
  Let $(G,R,\cost)$~be an RPP instance
  with~$G=(V,E)$,
  \(\cost\)~satisfying the triangle inequality,
  and \(\pBal{}\in\Q\).
  Let $C_i$~be the vertices
  in connected component~$i\in\{1,\dots,c\}$ of~$\eG{G}{R}$
  and $B_i\subseteq C_i$~be an
  inclusion\hyp maximal set of vertices
  such that,
  for each~$u,v\in B_i$
  with $u\ne v$, one has \(\cost(\{u,v\})>\pBal{}\).
  Finally,
  let
  \[
    B:=\bigcup_{i=1}^c B_i.
  \]
  Now,
  initially let~$R':=R$
  and,
  as long as $\eG{G}{R'}$~contains
  a vertex~$v\in V\setminus B$
  that can be extracted using \cref{def:extr},
  replace $R'$~by the result of extracting~$v$.
\end{rrule}
\noindent
We now analyze the effectivity and error of
\cref{rrule:ballred}.

\begin{lemma}
  \label{lem:ballred}
  Let $(G,R,\cost)$ be an RPP instance
  with \(\cost\)~satisfying the triangle inequality.
Then,
  \cref{rrule:delpath}
  in $O(n^3)$~time
  yields a multiset~$R'$ of edges such that
  \begin{compactenum}[(i)]
  \item\label{ballred1} $\cost(R')\leq\cost(R)$ and $V(R')\subseteq V(R)$.
  \item\label{ballred2}
    Any multiset~$S$ of edges
    with $V(S)\subseteq V(R')$
    is an \EE{}
    for~\((G,R',\cost)\)
    if and only if it is one for~\((G,R,\cost)\).
  \item\label{ballred3}
    Any \emi{} \EE{}~$S$
    for~\((G,R,\cost)\)
    can be turned into an \EE{}~$S'$
    for~\((G,R',\cost)\)
    such that \(\cost(S')\leq \cost(S)+2\pBal{}\cdot (2c-2)\).
    
  \item\label{ballred4}
    $\eG{G}{R'}$ contains at most
    \(2b+2c+4\cost(R)/\pBal{}\)~vertices.
  \end{compactenum}
\end{lemma}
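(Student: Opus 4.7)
\medskip
\noindent\textbf{Proof proposal.}
The plan is to analyze \cref{rrule:ballred} by tracking what happens during the sequence of vertex extractions. Let $R=R^{(0)},R^{(1)},\dots,R^{(k)}=R'$ be the sequence of required-edge multisets produced by the rule, where $R^{(j+1)}$ is the result of extracting some vertex $v^{(j)}\in V(R^{(j)})\setminus B$. Parts~\eqref{ballred1} and~\eqref{ballred2} then follow by iterated application of \cref{lem:extrcorrect}. Specifically, \cref{lem:extrcorrect}\eqref{extrcorr1b}\eqref{extrcorr1} gives $V(R^{(j+1)})=V(R^{(j)})\setminus\{v^{(j)}\}$ and $\cost(R^{(j+1)})\leq\cost(R^{(j)})$, yielding~\eqref{ballred1} by induction. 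For~\eqref{ballred2}, note that any $S$ with $V(S)\subseteq V(R')=V(R^{(k)})$ satisfies $V(S)\subseteq V(R^{(j)})$ for every~$j$, so \cref{lem:extrcorrect}\eqref{extrcorr1c} applied along the chain gives the equivalence.

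Part~\eqref{ballred3} is the main substance. Given an edge-minimizing \EE{}~$S$ for $(G,R,\cost)$, the plan is to construct $S'$ by rerouting edges away from extracted vertices to their representative in~$B$. By \cref{obs:allR}, $V(S)\subseteq V(R)$, and by \cref{lem:2C-2} the set~$X$ of balanced vertices of $\eG{G}{R}$ incident to~$S$ satisfies $|X|\leq 2c-2$; by \cref{rem:one-or-two}, each $v\in X$ is incident to exactly two edges of~$S$. For each $v\in X\setminus B$, pick $b_v\in B$ lying in the same connected component of~$\eG{G}{R}$ with $\cost(\{v,b_v\})\leq \pBal{}$; such~$b_v$ exists by inclusion\hyp maximality of~$B_i$. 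Define~$S'$ from~$S$ by replacing, in every edge, each endpoint lying in $X\setminus B$ by its representative. Then $V(S')\subseteq V(R')$ because $B\subseteq V(R')$ (the extraction of non-cut vertices, and of cut vertices in exactly two blocks, never deletes a vertex that is not itself being extracted). Balance is preserved: at each $v\in X\setminus B$ two incident $S$-edges are removed and two incident $S'$-edges at $b_v$ are added, keeping all degrees of $v$, $b_v$, and the other endpoints congruent $\bmod\,2$; connectivity is preserved because $v$ and $b_v$ lie in the same component of $\eG{G}{R}$. Thus $S'$ is an \EE{} for $(G,R,\cost)$ and, by~\eqref{ballred2}, also for $(G,R',\cost)$. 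The cost bound follows from the triangle inequality: each replacement of an endpoint in $X\setminus B$ costs at most~$\pBal{}$ per edge endpoint; each $v\in X\setminus B$ contributes exactly two such endpoints, so the total cost increase is at most $2\pBal{}\cdot|X\setminus B|\leq 2\pBal{}(2c-2)$.

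For part~\eqref{ballred4}, decompose $V(R')=B\;\dot\cup\;(V(R')\setminus B)$ and bound each. To bound~$|B|$: in each component~$i$, a spanning tree of $\eG{G}{R_i}$ has weight at most~$\cost(R_i)$ and, doubled, yields a closed walk of weight $\leq 2\cost(R_i)$ visiting every vertex of~$C_i$. On this walk, any two first-visits to vertices of~$B_i$ are separated by walk weight $>\pBal{}$ by the triangle inequality and the defining property of~$B_i$, so $(|B_i|-1)\pBal{}<2\cost(R_i)$, giving $|B|\leq c+2\cost(R)/\pBal{}$. To bound $|V(R')\setminus B|$, observe that each such vertex failed to satisfy the preconditions of \cref{def:extr}: it is either unbalanced in $\eG{G}{R'}$ (at most~$b$ such vertices by \cref{lem:extrcorrect}\eqref{extrcorr2}), balanced in a component of fewer than three vertices (at most $2c$, since \cref{lem:extrcorrect}\eqref{extrcorr3} preserves the component count~$c$), or a balanced cut vertex lying in at least three blocks (bounded via a block-cut tree argument: $\sum_v b(v)=p+q-1$ over cut vertices~$v$ with $b(v)$ the number of containing blocks, which shows that such ``heavy'' cut vertices are absorbed by the other summands up to the stated slack). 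Putting the bounds together yields $|V(R')|\leq 2b+2c+4\cost(R)/\pBal{}$. The $O(n^3)$~time bound follows from computing all pairwise metric distances once (e.g.\ Floyd\hyp Warshall), constructing each $B_i$ greedily, and performing the at most~$n$ extractions, each requiring a block-cut decomposition and a small matching computation in $O(n^2)$ time.

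The hardest step is expected to be the counting in part~\eqref{ballred4}: the spanning-tree bound on $|B|$ and the unbalanced count are routine, but carefully bounding the balanced vertices that are non-extractable cut vertices (without introducing additional dependence on~$n$) requires the block-cut tree analysis and is where the factors in $2b+2c+4\cost(R)/\pBal{}$ are actually fixed. The correctness of part~\eqref{ballred3} is conceptually delicate but straightforward once one applies \cref{obs:allR}, \cref{lem:2C-2}, and \cref{rem:one-or-two} to restrict attention to the $\leq 2c-2$ problematic balanced vertices.
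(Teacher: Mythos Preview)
Your treatment of \eqref{ballred1}--\eqref{ballred3} is essentially the paper's proof. The only cosmetic difference in \eqref{ballred3} is that you reroute \emph{every} balanced $S$-endpoint outside~$B$ to its $B$-representative, whereas the paper reroutes only those lying in $V(R)\setminus V(R')$; since both sets have size at most $2c-2$ by \cref{lem:2C-2}, the cost bound is identical.

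The gap is in \eqref{ballred4}. Your decomposition $V(R')=B\,\dot\cup\,(V(R')\setminus B)$ is workable, but the block--cut tree step you sketch (``$\sum_v b(v)=p+q-1$ \dots absorbed by the other summands up to the stated slack'') does not by itself bound the number of heavy cut vertices in the right currency, and the arithmetic you indicate does not close to $2b+2c+4\cost(R)/\pBal{}$. The paper instead partitions $V(R')$ into $X$ (imbalanced), $Y$ (balanced and in~$B$), and $Z$ (balanced and not in~$B$), and proves the clean per-component inequality $|Z_i|\le |X_i|+|Y_i|$. The missing observation is this: in a component with at least three vertices, every $Z$-vertex is a cut vertex lying in at least three blocks (otherwise it would have been extracted), hence has degree $\ge 3$ in the block--cut tree~$T_i$; the \emph{leaves} of $T_i$ are blocks, and every leaf block contains a non-cut vertex of $\eG{G}{R'}$, which cannot be in $Z_i$ (a balanced non-cut vertex outside~$B$ would have been extracted) and therefore lies in $X_i\cup Y_i$. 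Distinct leaf blocks contribute distinct such vertices, so $T_i$ has at most $|X_i|+|Y_i|$ leaves, whence the number of degree-$\ge 3$ vertices in $T_i$ is at most $|X_i|+|Y_i|-1$. Summing gives $|Z|\le |X|+|Y|$, so $|V(R')|=|X|+|Y|+|Z|\le 2(|X|+|Y|)\le 2b+2|B|\le 2b+2c+4\cost(R)/\pBal{}$, using your own bound $|B|\le c+2\cost(R)/\pBal{}$. Once you plug in this leaf-block argument, your proof goes through; without it, the constants do not fall out.

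A minor remark on the running time: the triangle inequality is already assumed in the hypothesis, so no Floyd--Warshall step is needed. The $O(n^3)$ bound comes from performing up to $n$ extractions, each requiring an $O(n^2)$ recomputation of the block decomposition of $\eG{G}{R'}$.
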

\begin{proof}
  \eqref{ballred1} and \eqref{ballred2}
  follow from \cref{lem:extrcorrect}
  since $R'$~is the result
  of a sequence of vertex extractions.

  \eqref{ballred3}
  We turn~$S$
  into an \EE{}~$S'$
  with~$V(S')\subseteq V(R')$
  and then apply~\eqref{ballred2}.
  First,
  since $S$~is \emi{}
  and \(\cost\)~satisfies the triangle inequality,
  by \cref{obs:allR},
  $V(S)\subseteq V(R)$.
  By \cref{rrule:ballred},
  the vertices in~$X:=V(R)\setminus V(R')$
  are not in~$B$
  and, thus,
  for each~$v\in X\cap C_i$,
  we find a vertex~$v'\in B_i$
  such that~$\cost(\{v,v'\})\leq\pBal{}$.
  Note that~$v'\in V(R')$.
  Since each vertex in~$X$
  is balanced in~$\eG{G}{R}$,
  by \cref{rem:one-or-two},
  each vertex~$v\in X\cap V(S)$
  is incident to exactly \emph{two}
  edges~$\{v,u\}$ and~$\{v,w\}$ of~$S$
  (possibly, $u=w$).
  Since $\{v,v'\}\subseteq C_i$,
  $S':=(S\setminus\{\{v,u\},\{v,w\}\})\uplus \{v',u\}\uplus\{v',w\}$
  is also an \EE{} for~$(G,R,\cost)$.
  Moreover,
  \(\cost(S')\leq\cost(S)+2\pBal{}\).
  Doing this replacement
  for each~$v\in X\cap V(S)$,
  we finally obtain an \EE{}~$S'$
  for~$(G,R,\cost)$
  with $V(S')\subseteq V(R')$
  and \(\cost(S')\leq\cost(S)+2\pBal{}\cdot|X\cap V(S)|\).
  Since each vertex in~$X$ is balanced in~$\eG{G}{R}$,
  by \cref{lem:2C-2},
  $|X\cap V(S)|\leq 2c-2$.
  Finally,
  by \eqref{ballred2},
  $S'$~is an \EE{} for~$(G,R',\cost)$.

  \eqref{ballred4}
  The vertices of~$\eG{G}{R'}$ can be partitioned
  into~$X\uplus Y\uplus Z$,
  where $X$~are imbalanced in~$\eG{G}{R'}$,
  $Y$~are balanced and in~$B$,
  and $Z$~are balanced but not in~$B$.
  
  By \cref{lem:extrcorrect}\eqref{extrcorr2},
  the vertices in~$X$ are imbalanced in~$\eG{G}{R}$ also.
  Thus,
  \begin{equation}
    |X|\leq b.\label{xleqb}
  \end{equation}
  We next analyze~$|Y|$.
  For $i\in \{1,\dots,c\}$,
  let~$R_i\subseteq R$ be
  the edges between vertices in~$C_i$,
  $T_i^*$~be the edge set of a
  tree of least weight in~$\eG{G}{R_i}$
  connecting all vertices in~$B_i$,
  $T_i$~be the edge set of
  a minimum\hyp weight spanning tree in~$G[B_i]$,
  and
  $H_i$~be the edge set of a
  minimum\hyp weight Hamiltonian cycle in~$G[B_i]$.
  Doubling all edges of~$T_i^*$
  yields a closed walk
  in~$\eG{G}{R_i}$ containing the vertices in~$B_i$.
  Using the triangle inequality of~$\cost$,
  it can be shortcut to a Hamiltonian cycle in~$G[B_i]$.
  Thus,
  $\cost(T_i)\leq\cost(H_i)\leq 2\cost(T_i^*)$.\footnote{That is,
    $T_i$~is the folklore 2-approximation
    of a Steiner tree with terminals~$B_i$ in~$\eG{G}{R_i}$.}
  We thus get
  \begin{align}
    (|B_i|-1)\pBal{}&=\sum_{e\in T_i}\pBal{}<\sum_{e\in T_i}\cost(e)=
                     \cost(T_i)\leq2\cost(T_i^*)\leq 2\cost(R_i)\notag{}\text{\quad and thus}\\
    |Y|\leq|B|&=\sum_{i=1}^c|B_i|<\sum_{i=1}^c\Biggl(\frac{2\cost(R_i)}\pBal{}+1\Biggr)=2\omega(R)/\pBal{}+c.
                \label{yleqr}
  \end{align}
  \looseness=-1
  Finally,
  we analyze~$|Z|$.
  \cref{def:extr} is not applicable to any vertex~$v\in Z$,
  since it would have been removed by \cref{rrule:ballred}.
  Thus,
  $v$~is a cut vertex contained
  in at least three blocks of~$\eG{G}{R'}$
  or its connected component of~$\eG{G}{R'}$
  consists of only two vertices.
  To analyze~$|Z|$,
  for each~$i\in\{1,\dots,c\}$,
  consider $X_i:=X\cap C_i$,
  $Z_i:=Z\cap C_i$,
  the set $R_i'\subseteq R'$ of edges
  between vertices in~$C_i$,
  and the \emph{block-cut tree}~$T_i$ of~$\eG{G}{R_i'}$:
  the vertices of~$T_i$
  are the cut vertices
  and the blocks of~$\eG{G}{R_i'}$
  and there is an edge between a cut vertex~$v$
  and a block~$A$ of~$\eG{G}{R_i'}$
  in~$T_i$ if $v$~is contained in~$A$.
  Then either $|Z_i|\leq 2$ or
  the vertices in~$Z_i$
  have degree at least three in~$T_i$.
Therefore,
  $T_i$~has at most $|X_i|+|Y_i|$~leaves.
  Since a tree has at least two leaves,
  we get $|X_i|+|Y_i|\geq 2$.
  Moreover,
  since a tree with $\ell$~leaves
  has at most $\ell-1$~vertices
  of degree three,
  $|Z_i|\leq\max\{2,|X_i|+|Y_i|-1\}\leq |X_i|+|Y_i|$.
  Thus,
  \begin{align}
    |Z|=\sum_{i=1}^c|Z_i|\leq |X|+|Y|.
  \label{zleqxy}
  \end{align}
  Combining \eqref{xleqb}, \eqref{yleqr},
  \eqref{zleqxy},
  and that $|V(R')|=|X|+|Y|+|Z|$,
  \eqref{ballred4} follows. 
    
  We finally analyze the running time of \cref{rrule:ballred}.
  For \(i\in\{1,\dots,c\}\),
  all sets $C_i$ and~$B_i$
  can be computed in~$O(n^2)$~time.
  Also
  the blocks of~$\eG{G}{R'}$
  required by \cref{def:extr} are computable
  in $O(n^2)$~time using depth\hyp first search.
  Thus,
  in $O(n)$~time,
  we can find a vertex~$v$
  to which \cref{def:extr} applies.
  Vertex~$v$ can then be extracted in $O(n)$~time
  since the matchings~$M_v$
  in \cref{def:extr} can be chosen arbitrarily,
  that is,
  in particular greedily in $O(n)$~time,
  and the blocks can be recomputed in $O(n^2)$~time.
Finally,
  we extract at most $n$~vertices.
\end{proof}

\subsection{\boldmath A polynomial\hyp size approximate kernelization scheme for the parameter~$b+c$ (proof of \cref{thm:psaks})}
\label{sec:psaksproof}

This section proves \cref{thm:psaks}.
We describe how to transform
a given RPP instance~$I$ and~$\varepsilon>0$
into
an RPP instance~$I'$
such that any $\alpha$\hyp approximate solution
for~$I'$ can be transformed into
an $\alpha(1+\varepsilon)$\hyp approximate solution for~$I$.
Due to \cref{lem:triangle},
we assume that~$I=(G,R,\cost)$
has been preprocessed in $O(n^3)$~time
so as to satisfy the triangle inequality.

\subsubsection{Shrinking the graph}
\label{step:shrinkgraph}
Choose~$\varepsilon_1+\varepsilon_2=\varepsilon$.
Apply \cref{rrule:ballred} with
\begin{align}
  \pBal{}=\frac{\varepsilon_1\cdot \cost(R)}{4c-4},
  \label{gamma}
\end{align}
which, by \cref{lem:ballred},
in $O(n^3)$~time
gives an instance~$(G,R_1,\cost)$
with
\begin{align}
  |V(R_1)|\leq 2b+2c+\frac{16c-16}{\varepsilon_1}.
  \label{vertbound}
\end{align}
To $(G,R_1,\cost)$,
we apply \cref{rrule:delcyc},
which, by \cref{lem:delcyc},
in $O(|R|)$~time
gives an instance~$(G,R_2,\cost)$ with
\begin{align}
  R_2\subseteq R_1\text{\qquad and\qquad}
  |R_2|\leq 4b+4c+\frac{32c-32}{\varepsilon_1}.
  \label{edgebound}
\end{align}
Finally,
applying \cref{rrule:delverts} to~$(G,R_2,\cost)$
in linear time
yields an instance~$(G_2,R_2,\cost)$
such that
\begin{align}
  |V(G_2)|\leq |V(R_2)|\leq|V(R_1)|.
  \label{gvertbound}
\end{align}

\subsubsection{Shrinking edge weights}
\label{step:weights}
Since~$\eG{G}{R\uplus T}$ is connected,
due to the triangle inequality of~$\cost$,
each edge~$e=\{u,v\}$ of~$G$,
and thus of its subgraph~$G_2$,
satisfies \(\cost(e)\leq\cost(R)+\cost(T)\).
Moreover,
by \cref{lem:bounds},
any \emi{} \EE{} for~$(G_2,R_2,\cost)$
has at most~$|M|+2|T|=b/2+2c-2$~edges.
Thus,
we can apply \cref{lem:looseweight}
with
\(\pWR=\cost(R)+\cost(T)\)
and
$N=|R_2|+b/2+2c-2$
to~$(G_2,R_2,\cost)$
to get an instance~$(G_2,R_2,\cost_2)$
such that for all edges~$e$,
\begin{align}
  \cost(e)\leq \frac{|R_2|+b/2+2c-2}{\varepsilon_2}\in O((b+c)/(\varepsilon_1\varepsilon_2))
  \label{weightbound}.
\end{align}
In \cref{lem:looseweight},
set~$\mathcal F$ just contains all vectors~$x$
that encode \sol{}s~$W$
induced by \emi{} \EE{}s for~$(G_2,R_2,\cost)$
(its entries describe how often each edge is in~$W$).
We finally return~$(G_2,R_2,\cost_2)$,
whose construction takes $O(n^3+|R|)$~time,
as required by \cref{thm:psaks}.

\subsubsection{Kernel size analysis}
The returned instance satisfies
\cref{thm:psaks}\eqref{psaks1}
due to \eqref{vertbound} and \eqref{gvertbound},
\eqref{psaks2} due to \eqref{edgebound},
and \eqref{psaks3} due to \eqref{weightbound}.

\subsubsection{Approximation factor analysis}
\label{sec:factor}
It remains to prove \cref{thm:psaks}\eqref{psaks4},
that is,
that we can lift an \(\alpha\)\hyp approximate
solution for~$(G_2,R_2,\cost_2)$
to an $\alpha(1+\varepsilon)$\hyp approximate solution
for~$(G,R,\cost)$.

An optimal \sol{} for~$(G,R,\cost)$
has weight~$\cost(W^*)=\cost(R)+\cost(D)$
by \eqref{cW=R+D},
where $D$~is a minimum\hyp weight \EE{}.
By \cref{lem:ballred}\eqref{ballred3}
and \eqref{gamma},
there is an \EE{}~$D'$ for~$(G,R_1,\cost)$
with
\begin{align}
  \cost(D')\leq \cost(D)+2\pBal(2c-2)=\cost(D)+\varepsilon_1\cdot \cost(R)
  \label{dstrich}.
\end{align}
By \cref{lem:delcyc},
$D'$~is an \EE{} for~$(G,R_2,\cost)$
and, by \cref{rem:allR}, for~$(G_2,R_2,\cost)$.
Then $D'$~is also an \EE{} for~$(G_2,R_2,\cost_2)$.
Thus,
an optimal \sol{} for~$(G_2,R_2,\cost_2)$
has weight at most
$\cost_2(R_2)+\cost_2(D')$.
By \cref{prop:linearlift},
an $\alpha$-approximate solution for~$(G_2,R_2,\cost_2)$,
can be turned into an \EE{}~$S$
such that
\begin{align}
  \cost_2(R_2)+\cost_2(S)\leq\alpha(\cost_2(R_2)+\cost_2(D')).
  \label{firstcost}
\end{align}
By \cref{rem:allR},
$S$~is an \EE{} for~$(G,R_2,\cost)$.
By \cref{lem:delcyc},
$S$~is an \EE{} for~$(G,R_1,\cost)$,
and by \cref{lem:ballred},
it is one for~$(G,R,\cost)$,
since $V(S)\subseteq V(G_2)=V(R_2)\subseteq V(R_1)\subseteq V(R)$.
Thus,
by \cref{prop:linearlift},
$S$~can be turned into an \sol{}
of weight~$\cost(R)+\cost(S)$ for~$(G,R,\cost)$.
We analyze this weight.
By \eqref{firstcost} and
\cref{lem:looseweight} with $\pWR=\cost(R)+\cost(T)$,
\begin{align*}
  \cost(R_2)+\cost(S)&\leq\alpha(\cost(R_2)+\cost(D'))+\varepsilon_2(\cost(R)+\cost(T)).\\
  \intertext{Using $\cost(R_2)\leq\cost(R_1)\leq\cost(R)$
  from \cref{lem:ballred,lem:delcyc},
  and \(\alpha\geq 1\),  we get}
  \cost(R)+\cost(S)&\leq\alpha(\cost(R)+\cost(D'))+\varepsilon_2(\cost(R)+\cost(T))\\
                     &\leq\alpha(\cost(R)+\cost(D'))+\varepsilon_2(\cost(R)+\cost(D))&\text{using \eqref{TD}\phantom.}\\
  &\leq \alpha(\cost(R)+\cost(D)+\varepsilon_1\cost(R))+\varepsilon_2(\cost(R)+\cost(D))&\text{using \eqref{dstrich}\phantom.}\\
  &\leq\alpha(1+\varepsilon_1+\varepsilon_2)(\cost(R)+\cost(D))=\alpha(1+\varepsilon)\cost(W^*)&\text{using \eqref{cW=R+D}.}
\end{align*}
Thus,
we got an \(\alpha(1+\varepsilon)\)\hyp approximation
for~$(G,R,c)$.

\subsection{\boldmath On polynomial\hyp size approximate
  kernelization schemes for the parameters~$b$ and $c$}
\label{sec:psaks-c}
In the previous section,
we have shown
a polynomial\hyp size approximate kernelization
scheme (PSAKS)
for RPP parameterized by~$b+c$.
An obvious question is whether there is a PSAKS
for the parameters~$b$ or~$c$.
For the parameter~$b$,
we can easily answer this question.

\begin{proposition}
  If RPP parameterized by~$b$ has a $(1+\varepsilon)$\hyp approximate
  kernel for any $\varepsilon<1/122$, then P${}={}$NP.
\end{proposition}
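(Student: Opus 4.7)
The plan is to reduce metric Traveling Salesman Problem (metric TSP) to RPP in a way that forces $b=0$; then a $(1+\varepsilon)$-approximate kernel for RPP parameterized by~$b$ yields a polynomial-time $(1+\varepsilon)$-approximation for metric TSP, contradicting the $123/122$ inapproximability of metric TSP established by \citet{KLS15} unless P${}={}$NP.

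Given a metric TSP instance on a complete graph $G=(V,E)$ with $n$ vertices and metric weights $\cost\colon E\to\N\cup\{0\}$, I would construct the RPP instance $(G',R,\cost')$, where $G'$ is obtained from~$G$ by adding a zero-weight loop at each vertex, and $R$ is the multiset consisting of all these $n$ loops. Since a loop contributes $2$ to the degree of its endpoint, every vertex of~$\eG{G'}{R}$ is balanced, so $b=0$. Moreover, a minimum-weight \sol{} for $(G',R,\cost')$ has the same weight as the optimum metric TSP tour on~$G$: any Hamiltonian cycle in~$G$ together with all $n$ zero-weight loops is a feasible \sol{} of the same weight; conversely, any \sol{} visits every vertex (because every loop is required) and, by the triangle inequality, can be shortcut to a Hamiltonian cycle of no greater weight.

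Assuming that RPP parameterized by~$b$ admits a $(1+\varepsilon)$-approximate kernel for some fixed $\varepsilon<1/122$, I would apply it to $(G',R,\cost')$. Since $b=0$, the resulting RPP instance has size at most $g(0)$, a constant depending only on~$\varepsilon$. A constant-size RPP instance can be solved exactly in constant time by brute-force enumeration, giving a $1$-approximate solution. The lifting algorithm of the approximate kernelization then converts this, in polynomial time, into a $(1+\varepsilon)$-approximate \sol{} for $(G',R,\cost')$, which via the triangle-inequality shortcut described above produces a $(1+\varepsilon)$-approximate tour for the original metric TSP instance.

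The entire procedure runs in polynomial time and achieves a $(1+\varepsilon)$-approximation with $1+\varepsilon<123/122$, contradicting \citep{KLS15} unless P${}={}$NP. The only creative step is the reduction itself; the key observation is that required loops are an evenness-preserving way to force every vertex to be visited without contributing anything to~$b$, while zero weights ensure the metric TSP optimum is preserved. All remaining steps — brute-force exact solution of the constant-size kernel, the lift provided by the kernelization, and the metric shortcut — are routine, so I do not expect any real obstacle beyond checking that loops are admitted by the formalism of the paper, which they explicitly are.
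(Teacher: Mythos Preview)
Your proposal is correct and essentially identical to the paper's proof: both add a zero-weight required loop at every vertex so that $b=0$, apply the hypothetical $(1+\varepsilon)$-approximate kernel to obtain a constant-size instance, solve that optimally, lift, and conclude via the $123/122$-inapproximability of metric TSP from \citet{KLS15}. Your write-up is in fact slightly more explicit than the paper's in verifying that the loop reduction preserves the optimum and that loops contribute even degree.
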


\begin{proof}
  Assume that RPP has an $(1+\varepsilon)$\hyp approximate kernel
  of any size~$g(b)$
  for $\varepsilon<1/122$.
  We show how to find an $(1+\varepsilon)$\hyp approximate solution
  for the metric Traveling Salesman problem
  in polynomial time,
  which implies P${}={}$NP \citep{KLS15}.
  Given an instance~$I$ of the metric Traveling Salesman problem,
  create an instance~$I'$ of RPP by adding a required zero\hyp weight loop
  to each vertex.
  Compute an $(1+\varepsilon)$\hyp approximate kernel~$I''$
  for~$I'$ in polynomial time.
  Since $I'$~has no imbalanced vertices,
  the kernel $I''$ has size $g(b)=g(0)\in O(1)$.
  Computing an optimal solution in~$I''$ thus takes constant time,
  can be lifted to an $(1+\varepsilon)$\hyp approximate solution
  for~$I'$ in polynomial time and,
  after removing the loops,
  is an $(1+\varepsilon)$\hyp approximation for~$I$.
\end{proof}

\noindent
Answering the question about the existence of a PSAKS for the parameter~$c$ is not quite as simple.
In the following,
we discuss the difficulties in resolving this question
and
make some first steps towards its resolution.
In particular, we will show a PSAKS for the parameter~$\cost(T)$.

To get the PSAKS for~$c$,
one has to reduce the number of imbalanced vertices in~$\eG{G}{R}$.
An obvious idea to do so
is adding to~$R$
cheap edges of a minimum\hyp weight
perfect matching~$M$ on imbalanced vertices,
since this is optimal
if it happens to connect $\eG{G}{R}$.

\begin{rrule}
  \label{rrule:matching}
  Let $(G,R,\cost)$~be an RPP instance
  with triangle inequality and~\(\pM\in\Q\).
  Add to~$R$ a subset~$M^*\subseteq M$ of edges with
  \[
    \sum_{e\in M^*}\cost(e)\leq\pM.
  \]
\end{rrule}

\begin{observation}
  \label{obs:mstar}
  Let $R'=R\uplus M^*$~be obtained by applying
  \cref{rrule:matching} to~$R$.
  \begin{compactenum}[(i)]
  \item\label{match0}
    There are at most $2(|M|-|M^*|)$
    \imba{} vertices in~$\eG{G}{R'}$.

  \item\label{match1}
    For any \EE{}~$S'$ for~$(G,R',\cost)$,
    \(S=S'\uplus M^*\)~is an \EE{} for~$(G,R,\cost)$
    and \(\cost(R)+\cost(S)=\cost(R')+\cost(S')\).
  
  \item\label{match2}
    For any \EE{}~$S$ for~$(G,R,\cost)$,
    $S'=S\uplus M^*$~is an \EE{} for $(G,R',\cost)$
    with \(\cost(S')\leq \cost(S)+\pM\).
  \end{compactenum}
\end{observation}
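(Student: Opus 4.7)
The plan is to verify each of the three parts by direct multiset and degree bookkeeping, using only the definition of~$M$ from \cref{def:bounds}, the constraint on~$M^*$ from \cref{rrule:matching}, and additivity of~$\cost$ over multiset union.

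For part~\eqref{match0}, I would first recall that $M$ is a perfect matching on the $b=2|M|$ imbalanced vertices of~$\eG{G}{R}$, so the edges of $M^*\subseteq M$ are pairwise vertex\hyp disjoint and each joins two vertices that are imbalanced in~$\eG{G}{R}$. Adding a single such edge flips the parity of exactly its two endpoints; since the edges of~$M^*$ are disjoint, this turns $2|M^*|$ imbalanced vertices into balanced ones and leaves all other parities unchanged, giving exactly $b-2|M^*|=2(|M|-|M^*|)$ imbalanced vertices in~$\eG{G}{R'}$.

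For part~\eqref{match1}, the key observation is the multiset identity $R'\uplus S'=R\uplus M^*\uplus S'=R\uplus S$, which immediately gives $\eG{G}{R\uplus S}=\eG{G}{R'\uplus S'}$. Since the right\hyp hand side is Eulerian by hypothesis, so is the left\hyp hand side, meaning $S$~is an \EE{} for~$(G,R,\cost)$. The weight identity then follows from $\cost(R)+\cost(S)=\cost(R)+\cost(M^*)+\cost(S')=\cost(R')+\cost(S')$.

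For part~\eqref{match2}, I would show that $\eG{G}{R'\uplus S'}$ is Eulerian by comparing it to $\eG{G}{R\uplus S}$. Since $R'\uplus S'=R\uplus S\uplus M^*\uplus M^*$ contains each edge of~$M^*$ exactly two more times than $R\uplus S$, every vertex's degree changes by an even amount, so the balance of each vertex is inherited from the (balanced) graph~$\eG{G}{R\uplus S}$. Connectivity is preserved because $\eG{G}{R'\uplus S'}$ contains every edge of the connected graph~$\eG{G}{R\uplus S}$. The weight bound $\cost(S')=\cost(S)+\cost(M^*)\leq\cost(S)+\pM$ is immediate from \cref{rrule:matching}.

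There is no real obstacle: the observation is a purely arithmetic consequence of the definitions. The one subtle point is in~\eqref{match2}, where the edges of~$M^*$ appear \emph{twice} in~$R'\uplus S'$, and it is precisely this doubling that preserves vertex parity when passing from~$(G,R,\cost)$ to~$(G,R',\cost)$.
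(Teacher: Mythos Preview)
Your proof is correct. The paper states this as an \emph{observation} and gives no proof at all, so there is nothing to compare against; your direct multiset and parity bookkeeping is exactly the routine verification the authors implicitly leave to the reader, including the one subtlety you flag in~\eqref{match2} about the edges of~$M^*$ appearing twice in~$R'\uplus S'$.
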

\noindent
To show a PSAKS with respect to the parameter~$c$,
this reduction rule is unsuitable for two reasons:
\begin{compactenum}
\item
To reduce the number of
  imbalanced vertices in~$\eG{G}{R}$
  to some constant,
  we have to add all but a constant number of edges of~$M$ to~$R$,
  yet, by \cref{obs:mstar}\eqref{match2},
  each added edge potentially contributes to the error
  and thus would merely retain a 2\hyp approximation.
  Unfortunately,
  \cref{fig:matchtight} shows that
  the bound given by \cref{obs:mstar}\eqref{match2} is tight.
  \begin{figure}[t]
    \centering
\begin{tikzpicture}
\tikzstyle{xnode}=[fill,circle,scale=3/4,draw]
      \tikzstyle{lnode}=[text width=4.5*\xr cm, anchor=west, scale=1]
      \tikzstyle{redge}=[] \tikzstyle{medge}=[ultra thick]
      \tikzstyle{oedge}=[densely dashed]

      \def\xr{1}
      \def\yr{1}

      \newcommand{\bstar}[3]{
        \def\noleaves{8};
        \def\distleaves{1.5*\xr};
        \node (#1) at (#2,#3)[xnode]{};
\foreach \j in {1,...,\noleaves}{
              \node (a\j) at ($(#1)+ (\j*320/\noleaves:\distleaves cm)$)[xnode]{};
            \draw[redge] (#1) -- (a\j);
            }
      }

      \begin{scope}
      \bstar{a}{0}{0};

      \node (b) at (3*\xr,1*\yr)[xnode]{};
      \draw[-,redge] (b) to [out=45,in=90]($(b)+(1*\xr,0)$);
      \draw[-,redge] (b) to [out=-45,in=-90]($(b)+(1*\xr,0)$);
      \node (c) at (3*\xr,0*\yr)[xnode]{};
      \draw[-,redge] (c) to [out=45,in=90]($(c)+(1*\xr,0)$);
      \draw[-,redge] (c) to [out=-45,in=-90]($(c)+(1*\xr,0)$);
      \node (d) at (3*\xr,-1*\yr)[xnode]{};
      \draw[-,redge] (d) to [out=45,in=90]($(d)+(1*\xr,0)$);
      \draw[-,redge] (d) to [out=-45,in=-90]($(d)+(1*\xr,0)$);

      \draw[-,oedge] (a1) -- (b) -- (c) -- (d) -- (a8);
      \foreach \j in {1,3,5,7}{\pgfmathtruncatemacro{\x}{\j +1}; \draw[-,medge] (a\j) -- (a\x);}
      \foreach \j in {2,4,6}{\pgfmathtruncatemacro{\x}{\j +1}; \draw[-,oedge] (a\j) -- (a\x);}

      \end{scope}

      \begin{scope}[xshift=5.3*\xr cm, yshift=-0.75*\yr cm]
        \draw[rounded corners, lightgray] (0,0) rectangle (5.4*\xr,1.5*\yr);
        \node (l1) at (1*\xr,1.25*\yr)[lnode]{required edges~$R$};
        \node (l2) at (1*\xr,0.75*\yr)[lnode]{added matching edges~$M^*$};
        \node (l3) at (1*\xr,0.25*\yr)[lnode]{optimal \EE{}~$D$};
        \draw[redge] ($(l1)-(2.5*\xr,0)$) -- ($(l1)-(3.1*\xr,0)$);
        \draw[medge] ($(l2)-(2.5*\xr,0)$) -- ($(l2)-(3.1*\xr,0)$);
        \draw[oedge] ($(l3)-(2.5*\xr,0)$) -- ($(l3)-(3.1*\xr,0)$);
      \end{scope}
\end{tikzpicture}
    \caption{Example showing that
      the bound given in \cref{obs:mstar}\eqref{match2} is tight:
      adding the edges in~$M^*$ to~$R$
      breaks the only optimal \EE{}~$D$ (dashed).
      To fix it,
      one either has to double all edges of~$D$
      or add all edges of~$M^*$ to~$D$.
      Note that the star can be arbitrarily enlarged.}
    \label{fig:matchtight}
  \end{figure}
\item \cref{rrule:matching} increases the total weight
  of required edges.
  This makes it unusable for a PSAKS,
  since, in the resulting instance,
  a solution might be $(1+\varepsilon)$\hyp approximate
  merely due to the fact that the
  lower bound~\(\cost(R)\) on the solution is sufficiently large
  (we will use this fact below).
\end{compactenum}
  
\noindent
Given the difficulties of showing
a PSAKS for~$c$,
it is tempting to disprove its existence.
However,
we can easily build a 
PSAKS with size polynomial in~$\cost(T)$,
which gives a PSAKS of size polynomial in~$c$
in case that the edge weights are bounded by~$\poly(c)$.
More specifically,
we prove the following.

\begin{proposition}
  \label{thm:easy}
  Let $(G,R,\cost)$~be an instance of RPP
  with triangle inequality.
\begin{compactenum}[(i)]
\item\label{lightT} If $\cost(T) \le \varepsilon( \cost(R)+\cost(M))$,
  then
  a $(1+ 2\varepsilon)$-approximate \sol{}
  for $(G,R,\cost)$ can be found in polynomial time.
  
\item\label{lightM} If $\cost(M) \le \varepsilon( \cost(R)+\cost(T))$,
  then \((G,R,\cost)\)~has
  a $(1+3\varepsilon+2\varepsilon^2)$-approximate kernel
  with $O(c/\varepsilon)$~vertices.
  
\item\label{heavyboth}
  Otherwise,
  $(G,R,\cost)$~has
  an (exact) problem kernel with respect
  to the parameter $\min\{\cost(T)/\varepsilon-\cost(M),\cost(M)/\varepsilon -\cost(T)\}$.
\end{compactenum}
\end{proposition}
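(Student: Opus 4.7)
The plan is to treat each of the three cases separately, with case (i) being a direct algorithmic argument, case (ii) being a reduction to the PSAKS of \cref{thm:psaks}, and case (iii) being a reduction to the classical exact kernel for RPP.

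For part \eqref{lightT}, I~would simply exhibit an explicit \EE{}. Take $S := M \uplus 2T$ (each edge of $T$ doubled). Doubling the edges of~$T$ preserves the balance of every vertex while connecting~$\eG{G}{R \uplus M}$, which is already balanced by choice of~$M$; hence $S$~is an \EE{} for~$(G,R,\cost)$. By \cref{prop:linearlift} this gives an \sol{} of weight
\[
\cost(R) + \cost(S) = \cost(R) + \cost(M) + 2\cost(T).
\]
Using $\cost(W^*) = \cost(R)+\cost(D) \geq \cost(R)+\cost(M)$ from \eqref{cW=R+D} and \eqref{cMD}, the approximation ratio is at most $1 + 2\cost(T)/(\cost(R)+\cost(M)) \leq 1+2\varepsilon$ by hypothesis. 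Since $M$~and~$T$ are computable in polynomial time (minimum-weight perfect matching and minimum spanning forest, respectively), the whole procedure runs in polynomial time.

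For part \eqref{lightM}, I~would apply \cref{rrule:matching} with $M^* = M$ and $\pM = \cost(M)$, producing an instance $(G, R', \cost)$ with $R' = R \uplus M$ in which $\eG{G}{R'}$ has no \imba{} vertex, i.e., $b' = 0$. Then I~would apply \cref{thm:psaks} with accuracy~$\varepsilon$ to $(G, R', \cost)$, yielding an approximate kernel $(G'', R'', \cost'')$ with $2b' + O(c/\varepsilon) = O(c/\varepsilon)$ vertices and an $\alpha(1+\varepsilon)$-approximate lifting. To assemble the two transformations, an $\alpha$-approximate solution for the kernel first becomes an $\alpha(1+\varepsilon)$-approximate solution for $(G,R',\cost)$ via \cref{thm:psaks}, and then via \cref{obs:mstar}\eqref{match1} an \sol{} of the same weight for~$(G,R,\cost)$. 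To bound the final ratio, I~would combine \cref{obs:mstar}\eqref{match2} (an optimal EE for $R$ yields one for $R'$ of weight at most $\cost(D)+\cost(M)$) with the hypothesis $\cost(M) \le \varepsilon(\cost(R)+\cost(T)) \leq \varepsilon\,\cost(W^*)$ (using \eqref{cTD} and \eqref{cW=R+D}) to deduce $\OPT(G,R',\cost) \le (1+2\varepsilon)\,\cost(W^*)$. The composition then gives an approximation factor of $\alpha(1+\varepsilon)(1+2\varepsilon) = \alpha(1+3\varepsilon+2\varepsilon^2)$, as claimed.

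For part \eqref{heavyboth}, the negations of the hypotheses of \eqref{lightT} and \eqref{lightM} give
\[
\cost(R) < \min\{\cost(T)/\varepsilon - \cost(M),\; \cost(M)/\varepsilon - \cost(T)\} =: p,
\]
so the total weight of the required edges is bounded by the parameter. Applying \cref{rrule:delverts} and then \cref{rrule:delcyc} (which by \cref{lem:delcyc} bounds the number of required edges in each component with $k\ge 2$ vertices by $2k-2$), I~would obtain an equivalent instance whose graph has only $|V(R)| \leq 2|R|$ vertices and whose $|R|$ is controlled by $\cost(R)$ (after also collapsing parallel zero-weight loops, which are equivalent from the \EE{} perspective). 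I~would then invoke the classical exact kernelization sketched in \cref{sec:wk1}: restrict to $V(R)$ and shrink weights with \citeauthor{FT87}'s theorem, producing a kernel of size polynomial in~$|R|$ and hence in~$p$. The main obstacle here is the bookkeeping needed to argue that, in the presence of possibly zero-weight required edges, $|R|$ is still bounded by $\poly(p)$ after the reduction rules; I~expect this to follow from the fact that \cref{rrule:delcyc} leaves each connected component acyclic beyond a spanning tree, combined with standard merging of parallel equal-weight required edges, so that every required edge can be charged either to a vertex of $V(R)$ or to a unit of weight in $\cost(R)$.
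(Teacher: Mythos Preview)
Your arguments for parts \eqref{lightT} and \eqref{lightM} are essentially identical to the paper's: the same explicit \EE{} $M\uplus 2T$ for \eqref{lightT}, and the same pipeline (add $M$ to $R$, then invoke \cref{thm:psaks} with $b'=0$) for \eqref{lightM}, with the same chain of inequalities yielding the factor $(1+\varepsilon)(1+2\varepsilon)$.

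For part \eqref{heavyboth} your core idea also matches the paper: derive $\cost(R)<p$ from the negated hypotheses and invoke the known $2|R|$-vertex exact kernel. The paper's proof of \eqref{heavyboth} is in fact only two sentences and stops exactly there; it does \emph{not} address the zero\hyp weight issue you raise. You are right to flag it---$\cost(R)\le p$ does not by itself bound $|R|$ when zero\hyp weight required edges are present---but your proposed fix (collapsing zero\hyp weight loops and ``merging parallel equal\hyp weight required edges'') does not close the gap: parallel required edges cannot simply be merged without altering balance, and a spanning tree of zero\hyp weight edges survives \cref{rrule:delcyc} intact, so $|R'|$ can still be unbounded in~$p$. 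A cleaner repair is to contract each maximal zero\hyp weight connected subgraph of~$\eG{G}{R}$ to a single vertex (recording the contracted edges so they can be re\hyp inserted into any tour afterwards); after this every surviving required edge has weight at least one, whence $|R|\le\cost(R)\le p$, and the $2|R|$-vertex kernel of \cite{BNSW15} with Frank--Tardos weight shrinking then gives size $\poly(p)$ as claimed.
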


\noindent
\cref{thm:easy} shows that,
in order to exclude PSAKSes for RPP parameterized by~$c$,
a reduction must use unbounded edge weights,
the weights of~$T$, $M$, and~$R$
may not differ too much (by \eqref{lightT} and \eqref{lightM}),
yet the weights of~$T$ and~$M$
must not be too close either (by \eqref{heavyboth}).
Given these restrictions,
we conjecture:
\begin{conjecture}
  \label{mainconj}
  RPP has a PSAKS with respect to the parameter~$c$.
\end{conjecture}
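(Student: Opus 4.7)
The strategy is to design a new reduction rule that reduces the number~$b$ of imbalanced vertices in~$\eG{G}{R}$ to~$\poly(c/\varepsilon)$ at the cost of a~$(1+\varepsilon)$\hyp factor in solution quality, and then feed the resulting instance into \cref{thm:psaks} to obtain a PSAKS of total size~$\poly(c/\varepsilon)$. The natural attempt is a dual of \cref{rrule:ballred}: within each connected component of~$\eG{G}{R}$, select an inclusion\hyp maximal set~$B^{\mathrm{odd}}$ of imbalanced representatives at pairwise distance greater than~$\delta$; then, for each remaining imbalanced vertex~$v$, locate a nearby representative~$v'$ and add the edge~$\{v,v'\}$ to~$R$. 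Since both endpoints are imbalanced in~$\eG{G}{R}$, adding~$\{v,v'\}$ flips both parities and balances both vertices, so that after iteration only the~$|B^{\mathrm{odd}}|$ representatives remain imbalanced.

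An argument analogous to \cref{lem:ballred}\eqref{ballred4} would bound~$|B^{\mathrm{odd}}|$ through a combination of tree\hyp counting in the subgraphs~$\eG{G}{R_i}$ and the weight-reduction step of \cref{lem:looseweight}, with~$\delta$ calibrated so that the resulting bound is~$\poly(c/\varepsilon)$. The approximation bookkeeping would mirror \cref{sec:psaksproof}: the enriched instance~$(G,R\uplus M^*,\cost)$ is fed to \cref{thm:psaks}; an $\alpha$\hyp approximate \sol{} for the output is lifted back to~$(G,R\uplus M^*,\cost)$ by that PSAKS and then to~$(G,R,\cost)$ via \cref{obs:mstar}\eqref{match1}, which is cost\hyp preserving. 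In the opposite direction, a local rerouting argument (analogous to \cref{lem:at-most-two}: replace one edge~$\{v,u\}$ of an optimal \EE{}~$D$ incident to~$v$ by~$\{u,v'\}$) transforms any optimal \EE{} for~$(G,R,\cost)$ into an \EE{} for~$(G,R\uplus M^*,\cost)$ at cost at most~$\delta$ per added matching edge, giving the required~$(1+\varepsilon)$\hyp bound on~$\OPT$ of the enriched instance whenever~$\cost(M^*)\leq \varepsilon\cdot\cost(W^*)$.

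The main obstacle, and presumably the reason the paper leaves the conjecture open, is the regime in which~$\cost(M)$ is a large fraction of~$\cost(W^*)$. In this regime, to reduce~$b$ to~$\poly(c/\varepsilon)$ one must add matching edges of total weight~$\Omega(\cost(W^*))$ rather than~$O(\varepsilon\cdot\cost(W^*))$, and the tight example of \cref{fig:matchtick} (cf.\ \cref{obs:mstar}\eqref{match2}) shows that each such addition can force the lifted \EE{} to pay an extra additive summand equal to the weight of the added edge, degrading the overall factor from~$(1+\varepsilon)$ to roughly~$2$. Overcoming this barrier appears to require either a structural theorem guaranteeing the existence of a near\hyp optimal \EE{} \emph{compatible} with the prescribed pairing~$M^*$ of nearby imbalanced vertices---so that the lifting step reuses the added edges without doubling cost---or a fundamentally different reduction that does not proceed by matching addition at all, and instead exploits the~$c$\hyp component structure of~$\eG{G}{R}$ directly, for instance by replacing each component by a small gadget whose size depends only on~$c/\varepsilon$ and which encodes the imbalance pattern up to~$(1+\varepsilon)$\hyp equivalence. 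Designing such a gadget, or proving the compatibility theorem, is the principal difficulty.
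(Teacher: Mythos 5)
There is a genuine gap here: the statement you were asked to prove is a \emph{conjecture}, which the paper itself leaves open, and your proposal does not close it either. What you describe as the working part of your plan --- pairing each non\hyp representative imbalanced vertex with a nearby imbalanced partner and adding the pairing edges to~$R$ --- is in substance \cref{rrule:matching}, and the regime in which your accounting goes through (total added weight at most an $\varepsilon$\hyp fraction of the optimum) is precisely the case the paper already handles in \cref{thm:easy}\eqref{lightM}, which yields a $(1+3\varepsilon+2\varepsilon^2)$\hyp approximate kernel with $O(c/\varepsilon)$~vertices by adding all of~$M$ and invoking \cref{thm:psaks}. So the part of your argument that is correct reproduces known partial progress rather than new ground.

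The case that actually constitutes the conjecture --- when $\cost(M)$ is a constant fraction of~$\cost(W^*)$, so that balancing all but $\poly(c/\varepsilon)$ vertices requires adding weight $\Omega(\cost(W^*))$ --- is exactly where the paper exhibits the obstruction: the additive loss in \cref{obs:mstar}\eqref{match2} is tight (\cref{fig:matchtight}; note your citation ``fig:matchtick'' is a typo for this label), so the lifting can be forced to pay the full weight of every added edge, degrading the guarantee to a factor of roughly~$2$; moreover, inflating $\cost(R)$ invalidates the relative\hyp error bookkeeping, since a solution of the enriched instance may be $(1+\varepsilon)$\hyp approximate merely because the lower bound $\cost(R)$ grew. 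You name this obstacle yourself and then only gesture at two possible remedies (a ``compatibility'' theorem for near\hyp optimal Eulerian extensions with respect to the prescribed pairing, or a component\hyp replacing gadget of size $\poly(c/\varepsilon)$) without constructing or proving either. Absent such a construction, the proposal is a restatement of the paper's own discussion in \cref{sec:psaks-c} of why the matching\hyp addition route stalls, not a proof of \cref{mainconj}.
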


\noindent
We finally prove \cref{thm:easy}.

\begin{proof}[Proof of \cref{thm:easy}]
  (\ref{lightT})
  Observe that the multiset $T\uplus T\uplus M$~is
  an \EE{} for~$(G,R,\cost)$.
  Using \cref{prop:linearlift},
  it yields an \sol{} of weight
  \begin{align*}
    \cost(R)+\cost(M)+2\omega(T)&\leq \cost(R)+\cost(M)+2\varepsilon(\omega(R)+\omega(M))\\
                                &\leq (1+2\varepsilon)(\cost(R)+\cost(D))&\text{using \eqref{cMD}}\hphantom.\\
    &=(1+2\varepsilon)\cost(W^*)&\text{using \eqref{cW=R+D}}.
  \end{align*}

  (\ref{lightM}) Let $R'$ be obtained from $R$
  using \cref{rrule:matching}
  with \(\pM=\cost(M)\),
  that is, $R'=R\uplus M$.
  In $\eG{G}{R'}$, all vertices are balanced.
  Thus, applying \cref{thm:psaks}
  to $(G,R',\cost)$
  gives an instance $(G_2,R_2,\cost_2)$
with 
  $O(c/\varepsilon)$~vertices.

  \looseness=-1
  Let $D$~be an optimal \EE{} for $(G,R,\cost)$.
  Then,
  by \cref{obs:mstar},
  an optimal \EE{}~$D'$ for $(G,R',\cost)$
  has weight $\cost(D')\leq \cost(D)+\delta=\cost(D)+\cost(M)$
  and,
  by \cref{prop:linearlift},
  an optimal \sol{} for~$(G,R',\cost)$ has weight $\cost(R')+\cost(D')$.
  Moreover,
  by \cref{thm:psaks},
any $\alpha$-approximate
  \sol{} for $(G_2,R_2,\cost_2)$
  can be lifted to an $\alpha(1+\varepsilon)$-approximate \sol{}~$W$ for~$(G,R',\cost)$.
By \cref{obs:mstar}\eqref{match1},
  it yields a \sol{} for~$(G,R,\cost)$ of weight
\begin{align*}
    \cost(W)\leq \alpha(1+\varepsilon)(\cost(R')+\cost(D'))& \le
                                                             \alpha(1+\varepsilon)(\cost(R) + 2\cost(M) +\cost(D))\\
                                                           &\le \alpha(1+\varepsilon)(\cost(R) + 2\varepsilon(\cost(R)+\cost(T))+\cost(D))&\text{using \eqref{lightM}}\phantom{.}\\
        &\le \alpha(1+\varepsilon)((1+2\varepsilon)\cost(R) + (1+2\varepsilon)\cost(D))&\text{using \eqref{cTD}}\\
    &=\alpha(1+\varepsilon)(1+2\varepsilon)(\cost(R)+\cost(D))\\
    &=\alpha(1+3\varepsilon+2\varepsilon^2)\cost(W^*)&\text{using \eqref{cW=R+D}.}
  \end{align*}

  \eqref{heavyboth}
  Otherwise, one has
  \begin{align*}
    \cost(R)&\leq \cost(M)/\varepsilon-\cost(T)&&\text{ and }&
    \cost(R)&\leq \cost(T)/\varepsilon-\cost(M)
  \end{align*}
  and thus the known
  $2|R|$-vertex problem kernel \citep{BNSW15}
  for RPP
  will be a kernel for both of these parameters.
\end{proof}

\section{Experiments}
\label{sec:exp}

In this section,
we experimentally evaluate the polynomial-size approximate kernelization scheme
presented in \cref{sec:psaksproof}.

\paragraph{Data instances.}
We evaluate the data reduction effect of our algorithm
on three data sets
generated by \citet{CLS01,CPS07}:
\begin{compactitem}[madr-$p$-$i$:]
\item[alba-$p$-$i$] for each $p\in\{0.3,0.5,0.7\}$ and $i\in\{1,\dots,5\}$:
  based on the street network of the Spanish town Albaida,
  where each edge is required with probability~$p$
  and $i$~is just a running index.
\item[madr-$p$-$i$] for each $p\in\{0.3,0.5,0.7\}$ and $i\in\{1,\dots,5\}$:
  based on the street network of the Spanish town Madrigueras,
  where each edge is required with probability~$p$
  and $i$~is just a running index.
\item[ur-$n$-$d$-$p$] for each $n\in\{500,750,1000\}$,  $d\in\{3,4,5,6\}$, and $p\in\{0.25,0.5,0.75\}$:
  $n$~vertices are selected randomly from an $(1000\times 1000)$-grid,
  distances are Euclidean,
  each vertex is connected to its $d$~closest neighbors,
  and
  each edge is required with probability~$p$.
\end{compactitem}
These data sets  are widely used in the literature~\citep{CLS01,CPS07,FMGO03,HBLR14,RTW08}.\footnote{Available at \url{https://www.uv.es/corberan/instancias.htm}}
We also test our algorithm on two instances provided to us
by Berliner Stadtreinigung,
the company responsible for snow plowing and garbage collection in Berlin.\footnote{Available at \url{https://gitlab.com/rvb/rpp-psaks}}
In the Berlin instances,
both the street network as well as the required edges
arise from a real snow plowing application,
as opposed to generating the required edges randomly
like in the other instances.

Characteristics of these instances
and the weight $\cost(W)$ of a solution obtained via the 3/2-approximation
can be seen in the ``input instance'' columns of
\cref{tab:alba,tab:real,tab:berlin,tab:random}.

\begin{table}[p]
  \small
  \centering
  \caption{Results on the alba-$p$-$i$ instances.
    Highlighted are rows
    where the weight of an approximate solution for the input instance
    differs from the weight of an approximate solution lifted from the
    kernel.}

  \begin{tabular}{>{\rowmac}r|>{\rowmac}r>{\rowmac}r>{\rowmac}r>{\rowmac}r>{\rowmac}r>{\rowmac}r|>{\rowmac}r>{\rowmac}r>{\rowmac}r>{\rowmac}r|>{\rowmac}r>{\rowmac}r>{\rowmac}r>{\rowmac}l<{\clearrow}}
    \toprule
    &\multicolumn{6}{c|}{input instance}
    &\multicolumn{4}{c|}{kernel}
    &\multicolumn{3}{c}{comparison}\\
$p$ & $|V|$ & $|V(R)|$ & $|R|$ & $b$ & $c$ & $\cost(W)$ & $|V'|$ & $|R'|$ & $\cost(W')$ & ms & $\frac{|V'|}{|V|}$ & $\frac{|V'|}{|V(R)|}$ & $\frac{|R'|}{|R|}$ & $\frac{\cost(W')}{\cost(W)}$ \\ \midrule 
0.3 & 116   & 72       & 51    & 54  & 22  & 7987       & 72     & 51     & 7987        & 2  & 0.62               & 1.00                  & 1.00               & 1                         \\
0.3 & 116   & 68       & 46    & 54  & 23  & 6950       & 68     & 46     & 6950        & 2  & 0.59               & 1.00                  & 1.00               & 1                         \\
0.3 & 116   & 59       & 44    & 36  & 15  & 7587       & 59     & 44     & 7587        & 2  & 0.51               & 1.00                  & 1.00               & 1                         \\
0.3 & 116   & 70       & 49    & 48  & 21  & 7464       & 70     & 49     & 7464        & 2  & 0.60               & 1.00                  & 1.00               & 1                         \\
0.3 & 116   & 73       & 57    & 48  & 19  & 7972       & 73     & 57     & 7972        & 2  & 0.63               & 1.00                  & 1.00               & 1                         \\
0.5 & 116   & 101      & 88    & 68  & 18  & 11387      & 101    & 88     & 11387       & 3  & 0.87               & 1.00                  & 1.00               & 1                         \\
0.5 & 116   & 100      & 92    & 58  & 14  & 10796      & 100    & 92     & 10796       & 4  & 0.86               & 1.00                  & 1.00               & 1                         \\
0.5 & 116   & 99       & 92    & 50  & 11  & 9469       & 98     & 91     & 9469        & 4  & 0.84               & 0.99                  & 0.99               & 1                         \\
0.5 & 116   & 91       & 88    & 50  & 8   & 9050       & 88     & 85     & 9050        & 5  & 0.76               & 0.97                  & 0.97               & 1                         \\
0.5 & 116   & 102      & 91    & 60  & 16  & 10137      & 102    & 91     & 10137       & 3  & 0.88               & 1.00                  & 1.00               & 1                         \\
\setrow{\bfseries}0.7 & 116   & 104      & 118   & 64  & 6   & 11521      & 89     & 95     & 11641       & 12 & 0.77               & 0.86                  & 0.81               & 1.01                         \\
0.7 & 116   & 108      & 122   & 56  & 2   & 11155      & 58     & 65     & 11155       & 30 & 0.50               & 0.54                  & 0.53               & 1                         \\
0.7 & 116   & 110      & 113   & 60  & 9   & 11895      & 104    & 107    & 11895       & 8  & 0.90               & 0.95                  & 0.95               & 1                         \\
0.7 & 116   & 110      & 119   & 66  & 4   & 11761      & 83     & 88     & 11761       & 25 & 0.72               & 0.75                  & 0.74               & 1                         \\
0.7 & 116   & 110      & 116   & 58  & 7   & 11414      & 96     & 102    & 11414       & 13 & 0.83               & 0.87                  & 0.88               & 1                         \\
    \bottomrule
  \end{tabular}
  \label{tab:alba}
\end{table}
\begin{table}[p]
  \small
  \centering
  \caption{Results on the madr-$p$-$i$ instances.
      Highlighted are rows
    where the weight of an approximate solution for the input instances
    differs from the weight of an approximate solution lifted from the
    kernel.}
\begin{tabular}{>{\rowmac}r|>{\rowmac}r>{\rowmac}r>{\rowmac}r>{\rowmac}r>{\rowmac}r>{\rowmac}r|>{\rowmac}r>{\rowmac}r>{\rowmac}r>{\rowmac}r|>{\rowmac}r>{\rowmac}r>{\rowmac}r>{\rowmac}l<{\clearrow}}
    \toprule
    &\multicolumn{6}{c|}{input instance}
    &\multicolumn{4}{c|}{kernel}
    &\multicolumn{4}{c}{comparison}\\
    $p$ & $|V|$ & $|V(R)|$ & $|R|$ & $b$ & $c$ & $\cost(W)$ & $|V'|$ & $|R'|$ & $\cost(W')$ & ms & $\frac{|V'|}{|V|}$ & $\frac{|V'|}{|V(R)|}$ & $\frac{|R'|}{|R|}$ & $\frac{\cost(W')}{\cost(W)}$ \\ \midrule 
 0.3    & 196   & 127     & 86    & 96  & 42  & 13\,090      & 127    & 86     & 13\,090       & 4  & 0.65                   & 1.00                 & 1.00               &1                         \\
 0.3    & 196   & 142     & 108   & 86  & 34  & 14\,220      & 142    & 108    & 14\,220       & 5  & 0.72                   & 1.00                 & 1.00               &1                         \\
 0.3    & 196   & 137     & 102   & 96  & 36  & 13\,510      & 137    & 102    & 13\,510       & 4  & 0.70                   & 1.00                 & 1.00               &1                         \\
 0.3    & 196   & 140     & 101   & 98  & 39  & 13\,765      & 140    & 101    & 13\,765       & 4  & 0.71                   & 1.00                 & 1.00               &1                         \\
 0.3    & 196   & 131     & 95    & 88  & 38  & 13\,275      & 131    & 95     & 13\,275       & 4  & 0.67                   & 1.00                 & 1.00               &1                         \\
 0.5    & 196   & 176     & 163   & 108 & 21  & 15\,780      & 176    & 163    & 15\,780       & 8  & 0.90                   & 1.00                 & 1.00               &1                         \\
 0.5    & 196   & 174     & 156   & 100 & 25  & 17\,120      & 174    & 156    & 17\,120       & 9  & 0.89                   & 1.00                 & 1.00               &1                         \\
 0.5    & 196   & 165     & 148   & 94  & 22  & 15\,465      & 165    & 148    & 15\,465       & 9  & 0.84                   & 1.00                 & 1.00               &1                         \\
 0.5    & 196   & 166     & 152   & 92  & 23  & 16\,920      & 166    & 152    & 16\,920       & 7  & 0.85                   & 1.00                 & 1.00               &1                         \\
 0.5    & 196   & 169     & 147   & 96  & 26  & 15\,835      & 169    & 147    & 15\,835       & 6  & 0.86                   & 1.00                 & 1.00               &1                         \\
\setrow{\bfseries} 0.7    & 196   & 188     & 211   & 96  & 7   & 20\,660      & 124    & 134    & 20\,560       & 67 & 0.63                   & 0.66                 & 0.64               &0.995                         \\
 0.7    & 196   & 192     & 238   & 120 & 2   & 22\,220      & 123    & 151    & 22\,220       & 81 & 0.63                   & 0.64                 & 0.63               &1                         \\
 0.7    & 196   & 191     & 219   & 92  & 6   & 20\,785      & 118    & 132    & 20\,785       & 86 & 0.60                   & 0.62                 & 0.60               &1                         \\
 0.7    & 196   & 192     & 225   & 98  & 3   & 20\,815      & 103    & 123    & 20\,815       & 88 & 0.53                   & 0.54                 & 0.55               &1                         \\
\setrow{\bfseries} 0.7    & 196   & 191     & 223   & 106 & 3   & 21\,150      & 110    & 124    & 21\,250       & 87 & 0.56                   & 0.58                 & 0.56               &1.005                         \\
    \bottomrule
\end{tabular}

\label{tab:real}
\end{table}

\begin{table}[p]
  \small
  \centering
  \caption{Results on the instances from Berliner Stadtreinigung.}

  \begin{tabular}{rrrrrr|rrrr|rrrl}
    \toprule
    \multicolumn{6}{c|}{input instance}
    &\multicolumn{4}{c|}{kernel}
    &\multicolumn{3}{c}{comparison}\\
    $|V|$ & $|V(R)|$ & $|R|$ & $b$ & $c$ & $\cost(W)$ & $|V'|$ & $|R'|$ & $\cost(W')$ & ms  & $\frac{|V'|}{|V|}$ & $\frac{|V'|}{|V(R)|}$ & $\frac{|R'|}{|R|}$ & $\frac{\cost(W')}{\cost(W)}$ \\ \midrule 
    2\,593  & 285     & 289   & 34  & 3   & 21\,911      & 62     & 66     & 21\,911       & 263 & 0.02& 0.22                 & 0.23               & 1                         \\
    5\,097  & 369     & 408   & 56  & 3   & 31\,694      & 70     & 82     & 31\,694       & 435 & 0.01& 0.19                 & 0.20               & 1                         \\
    \bottomrule
  \end{tabular}
  \label{tab:berlin}
\end{table}

\begin{table}[p]
  \small
  \centering
  \caption{Results on the ur-$n$-$d$-$p$ instances.  In these instances, $|V|=|V(R)|$.  Highlighted are rows
    where the weight of an approximate solution for the input instances
    differs from the weight of an approximate solution lifted from the
    kernel.}

  \begin{tabular}{>{\rowmac}r>{\rowmac}r>{\rowmac}r|>{\rowmac}r>{\rowmac}r>{\rowmac}r>{\rowmac}r>{\rowmac}r|>{\rowmac}r>{\rowmac}r>{\rowmac}r>{\rowmac}r|>{\rowmac}r>{\rowmac}r>{\rowmac}l>{\rowmac}l<{\clearrow}}
  \toprule
  \multicolumn{3}{c|}{parameters}
  &\multicolumn{5}{c|}{input instance}
  &\multicolumn{4}{c|}{kernel}
  &\multicolumn{4}{c}{comparison}\\
   $n$ & $d$ & $p$  & $|V(R)|$ & $|R|$ & $b$ & $c$ & $\cost(W)$ & $|V'|$ & $|R'|$ & $\cost(W')$ & ms     & $\frac{|V'|}{|V(R)|}$ & $\frac{|R'|}{|R|}$ & $\frac{\cost(W')}{\cost(W)}$ & $\frac{\cost(W')}{\text{opt}}$ \\\midrule
500    & 3   & 0.25 & 298      & 206   & 218 & 99  & 18\,004    & 298    & 206    & 18\,004     & 9      & 1.00                  & 1.00               & 1                            & 1.0421                         \\
500    & 3   & 0.50 & 458      & 464   & 246 & 58  & 24\,249    & 449    & 454    & 24\,249     & 38     & 0.98                  & 0.98               & 1                            & 1.0260                         \\
\setrow{\bfseries}500    & 3   & 0.75 & 493      & 671   & 246 & 19  & 30\,141    & 338    & 438    & 30\,161     & 336    & 0.69                  & 0.65               & 1.0007                       & 1.0021                         \\
500    & 4   & 0.25 & 343      & 268   & 216 & 85  & 19\,152    & 343    & 268    & 19\,152     & 12     & 1.00                  & 1.00               & 1                            & 1.0741                         \\
\setrow{\bfseries}500    & 4   & 0.50 & 476      & 582   & 242 & 19  & 29\,865    & 346    & 400    & 29\,845     & 337    & 0.73                  & 0.69               & 0.9993                       & 1.0066                         \\
500    & 4   & 0.75 & 498      & 848   & 242 & 2   & 38\,692    & 244    & 339    & 38\,692     & 644    & 0.49                  & 0.40               & 1                            & 1                              \\
500    & 5   & 0.25 & 388      & 322   & 238 & 80  & 21\,124    & 387    & 321    & 21\,124     & 30     & 1.00                  & 1.00               & 1                            & 1.0511                         \\
\setrow{\bfseries}500    & 5   & 0.50 & 490      & 672   & 242 & 5   & 34\,560    & 265    & 334    & 34\,524     & 650    & 0.54                  & 0.50               & 0.9990                       & 1.0010                         \\
500    & 5   & 0.75 & 498      & 1001  & 252 & 1   & 48\,307    & 255    & 377    & 48\,307     & 543    & 0.51                  & 0.38               & 1                            & 1                              \\
500    & 6   & 0.25 & 416      & 405   & 232 & 53  & 25\,214    & 406    & 392    & 25\,214     & 39     & 0.98                  & 0.97               & 1                            & 1.0268                         \\
\setrow{\bfseries}500    & 6   & 0.50 & 496      & 793   & 248 & 2   & 42\,845    & 256    & 357    & 42\,853     & 648    & 0.52                  & 0.45               & 1.0002                       & 1.0006                         \\
500    & 6   & 0.75 & 499      & 1157  & 250 & 1   & 58\,971    & 250    & 396    & 58\,971     & 570    & 0.50                  & 0.34               & 1                            & 1                              \\
700    & 3   & 0.25 & 452      & 321   & 328 & 140 & 22\,114    & 451    & 320    & 22\,114     & 15     & 1.00                  & 1.00               & 1                            & 1.0474                         \\
700    & 3   & 0.50 & 662      & 648   & 378 & 100 & 29\,289    & 651    & 635    & 29\,288     & 63     & 0.98                  & 0.98               & 1                            & 1.0218                         \\
\setrow{\bfseries}700    & 3   & 0.75 & 744      & 979   & 390 & 16  & 36\,588    & 423    & 540    & 36\,732     & 971    & 0.57                  & 0.55               & 1.0039                       & 1.0039                         \\
700    & 4   & 0.25 & 538      & 439   & 340 & 122 & 24\,084    & 536    & 437    & 24\,084     & 22     & 1.00                  & 1.00               & 1                            & 1.0677                         \\
\setrow{\bfseries}700    & 4   & 0.50 & 713      & 808   & 378 & 57  & 32\,830    & 655    & 733    & 32\,857     & 229    & 0.92                  & 0.91               & 1.0008                       & 1.0112                         \\
\setrow{\bfseries}700    & 4   & 0.75 & 745      & 1261  & 356 & 3   & 47\,769    & 366    & 498    & 47\,774     & 1\,486 & 0.49                  & 0.39               & 1.0001                       & 1.0002                         \\
\setrow{\bfseries}700    & 5   & 0.25 & 580      & 506   & 344 & 108 & 26\,315    & 577    & 503    & 26\,317     & 27     & 0.99                  & 0.99               & 1.0001                       & 1.0472                         \\
\setrow{\bfseries}700    & 5   & 0.50 & 724      & 1003  & 398 & 15  & 41\,897    & 418    & 521    & 41\,946     & 1\,012 & 0.58                  & 0.52               & 1.0012                       & 1.0041                         \\
700    & 5   & 0.75 & 748      & 1459  & 380 & 1   & 58\,416    & 388    & 592    & 58\,416     & 1\,145 & 0.52                  & 0.41               & 1                            & 1                              \\
700    & 6   & 0.25 & 593      & 530   & 360 & 103 & 28\,920    & 591    & 528    & 28\,920     & 27     & 1.00                  & 1.00               & 1                            & 1.0373                         \\
\setrow{\bfseries}700    & 6   & 0.50 & 741      & 1179  & 376 & 2   & 50\,492    & 385    & 528    & 50\,508     & 1\,391 & 0.52                  & 0.45               & 1.0003                       & 1.0003                         \\
700    & 6   & 0.75 & 749      & 1747  & 396 & 1   & 72\,950    & 397    & 615    & 72\,950     & 1\,255 & 0.53                  & 0.35               & 1                            & 1                              \\
1\,000 & 3   & 0.25 & 605      & 411   & 442 & 204 & 25\,460    & 605    & 411    & 25\,460     & 19     & 1.00                  & 1.00               & 1                            & 1.0647                         \\
1\,000 & 3   & 0.50 & 892      & 892   & 502 & 124 & 33\,981    & 865    & 862    & 33\,981     & 146    & 0.97                  & 0.97               & 1                            & 1.0270                         \\
\setrow{\bfseries}1\,000 & 3   & 0.75 & 980      & 1308  & 514 & 24  & 42\,894    & 585    & 739    & 43\,176     & 1\,809 & 0.60                  & 0.56               & 1.0066                       & 1.0089                         \\
1\,000 & 4   & 0.25 & 709      & 564   & 466 & 167 & 27\,290    & 703    & 558    & 27\,290     & 40     & 0.99                  & 0.99               & 1                            & 1.0682                         \\
\setrow{\bfseries}1\,000 & 4   & 0.50 & 929      & 1086  & 506 & 71  & 39\,607    & 856    & 983    & 39\,609     & 344    & 0.92                  & 0.91               & 1.0001                       & 1.0154                         \\
\setrow{\bfseries}1\,000 & 4   & 0.75 & 996      & 1684  & 496 & 4   & 55\,967    & 514    & 694    & 56\,010     & 2\,689 & 0.52                  & 0.41               & 1.0008                       & 1.0009                         \\
\setrow{\bfseries}1\,000 & 5   & 0.25 & 766      & 661   & 488 & 149 & 30\,464    & 765    & 660    & 30\,467     & 30     & 1.00                  & 1.00               & 1.0001                       & 1.0515                         \\
\setrow{\bfseries}1\,000 & 5   & 0.50 & 975      & 1352  & 494 & 5   & 49\,197    & 524    & 667    & 49\,216     & 2\,544 & 0.54                  & 0.49               & 1.0004                       & 1.0012                         \\
1\,000 & 5   & 0.75 & 1000     & 2029  & 516 & 2   & 70\,231    & 521    & 763    & 70\,231     & 2\,650 & 0.52                  & 0.38               & 1                            & 1                              \\
\setrow{\bfseries}1\,000 & 6   & 0.25 & 802      & 728   & 486 & 138 & 33\,688    & 792    & 717    & 33\,690     & 60     & 0.99                  & 0.98               & 1.0001                       & 1.0417                         \\
\setrow{\bfseries}1\,000 & 6   & 0.50 & 980      & 1563  & 510 & 9   & 58\,854    & 523    & 702    & 58\,951     & 2\,262 & 0.53                  & 0.45               & 1.0016                       & 1.0026                         \\
1\,000 & 6   & 0.75 & 1000     & 2304  & 502 & 1   & 82\,481    & 504    & 781    & 82\,481     & 2\,265 & 0.50                  & 0.34               & 1                            & 1                              \\
  \bottomrule
\end{tabular}
\label{tab:random}
\end{table}

\paragraph{Experimental setup.}
Since our main goal is evaluating the effect of our data reduction
rather than the running time of our algorithm,
we sacrificed speed for simplicity and implemented
the part of our PSAKS described in \cref{step:shrinkgraph}
in approximately 200 lines of Python
(not counting the testing environment)
using the NetworkX library for
finding minimum\hyp weight perfect matchings,
(bi)connected components,
cut vertices, and
spanning trees.\footnote{\url{https://networkx.github.io/}}
These routines are also contained in
highly optimized C++ libraries like LEMON\footnote{\url{https://lemon.cs.elte.hu/trac/lemon}}
and we expect that one could achieve a speedup by orders of magnitude
by implementing our PSAKS in C++.
We did not implement the
weight reduction step described in \cref{step:weights},
since it is mainly of theoretical interest
(to prove a polynomial \emph{size} of the kernel
rather than just a polynomial number of vertices and edges).

We kernelized each of the instances listed above
for $\varepsilon=1/10$, that is,
we require that a $11\alpha/10$-approximation
be recoverable from an $\alpha$-approximate solution in the kernel.
Since we do not reduce weights,
this means we apply \cref{rrule:ballred} with
$\varepsilon_1=\varepsilon=1/10$ in \eqref{gamma}.

We also apply
the folklore 3/2-approximation algorithm
based on the \citeauthor{Chr76}-\citeauthor{Ser78}
algorithm for the metric Traveling Salesman Problem \citep{Chr76,Ser78,BSxxb}
to compute a solution
in the original and kernelized instance
and compare their weights.

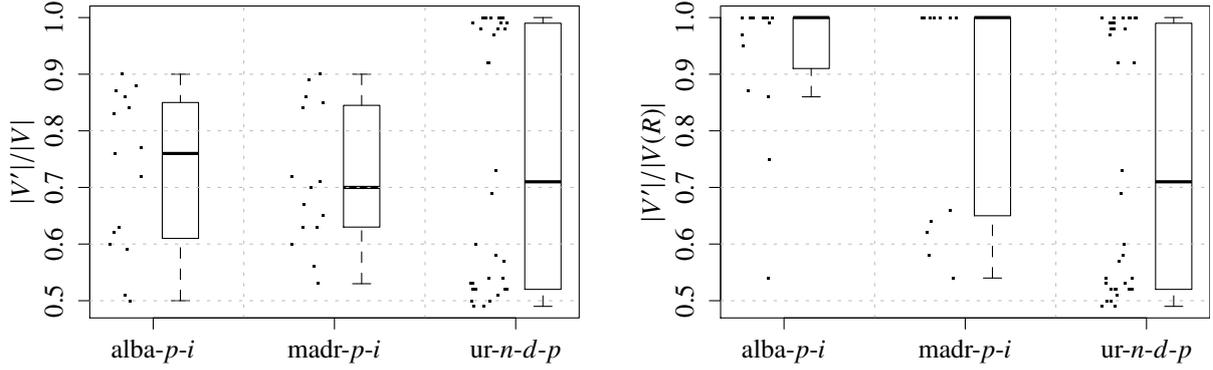
\begin{figure*}
  \centering
  \begin{tikzpicture}[x=1pt,y=1pt]
\definecolor{fillColor}{RGB}{255,255,255}
\path[use as bounding box,fill=fillColor,fill opacity=0.00] (0,0) rectangle (216.81,144.54);
\begin{scope}
\path[clip] ( 33.60, 25.20) rectangle (216.81,142.86);
\definecolor{drawColor}{RGB}{0,0,0}

\path[draw=drawColor,line width= 1.2pt,line join=round] ( 60.74, 87.23) -- ( 74.31, 87.23);

\path[draw=drawColor,line width= 0.4pt,dash pattern=on 4pt off 4pt ,line join=round,line cap=round] ( 67.53, 31.69) -- ( 67.53, 55.19);

\path[draw=drawColor,line width= 0.4pt,dash pattern=on 4pt off 4pt ,line join=round,line cap=round] ( 67.53,117.14) -- ( 67.53,106.46);

\path[draw=drawColor,line width= 0.4pt,line join=round,line cap=round] ( 64.14, 31.69) -- ( 70.92, 31.69);

\path[draw=drawColor,line width= 0.4pt,line join=round,line cap=round] ( 64.14,117.14) -- ( 70.92,117.14);

\path[draw=drawColor,line width= 0.4pt,line join=round,line cap=round] ( 60.74, 55.19) --
	( 74.31, 55.19) --
	( 74.31,106.46) --
	( 60.74,106.46) --
	( 60.74, 55.19);

\path[draw=drawColor,line width= 1.2pt,line join=round] (128.60, 74.42) -- (142.17, 74.42);

\path[draw=drawColor,line width= 0.4pt,dash pattern=on 4pt off 4pt ,line join=round,line cap=round] (135.38, 38.10) -- (135.38, 59.46);

\path[draw=drawColor,line width= 0.4pt,dash pattern=on 4pt off 4pt ,line join=round,line cap=round] (135.38,117.14) -- (135.38,105.39);

\path[draw=drawColor,line width= 0.4pt,line join=round,line cap=round] (131.99, 38.10) -- (138.78, 38.10);

\path[draw=drawColor,line width= 0.4pt,line join=round,line cap=round] (131.99,117.14) -- (138.78,117.14);

\path[draw=drawColor,line width= 0.4pt,line join=round,line cap=round] (128.60, 59.46) --
	(142.17, 59.46) --
	(142.17,105.39) --
	(128.60,105.39) --
	(128.60, 59.46);

\path[draw=drawColor,line width= 1.2pt,line join=round] (196.45, 76.55) -- (210.02, 76.55);

\path[draw=drawColor,line width= 0.4pt,dash pattern=on 4pt off 4pt ,line join=round,line cap=round] (203.24, 29.56) -- (203.24, 35.97);

\path[draw=drawColor,line width= 0.4pt,dash pattern=on 4pt off 4pt ,line join=round,line cap=round] (203.24,138.50) -- (203.24,136.37);

\path[draw=drawColor,line width= 0.4pt,line join=round,line cap=round] (199.85, 29.56) -- (206.63, 29.56);

\path[draw=drawColor,line width= 0.4pt,line join=round,line cap=round] (199.85,138.50) -- (206.63,138.50);

\path[draw=drawColor,line width= 0.4pt,line join=round,line cap=round] (196.45, 35.97) --
	(210.02, 35.97) --
	(210.02,136.37) --
	(196.45,136.37) --
	(196.45, 35.97);
\end{scope}
\begin{scope}
\path[clip] (  0.00,  0.00) rectangle (216.81,144.54);
\definecolor{drawColor}{RGB}{0,0,0}

\path[draw=drawColor,line width= 0.4pt,line join=round,line cap=round] ( 33.60, 31.69) -- ( 33.60,138.50);

\path[draw=drawColor,line width= 0.4pt,line join=round,line cap=round] ( 33.60, 31.69) -- ( 29.40, 31.69);

\path[draw=drawColor,line width= 0.4pt,line join=round,line cap=round] ( 33.60, 53.06) -- ( 29.40, 53.06);

\path[draw=drawColor,line width= 0.4pt,line join=round,line cap=round] ( 33.60, 74.42) -- ( 29.40, 74.42);

\path[draw=drawColor,line width= 0.4pt,line join=round,line cap=round] ( 33.60, 95.78) -- ( 29.40, 95.78);

\path[draw=drawColor,line width= 0.4pt,line join=round,line cap=round] ( 33.60,117.14) -- ( 29.40,117.14);

\path[draw=drawColor,line width= 0.4pt,line join=round,line cap=round] ( 33.60,138.50) -- ( 29.40,138.50);

\node[text=drawColor,rotate= 90.00,anchor=base,inner sep=0pt, outer sep=0pt, scale=  1.00] at ( 23.52, 31.69) {0.5};

\node[text=drawColor,rotate= 90.00,anchor=base,inner sep=0pt, outer sep=0pt, scale=  1.00] at ( 23.52, 53.06) {0.6};

\node[text=drawColor,rotate= 90.00,anchor=base,inner sep=0pt, outer sep=0pt, scale=  1.00] at ( 23.52, 74.42) {0.7};

\node[text=drawColor,rotate= 90.00,anchor=base,inner sep=0pt, outer sep=0pt, scale=  1.00] at ( 23.52, 95.78) {0.8};

\node[text=drawColor,rotate= 90.00,anchor=base,inner sep=0pt, outer sep=0pt, scale=  1.00] at ( 23.52,117.14) {0.9};

\node[text=drawColor,rotate= 90.00,anchor=base,inner sep=0pt, outer sep=0pt, scale=  1.00] at ( 23.52,138.50) {1.0};
\end{scope}
\begin{scope}
\path[clip] (  0.00,  0.00) rectangle (216.81,144.54);
\definecolor{drawColor}{RGB}{0,0,0}

\node[text=drawColor,rotate= 90.00,anchor=base,inner sep=0pt, outer sep=0pt, scale=  1.00] at ( 10.92, 84.03) {$|V'|/|V|$};
\end{scope}
\begin{scope}
\path[clip] (  0.00,  0.00) rectangle (216.81,144.54);
\definecolor{drawColor}{RGB}{0,0,0}

\path[draw=drawColor,line width= 0.4pt,line join=round,line cap=round] ( 33.60, 25.20) --
	(216.81, 25.20) --
	(216.81,142.86) --
	( 33.60,142.86) --
	( 33.60, 25.20);
\end{scope}
\begin{scope}
\path[clip] ( 33.60, 25.20) rectangle (216.81,142.86);
\definecolor{drawColor}{RGB}{190,190,190}

\path[draw=drawColor,line width= 0.4pt,dash pattern=on 1pt off 3pt ,line join=round,line cap=round] ( 91.28, 25.20) -- ( 91.28,142.86);

\path[draw=drawColor,line width= 0.4pt,dash pattern=on 1pt off 3pt ,line join=round,line cap=round] (159.13, 25.20) -- (159.13,142.86);

\path[draw=drawColor,line width= 0.4pt,dash pattern=on 1pt off 3pt ,line join=round,line cap=round] ( 33.60, 31.69) -- (216.81, 31.69);

\path[draw=drawColor,line width= 0.4pt,dash pattern=on 1pt off 3pt ,line join=round,line cap=round] ( 33.60, 53.06) -- (216.81, 53.06);

\path[draw=drawColor,line width= 0.4pt,dash pattern=on 1pt off 3pt ,line join=round,line cap=round] ( 33.60, 74.42) -- (216.81, 74.42);

\path[draw=drawColor,line width= 0.4pt,dash pattern=on 1pt off 3pt ,line join=round,line cap=round] ( 33.60, 95.78) -- (216.81, 95.78);

\path[draw=drawColor,line width= 0.4pt,dash pattern=on 1pt off 3pt ,line join=round,line cap=round] ( 33.60,117.14) -- (216.81,117.14);
\definecolor{fillColor}{RGB}{0,0,0}

\path[fill=fillColor] ( 42.20, 56.83) rectangle ( 43.20, 57.83);

\path[fill=fillColor] ( 47.01, 50.42) rectangle ( 48.01, 51.42);

\path[fill=fillColor] ( 46.35, 33.33) rectangle ( 47.35, 34.33);

\path[fill=fillColor] ( 40.63, 52.56) rectangle ( 41.63, 53.56);

\path[fill=fillColor] ( 44.22, 58.96) rectangle ( 45.22, 59.96);

\path[fill=fillColor] ( 42.76,110.23) rectangle ( 43.76,111.23);

\path[fill=fillColor] ( 46.17,108.10) rectangle ( 47.17,109.10);

\path[fill=fillColor] ( 48.01,103.82) rectangle ( 49.01,104.82);

\path[fill=fillColor] ( 42.56, 86.73) rectangle ( 43.56, 87.73);

\path[fill=fillColor] ( 50.58,112.37) rectangle ( 51.58,113.37);

\path[fill=fillColor] ( 52.51, 88.87) rectangle ( 53.51, 89.87);

\path[fill=fillColor] ( 48.23, 31.19) rectangle ( 49.23, 32.19);

\path[fill=fillColor] ( 45.04,116.64) rectangle ( 46.04,117.64);

\path[fill=fillColor] ( 52.18, 78.19) rectangle ( 53.18, 79.19);

\path[fill=fillColor] ( 42.12,101.69) rectangle ( 43.12,102.69);

\path[fill=fillColor] (120.35, 63.24) rectangle (121.35, 64.24);

\path[fill=fillColor] (108.86, 78.19) rectangle (109.86, 79.19);

\path[fill=fillColor] (115.85, 73.92) rectangle (116.85, 74.92);

\path[fill=fillColor] (119.44, 76.05) rectangle (120.44, 77.05);

\path[fill=fillColor] (113.15, 67.51) rectangle (114.15, 68.51);

\path[fill=fillColor] (119.29,116.64) rectangle (120.29,117.64);

\path[fill=fillColor] (115.26,114.50) rectangle (116.26,115.50);

\path[fill=fillColor] (112.85,103.82) rectangle (113.85,104.82);

\path[fill=fillColor] (120.36,105.96) rectangle (121.36,106.96);

\path[fill=fillColor] (113.89,108.10) rectangle (114.89,109.10);

\path[fill=fillColor] (112.93, 58.96) rectangle (113.93, 59.96);

\path[fill=fillColor] (118.00, 58.96) rectangle (119.00, 59.96);

\path[fill=fillColor] (108.68, 52.56) rectangle (109.68, 53.56);

\path[fill=fillColor] (118.71, 37.60) rectangle (119.71, 38.60);

\path[fill=fillColor] (117.06, 44.01) rectangle (118.06, 45.01);

\path[fill=fillColor] (183.26,138.00) rectangle (184.26,139.00);

\path[fill=fillColor] (185.83,133.73) rectangle (186.83,134.73);

\path[fill=fillColor] (183.80, 71.78) rectangle (184.80, 72.78);

\path[fill=fillColor] (187.61,138.00) rectangle (188.61,139.00);

\path[fill=fillColor] (185.18, 80.33) rectangle (186.18, 81.33);

\path[fill=fillColor] (176.78, 29.06) rectangle (177.78, 30.06);

\path[fill=fillColor] (181.36,138.00) rectangle (182.36,139.00);

\path[fill=fillColor] (182.63, 39.74) rectangle (183.63, 40.74);

\path[fill=fillColor] (185.74, 33.33) rectangle (186.74, 34.33);

\path[fill=fillColor] (179.52,133.73) rectangle (180.52,134.73);

\path[fill=fillColor] (177.48, 35.47) rectangle (178.48, 36.47);

\path[fill=fillColor] (176.19, 31.19) rectangle (177.19, 32.19);

\path[fill=fillColor] (187.94,138.00) rectangle (188.94,139.00);

\path[fill=fillColor] (188.68,133.73) rectangle (189.68,134.73);

\path[fill=fillColor] (188.28, 46.15) rectangle (189.28, 47.15);

\path[fill=fillColor] (186.45,138.00) rectangle (187.45,139.00);

\path[fill=fillColor] (182.43,120.91) rectangle (183.43,121.91);

\path[fill=fillColor] (180.59, 29.06) rectangle (181.59, 30.06);

\path[fill=fillColor] (176.63,135.87) rectangle (177.63,136.87);

\path[fill=fillColor] (185.32, 48.28) rectangle (186.32, 49.28);

\path[fill=fillColor] (176.59, 35.47) rectangle (177.59, 36.47);

\path[fill=fillColor] (187.16,138.00) rectangle (188.16,139.00);

\path[fill=fillColor] (188.47, 35.47) rectangle (189.47, 36.47);

\path[fill=fillColor] (176.52, 37.60) rectangle (177.52, 38.60);

\path[fill=fillColor] (186.85,138.00) rectangle (187.85,139.00);

\path[fill=fillColor] (184.30,131.59) rectangle (185.30,132.59);

\path[fill=fillColor] (177.47, 52.56) rectangle (178.47, 53.56);

\path[fill=fillColor] (187.01,135.87) rectangle (188.01,136.87);

\path[fill=fillColor] (182.30,120.91) rectangle (183.30,121.91);

\path[fill=fillColor] (189.12, 35.47) rectangle (190.12, 36.47);

\path[fill=fillColor] (180.12,138.00) rectangle (181.12,139.00);

\path[fill=fillColor] (187.64, 39.74) rectangle (188.64, 40.74);

\path[fill=fillColor] (176.92, 35.47) rectangle (177.92, 36.47);

\path[fill=fillColor] (189.14,135.87) rectangle (190.14,136.87);

\path[fill=fillColor] (175.75, 37.60) rectangle (176.75, 38.60);

\path[fill=fillColor] (182.91, 31.19) rectangle (183.91, 32.19);
\end{scope}
\begin{scope}
\path[clip] (  0.00,  0.00) rectangle (216.81,144.54);
\definecolor{drawColor}{RGB}{0,0,0}

\path[draw=drawColor,line width= 0.4pt,line join=round,line cap=round] ( 57.35, 25.20) -- (193.06, 25.20);

\path[draw=drawColor,line width= 0.4pt,line join=round,line cap=round] ( 57.35, 25.20) -- ( 57.35, 21.00);

\path[draw=drawColor,line width= 0.4pt,line join=round,line cap=round] (125.20, 25.20) -- (125.20, 21.00);

\path[draw=drawColor,line width= 0.4pt,line join=round,line cap=round] (193.06, 25.20) -- (193.06, 21.00);

\node[text=drawColor,anchor=base,inner sep=0pt, outer sep=0pt, scale=  1.00] at ( 57.35, 10.08) {alba-$p$-$i$};

\node[text=drawColor,anchor=base,inner sep=0pt, outer sep=0pt, scale=  1.00] at (125.20, 10.08) {madr-$p$-$i$};

\node[text=drawColor,anchor=base,inner sep=0pt, outer sep=0pt, scale=  1.00] at (193.06, 10.08) {ur-$n$-$d$-$p$};
\end{scope}
\end{tikzpicture}
   \hfill
  \begin{tikzpicture}[x=1pt,y=1pt]
\definecolor{fillColor}{RGB}{255,255,255}
\path[use as bounding box,fill=fillColor,fill opacity=0.00] (0,0) rectangle (216.81,144.54);
\begin{scope}
\path[clip] ( 33.60, 25.20) rectangle (216.81,142.86);
\definecolor{drawColor}{RGB}{0,0,0}

\path[draw=drawColor,line width= 1.2pt,line join=round] ( 60.74,138.50) -- ( 74.31,138.50);

\path[draw=drawColor,line width= 0.4pt,dash pattern=on 4pt off 4pt ,line join=round,line cap=round] ( 67.53,108.60) -- ( 67.53,119.28);

\path[draw=drawColor,line width= 0.4pt,dash pattern=on 4pt off 4pt ,line join=round,line cap=round] ( 67.53,138.50) -- ( 67.53,138.50);

\path[draw=drawColor,line width= 0.4pt,line join=round,line cap=round] ( 64.14,108.60) -- ( 70.92,108.60);

\path[draw=drawColor,line width= 0.4pt,line join=round,line cap=round] ( 64.14,138.50) -- ( 70.92,138.50);

\path[draw=drawColor,line width= 0.4pt,line join=round,line cap=round] ( 60.74,119.28) --
	( 74.31,119.28) --
	( 74.31,138.50) --
	( 60.74,138.50) --
	( 60.74,119.28);

\path[draw=drawColor,line width= 1.2pt,line join=round] (128.60,138.50) -- (142.17,138.50);

\path[draw=drawColor,line width= 0.4pt,dash pattern=on 4pt off 4pt ,line join=round,line cap=round] (135.38, 40.24) -- (135.38, 63.74);

\path[draw=drawColor,line width= 0.4pt,dash pattern=on 4pt off 4pt ,line join=round,line cap=round] (135.38,138.50) -- (135.38,138.50);

\path[draw=drawColor,line width= 0.4pt,line join=round,line cap=round] (131.99, 40.24) -- (138.78, 40.24);

\path[draw=drawColor,line width= 0.4pt,line join=round,line cap=round] (131.99,138.50) -- (138.78,138.50);

\path[draw=drawColor,line width= 0.4pt,line join=round,line cap=round] (128.60, 63.74) --
	(142.17, 63.74) --
	(142.17,138.50) --
	(128.60,138.50) --
	(128.60, 63.74);

\path[draw=drawColor,line width= 1.2pt,line join=round] (196.45, 76.55) -- (210.02, 76.55);

\path[draw=drawColor,line width= 0.4pt,dash pattern=on 4pt off 4pt ,line join=round,line cap=round] (203.24, 29.56) -- (203.24, 35.97);

\path[draw=drawColor,line width= 0.4pt,dash pattern=on 4pt off 4pt ,line join=round,line cap=round] (203.24,138.50) -- (203.24,136.37);

\path[draw=drawColor,line width= 0.4pt,line join=round,line cap=round] (199.85, 29.56) -- (206.63, 29.56);

\path[draw=drawColor,line width= 0.4pt,line join=round,line cap=round] (199.85,138.50) -- (206.63,138.50);

\path[draw=drawColor,line width= 0.4pt,line join=round,line cap=round] (196.45, 35.97) --
	(210.02, 35.97) --
	(210.02,136.37) --
	(196.45,136.37) --
	(196.45, 35.97);
\end{scope}
\begin{scope}
\path[clip] (  0.00,  0.00) rectangle (216.81,144.54);
\definecolor{drawColor}{RGB}{0,0,0}

\path[draw=drawColor,line width= 0.4pt,line join=round,line cap=round] ( 33.60, 31.69) -- ( 33.60,138.50);

\path[draw=drawColor,line width= 0.4pt,line join=round,line cap=round] ( 33.60, 31.69) -- ( 29.40, 31.69);

\path[draw=drawColor,line width= 0.4pt,line join=round,line cap=round] ( 33.60, 53.06) -- ( 29.40, 53.06);

\path[draw=drawColor,line width= 0.4pt,line join=round,line cap=round] ( 33.60, 74.42) -- ( 29.40, 74.42);

\path[draw=drawColor,line width= 0.4pt,line join=round,line cap=round] ( 33.60, 95.78) -- ( 29.40, 95.78);

\path[draw=drawColor,line width= 0.4pt,line join=round,line cap=round] ( 33.60,117.14) -- ( 29.40,117.14);

\path[draw=drawColor,line width= 0.4pt,line join=round,line cap=round] ( 33.60,138.50) -- ( 29.40,138.50);

\node[text=drawColor,rotate= 90.00,anchor=base,inner sep=0pt, outer sep=0pt, scale=  1.00] at ( 23.52, 31.69) {0.5};

\node[text=drawColor,rotate= 90.00,anchor=base,inner sep=0pt, outer sep=0pt, scale=  1.00] at ( 23.52, 53.06) {0.6};

\node[text=drawColor,rotate= 90.00,anchor=base,inner sep=0pt, outer sep=0pt, scale=  1.00] at ( 23.52, 74.42) {0.7};

\node[text=drawColor,rotate= 90.00,anchor=base,inner sep=0pt, outer sep=0pt, scale=  1.00] at ( 23.52, 95.78) {0.8};

\node[text=drawColor,rotate= 90.00,anchor=base,inner sep=0pt, outer sep=0pt, scale=  1.00] at ( 23.52,117.14) {0.9};

\node[text=drawColor,rotate= 90.00,anchor=base,inner sep=0pt, outer sep=0pt, scale=  1.00] at ( 23.52,138.50) {1.0};
\end{scope}
\begin{scope}
\path[clip] (  0.00,  0.00) rectangle (216.81,144.54);
\definecolor{drawColor}{RGB}{0,0,0}

\node[text=drawColor,rotate= 90.00,anchor=base,inner sep=0pt, outer sep=0pt, scale=  1.00] at ( 10.92, 84.03) {$|V'|/|V(R)|$};
\end{scope}
\begin{scope}
\path[clip] (  0.00,  0.00) rectangle (216.81,144.54);
\definecolor{drawColor}{RGB}{0,0,0}

\path[draw=drawColor,line width= 0.4pt,line join=round,line cap=round] ( 33.60, 25.20) --
	(216.81, 25.20) --
	(216.81,142.86) --
	( 33.60,142.86) --
	( 33.60, 25.20);
\end{scope}
\begin{scope}
\path[clip] ( 33.60, 25.20) rectangle (216.81,142.86);
\definecolor{drawColor}{RGB}{190,190,190}

\path[draw=drawColor,line width= 0.4pt,dash pattern=on 1pt off 3pt ,line join=round,line cap=round] ( 91.28, 25.20) -- ( 91.28,142.86);

\path[draw=drawColor,line width= 0.4pt,dash pattern=on 1pt off 3pt ,line join=round,line cap=round] (159.13, 25.20) -- (159.13,142.86);

\path[draw=drawColor,line width= 0.4pt,dash pattern=on 1pt off 3pt ,line join=round,line cap=round] ( 33.60, 31.69) -- (216.81, 31.69);

\path[draw=drawColor,line width= 0.4pt,dash pattern=on 1pt off 3pt ,line join=round,line cap=round] ( 33.60, 53.06) -- (216.81, 53.06);

\path[draw=drawColor,line width= 0.4pt,dash pattern=on 1pt off 3pt ,line join=round,line cap=round] ( 33.60, 74.42) -- (216.81, 74.42);

\path[draw=drawColor,line width= 0.4pt,dash pattern=on 1pt off 3pt ,line join=round,line cap=round] ( 33.60, 95.78) -- (216.81, 95.78);

\path[draw=drawColor,line width= 0.4pt,dash pattern=on 1pt off 3pt ,line join=round,line cap=round] ( 33.60,117.14) -- (216.81,117.14);
\definecolor{fillColor}{RGB}{0,0,0}

\path[fill=fillColor] ( 50.07,138.00) rectangle ( 51.07,139.00);

\path[fill=fillColor] ( 44.31,138.00) rectangle ( 45.31,139.00);

\path[fill=fillColor] ( 41.33,138.00) rectangle ( 42.33,139.00);

\path[fill=fillColor] ( 45.00,138.00) rectangle ( 46.00,139.00);

\path[fill=fillColor] ( 49.01,138.00) rectangle ( 50.01,139.00);

\path[fill=fillColor] ( 45.31,138.00) rectangle ( 46.31,139.00);

\path[fill=fillColor] ( 52.32,138.00) rectangle ( 53.32,139.00);

\path[fill=fillColor] ( 51.44,135.87) rectangle ( 52.44,136.87);

\path[fill=fillColor] ( 41.08,131.59) rectangle ( 42.08,132.59);

\path[fill=fillColor] ( 50.54,138.00) rectangle ( 51.54,139.00);

\path[fill=fillColor] ( 50.90,108.10) rectangle ( 51.90,109.10);

\path[fill=fillColor] ( 51.05, 39.74) rectangle ( 52.05, 40.74);

\path[fill=fillColor] ( 41.57,127.32) rectangle ( 42.57,128.32);

\path[fill=fillColor] ( 51.19, 84.60) rectangle ( 52.19, 85.60);

\path[fill=fillColor] ( 43.49,110.23) rectangle ( 44.49,111.23);

\path[fill=fillColor] (118.94,138.00) rectangle (119.94,139.00);

\path[fill=fillColor] (108.38,138.00) rectangle (109.38,139.00);

\path[fill=fillColor] (112.91,138.00) rectangle (113.91,139.00);

\path[fill=fillColor] (112.90,138.00) rectangle (113.90,139.00);

\path[fill=fillColor] (109.29,138.00) rectangle (110.29,139.00);

\path[fill=fillColor] (111.57,138.00) rectangle (112.57,139.00);

\path[fill=fillColor] (111.58,138.00) rectangle (112.58,139.00);

\path[fill=fillColor] (119.04,138.00) rectangle (120.04,139.00);

\path[fill=fillColor] (121.18,138.00) rectangle (122.18,139.00);

\path[fill=fillColor] (115.15,138.00) rectangle (116.15,139.00);

\path[fill=fillColor] (119.06, 65.37) rectangle (120.06, 66.37);

\path[fill=fillColor] (111.85, 61.10) rectangle (112.85, 62.10);

\path[fill=fillColor] (110.27, 56.83) rectangle (111.27, 57.83);

\path[fill=fillColor] (120.22, 39.74) rectangle (121.22, 40.74);

\path[fill=fillColor] (111.27, 48.28) rectangle (112.27, 49.28);

\path[fill=fillColor] (188.49,138.00) rectangle (189.49,139.00);

\path[fill=fillColor] (179.40,133.73) rectangle (180.40,134.73);

\path[fill=fillColor] (183.05, 71.78) rectangle (184.05, 72.78);

\path[fill=fillColor] (188.59,138.00) rectangle (189.59,139.00);

\path[fill=fillColor] (183.43, 80.33) rectangle (184.43, 81.33);

\path[fill=fillColor] (180.82, 29.06) rectangle (181.82, 30.06);

\path[fill=fillColor] (183.66,138.00) rectangle (184.66,139.00);

\path[fill=fillColor] (177.39, 39.74) rectangle (178.39, 40.74);

\path[fill=fillColor] (181.62, 33.33) rectangle (182.62, 34.33);

\path[fill=fillColor] (184.68,133.73) rectangle (185.68,134.73);

\path[fill=fillColor] (187.17, 35.47) rectangle (188.17, 36.47);

\path[fill=fillColor] (179.39, 31.19) rectangle (180.39, 32.19);

\path[fill=fillColor] (186.23,138.00) rectangle (187.23,139.00);

\path[fill=fillColor] (180.27,133.73) rectangle (181.27,134.73);

\path[fill=fillColor] (182.22, 46.15) rectangle (183.22, 47.15);

\path[fill=fillColor] (180.25,138.00) rectangle (181.25,139.00);

\path[fill=fillColor] (187.86,120.91) rectangle (188.86,121.91);

\path[fill=fillColor] (176.03, 29.06) rectangle (177.03, 30.06);

\path[fill=fillColor] (179.21,135.87) rectangle (180.21,136.87);

\path[fill=fillColor] (183.82, 48.28) rectangle (184.82, 49.28);

\path[fill=fillColor] (181.05, 35.47) rectangle (182.05, 36.47);

\path[fill=fillColor] (175.78,138.00) rectangle (176.78,139.00);

\path[fill=fillColor] (185.95, 35.47) rectangle (186.95, 36.47);

\path[fill=fillColor] (184.68, 37.60) rectangle (185.68, 38.60);

\path[fill=fillColor] (180.55,138.00) rectangle (181.55,139.00);

\path[fill=fillColor] (179.03,131.59) rectangle (180.03,132.59);

\path[fill=fillColor] (184.23, 52.56) rectangle (185.23, 53.56);

\path[fill=fillColor] (180.52,135.87) rectangle (181.52,136.87);

\path[fill=fillColor] (182.05,120.91) rectangle (183.05,121.91);

\path[fill=fillColor] (180.79, 35.47) rectangle (181.79, 36.47);

\path[fill=fillColor] (186.60,138.00) rectangle (187.60,139.00);

\path[fill=fillColor] (187.30, 39.74) rectangle (188.30, 40.74);

\path[fill=fillColor] (179.17, 35.47) rectangle (180.17, 36.47);

\path[fill=fillColor] (178.76,135.87) rectangle (179.76,136.87);

\path[fill=fillColor] (177.37, 37.60) rectangle (178.37, 38.60);

\path[fill=fillColor] (177.89, 31.19) rectangle (178.89, 32.19);
\end{scope}
\begin{scope}
\path[clip] (  0.00,  0.00) rectangle (216.81,144.54);
\definecolor{drawColor}{RGB}{0,0,0}

\path[draw=drawColor,line width= 0.4pt,line join=round,line cap=round] ( 57.35, 25.20) -- (193.06, 25.20);

\path[draw=drawColor,line width= 0.4pt,line join=round,line cap=round] ( 57.35, 25.20) -- ( 57.35, 21.00);

\path[draw=drawColor,line width= 0.4pt,line join=round,line cap=round] (125.20, 25.20) -- (125.20, 21.00);

\path[draw=drawColor,line width= 0.4pt,line join=round,line cap=round] (193.06, 25.20) -- (193.06, 21.00);

\node[text=drawColor,anchor=base,inner sep=0pt, outer sep=0pt, scale=  1.00] at ( 57.35, 10.08) {alba-$p$-$i$};

\node[text=drawColor,anchor=base,inner sep=0pt, outer sep=0pt, scale=  1.00] at (125.20, 10.08) {madr-$p$-$i$};

\node[text=drawColor,anchor=base,inner sep=0pt, outer sep=0pt, scale=  1.00] at (193.06, 10.08) {ur-$n$-$d$-$p$};
\end{scope}
\end{tikzpicture}
   \caption{Data reduction effect of our PSAKS
    relative to the total number of input vertices (left)
    and to the number of vertices incident to required edges (right).
    Each dot represents an instance.
    The boxes show the first quartile, the median, and the third quartile.
    The whiskers extend up to 1.5 times the interquartile range.}
  \label{fig:quartiles}
\end{figure*}

\paragraph{Experimental results.}
\cref{fig:quartiles} gives a rough idea
of the data reduction effect of our PSAKS
on the alba-$p$-$i$, madr-$p$-$i$, and ur-$n$-$d$-$p$ instances.
The complete experimental results,
including the two Berlin instances,
are shown in \cref{tab:alba,tab:real,tab:berlin,tab:random},
where additionally to the notation in \cref{def:bounds},
we denote by
\begin{compactitem}[$|V'|, |R'|$]
\item[$\cost(W)$] -- the weight of a 3/2-approximation computed in the input graph,
\item[$\cost(W')$] -- the weight of a 3/2-approximation computed in the kernel and lifted to the input graph,
\item[$|V'|, |R'|$] -- the number of vertices and required edges in the kernel, respectively, and by
\item[ms] -- the number of milliseconds it took to compute the kernel (not counting the time for computing pairwise shortest path lengths for
  establishing the triangle inequality using \cref{lem:triangle}).
\end{compactitem}
Since for the ur-$n$-$d$-$p$ instances,
the weight of an optimal solution is known,
we compare $\cost(W')$ to the optimum
in \cref{tab:random}.\footnote{For the alba-$p$-$i$ and madr-$p$-$i$ instances,
  the optimum is only known when considered as instances of the General Routing Problem,
  where also vertices have to be visited.}
Remarkably,
the best compression results are achieved on the Berlin instances,
the only instances consisting purely out of real\hyp world data: only 22\,\% of
the vertices incident to required edges remain.
\cref{fig:berlin} visualizes the kernelization effect on two
strongly
compressed instances.

In the following,
we discuss the data reduction effect in more detail
with the help
of the plots in \cref{fig:effect}.
Since some of the instances in the literature
(concretely, the ur-$n$-$d$-$p$ instances)
are already preprocessed with respect
to \cref{rrule:delverts},
we analyze the data reduction effect
with respect to the number~$|V(R)|$ of those input
vertices incident to required edges.
For the sake of completeness,
the data reduction effect
with respect to the number of \emph{all} input vertices
is shown in \cref{tab:alba,tab:real,tab:berlin,tab:random}.

\looseness=-1
In \cref{fig:to-v,fig:to-r},
we see that the effectivity of our PSAKS grows
with the number of input vertices incident to required edges
and with the number of required edges themselves.
In \cref{fig:to-c},
we see that, as expected,
the effectivity decreases as the number of connected components of~$G\langle R\rangle$ grows.
However,
in all three plots,
we see a clear clustering
of different instance types. It is \cref{fig:to-cavg}
that shows the determining feature for
the effectivity of our PSAKS:
in all instances types,
it
uniformly grows with the
average size of the connected components of~$G\langle R\rangle$.
This comes at no surprise,
since the main action of our PSAKS
is shrinking these connected components.

Regarding the solution quality, we point out that,
despite kernelizing all instances with $\varepsilon=1/10$
and thus allowing a weight increase by a factor of $1.1$
when lifting a solution from the kernel to the original instance,
the maximal such weight increase observed is 1.01 (for the alba-$p$-$i$ instances in \cref{tab:alba}),
whereas often no weight increase is observed.
In some cases,
kernelization leads to better solutions
(for the instances in \cref{tab:real,tab:random}).
Also, in \cref{tab:random},
the 3/2-approximate solution lifted from the kernel
turns out to be worse than the optimum by a factor not larger than 1.075
and is thus way below the allowance.

\begin{figure}[p]
  \begin{subfigure}[b]{0.45\textwidth}
    \includegraphics[width=\textwidth]{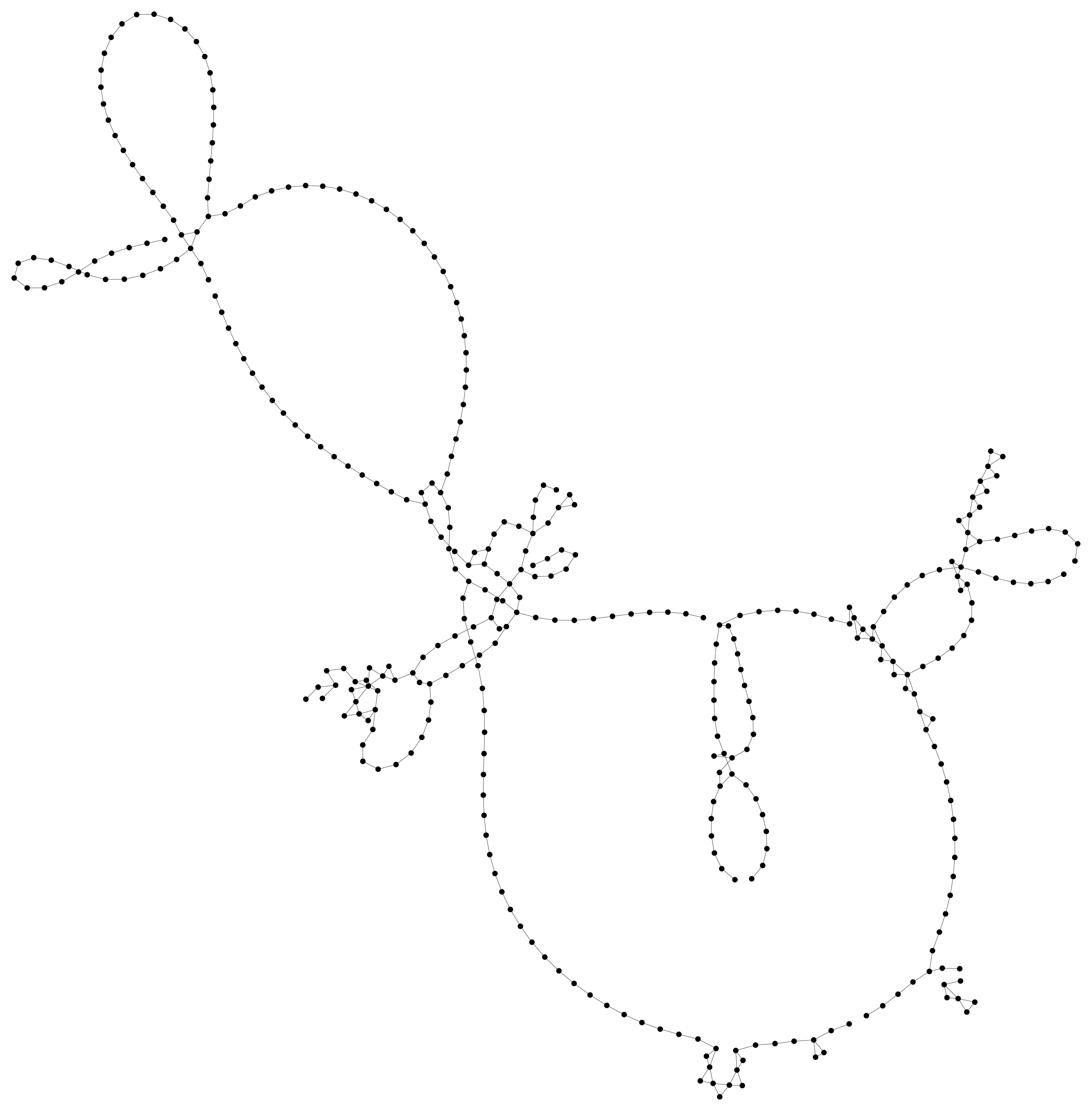}
    \caption{Berlin instance before kernelization.}
  \end{subfigure}\hfill
  \begin{subfigure}[b]{0.45\textwidth}
    \includegraphics[width=\textwidth]{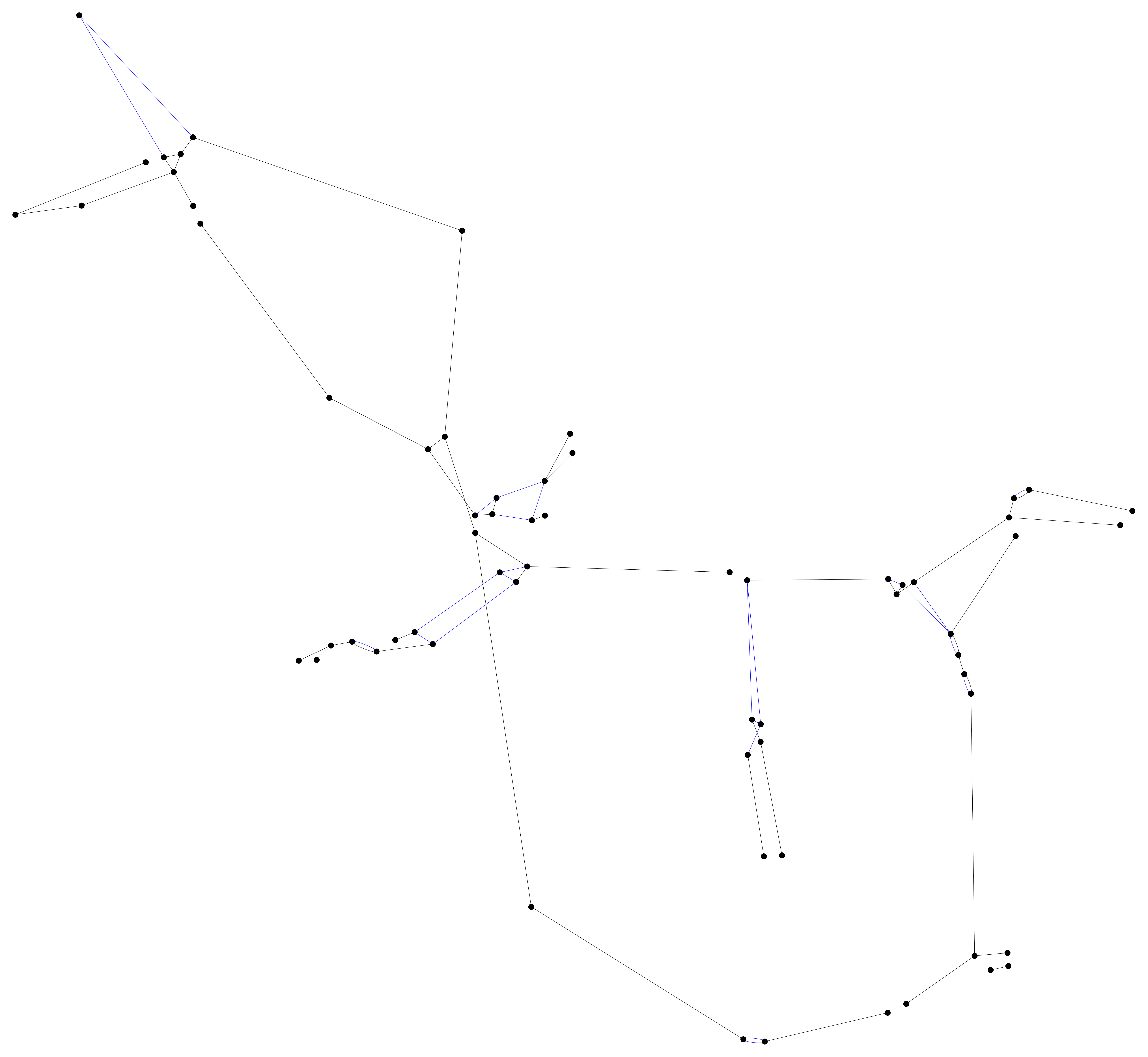}
    \caption{Berlin instance after kernelization.}
  \end{subfigure}
  \\[2cm]
  \begin{subfigure}[b]{0.45\textwidth}
    \includegraphics[width=\textwidth]{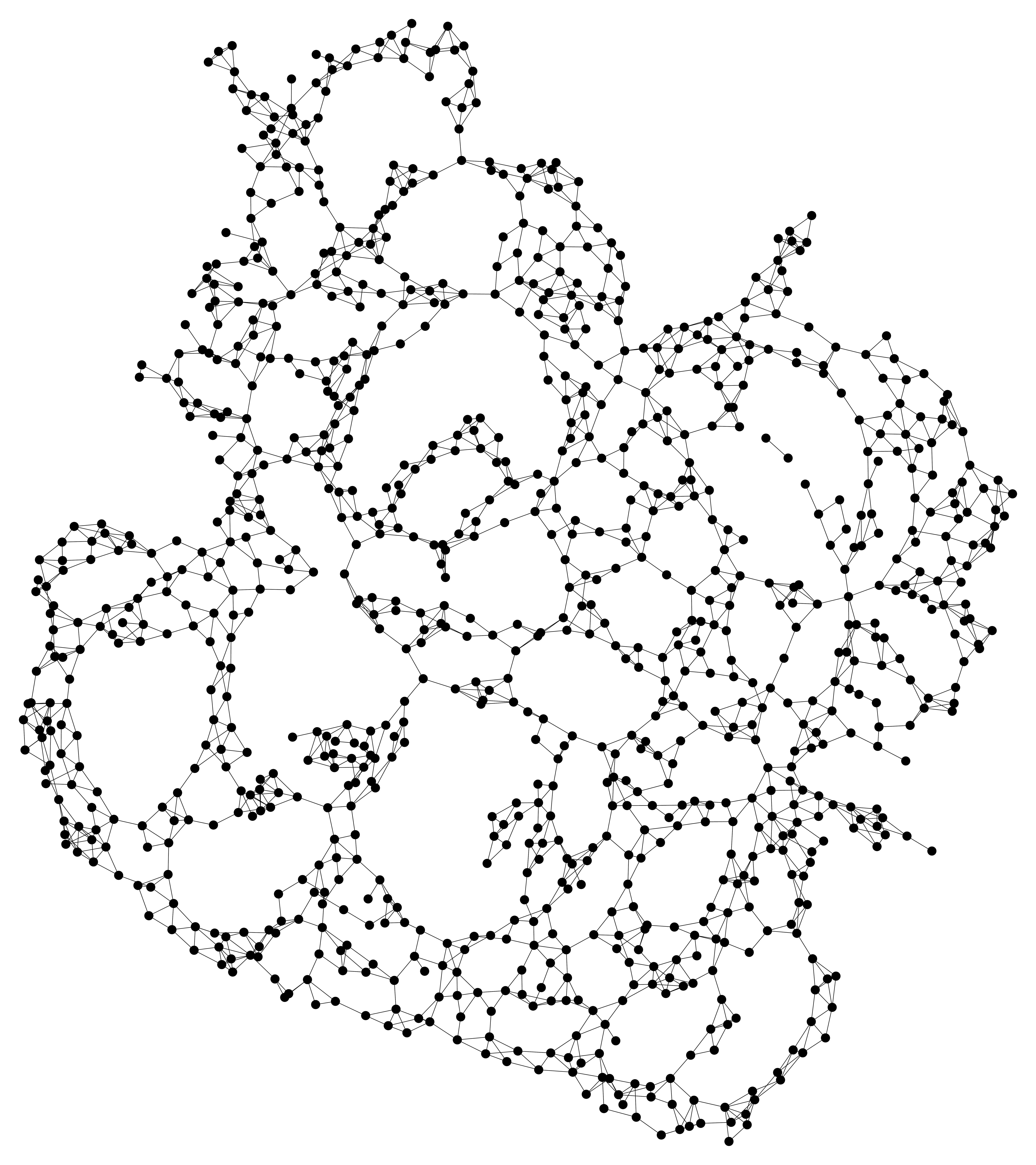}
    \caption{The ur-$1000$-$5$-$0.75$ instance before kernelization.}
  \end{subfigure}\hfill
  \begin{subfigure}[b]{0.45\textwidth}
    \includegraphics[width=\textwidth]{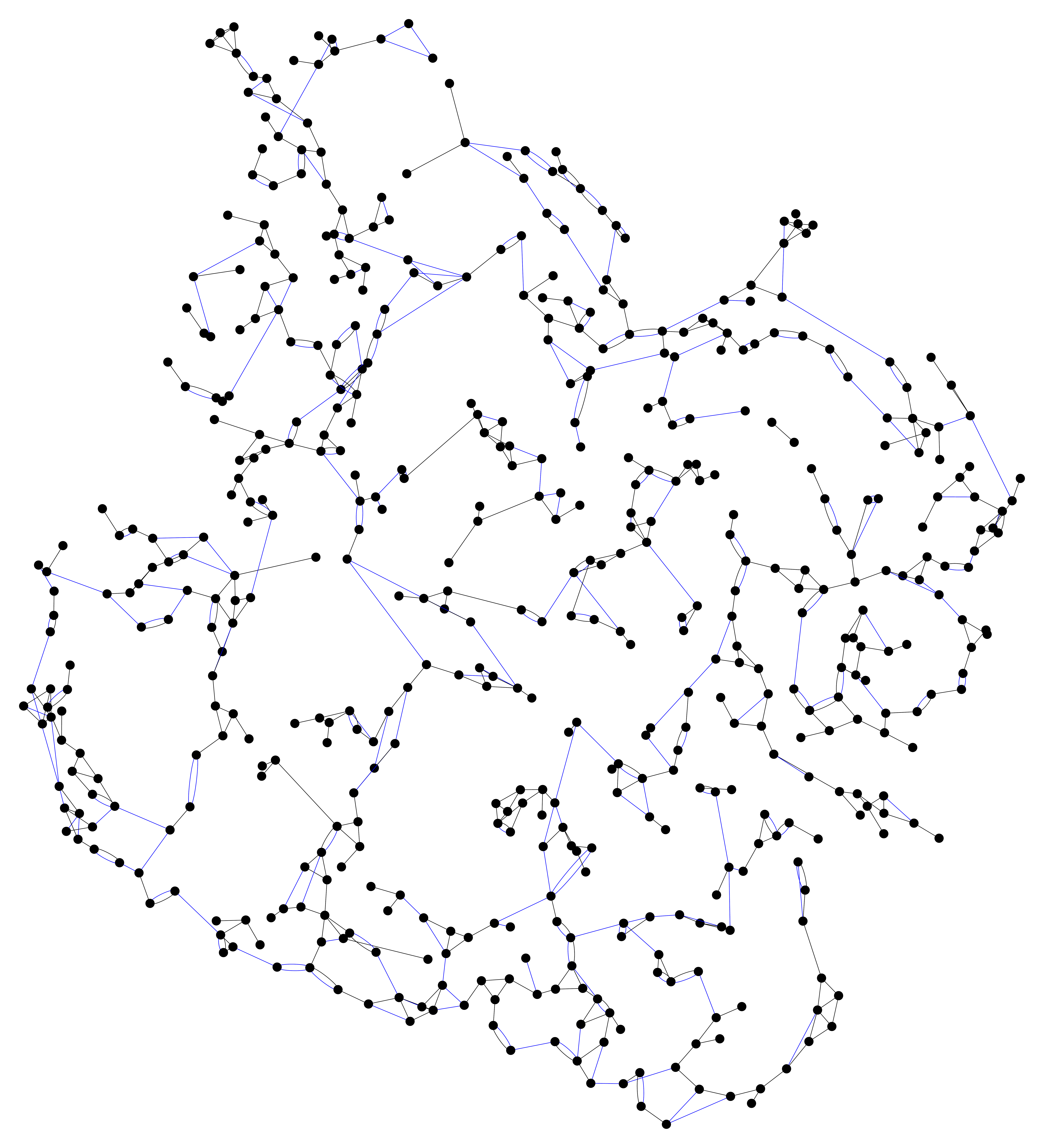}
    \caption{The ur-$1000$-$5$-$0.75$ instance after kernelization.}
  \end{subfigure}
  \caption{Two instances before (left) and after (right) kernelization.
    Only required edges are shown.
    Blue edges are the matching edges added by \cref{rrule:ballred}.}
  \label{fig:berlin}
\end{figure}

\begin{figure}[p]
  \centering
  \ref{mylegend}
  \\[1cm]
  \begin{subfigure}[b]{0.45\textwidth}
    \begin{tikzpicture}[baseline]
      \begin{axis}[ylabel={$|V'|/|V(R)|$}, xmin=-1,
        xlabel={$|V(R)|$},
        ylabel near ticks,
        legend columns=-1,
        legend entries={alba-$p$-$i$\vspace{1cm}, madr-$p$-$i$\quad, ur-500-$d$-$p$\quad, ur-700-$d$-$p$\quad, ur-1000-$d$-$p$},
        legend to name=mylegend,
        legend style={/tikz/every even column/.append style={column sep=0.5cm}},
        legend style={/tikz/every odd column/.append style={column sep=0.15cm}}]
        \addplot[only marks , mark=o]       table[col sep=ampersand, x={nr}, y={n2nr}] {alba.texinclude};
        \addplot[only marks , mark=triangle]       table[col sep=ampersand, x={nr}, y={n2nr}] {madr.texinclude};
        \addplot[only marks , mark=*]       table[col sep=ampersand, x={nr}, y={n2nr}] {ur500.texinclude};
        \addplot[only marks , mark=star]    table[col sep=ampersand, x={nr}, y={n2nr}] {ur700.texinclude};
        \addplot[only marks , mark=+]       table[col sep=ampersand, x={nr}, y={n2nr}] {ur1000.texinclude};
\end{axis}
    \end{tikzpicture}
    \caption{Number of input vertices incident to required edges vs fraction of remaining vertices.}
    \label{fig:to-v}
  \end{subfigure}
  \hfill
  \begin{subfigure}[b]{0.45\textwidth}
    \begin{tikzpicture}[baseline]
      \begin{axis}[ylabel={$|R'|/|R|$},
        xlabel={$|R|$},
        xmax=2100,xmin=-1,
        yticklabel pos=upper,
        ylabel near ticks]
        \addplot[only marks , mark=o]       table[col sep=ampersand, x={r1}, y={r2r1}] {alba.texinclude};
        \addplot[only marks , mark=triangle]       table[col sep=ampersand, x={r1}, y={r2r1}] {madr.texinclude};
        \addplot[only marks , mark=*]       table[col sep=ampersand, x={r1}, y={r2r1}] {ur500.texinclude};
        \addplot[only marks , mark=star]    table[col sep=ampersand, x={r1}, y={r2r1}] {ur700.texinclude};
        \addplot[only marks , mark=+]       table[col sep=ampersand, x={r1}, y={r2r1}] {ur1000.texinclude};

\end{axis}
    \end{tikzpicture}
\caption{Number of input required edges vs fraction of remaining required edges.}
    \label{fig:to-r}
  \end{subfigure}
  \\[1cm]
  \begin{subfigure}[b]{0.45\textwidth}
    \begin{tikzpicture}[baseline]
      \begin{axis}[xmax=90, xmin=-1, ylabel={$|V'|/|V(R)|$},
        xlabel={$\phantom{|V|}c$},
        ylabel near ticks]
        \addplot[only marks , mark=o]       table[col sep=ampersand, x={c}, y={n2nr}] {alba.texinclude};
        \addplot[only marks , mark=triangle]       table[col sep=ampersand, x={c}, y={n2nr}] {madr.texinclude};

        \addplot[only marks , mark=*]       table[col sep=ampersand, x={c}, y={n2nr}] {ur500.texinclude};
        \addplot[only marks , mark=star]    table[col sep=ampersand, x={c}, y={n2nr}] {ur700.texinclude};
        \addplot[only marks , mark=+]       table[col sep=ampersand, x={c}, y={n2nr}] {ur1000.texinclude};

\end{axis}
    \end{tikzpicture}
    \caption{Number of connected components of~$G\langle R\rangle$ vs fraction of remaining vertices.}
    \label{fig:to-c}
  \end{subfigure}
\hfill
\begin{subfigure}[b]{0.45\textwidth}
\begin{tikzpicture}[baseline]
  \begin{axis}[ylabel={$|V'|/|V(R)|$},
    xlabel={$|V(R)|/c$},
    xmax=260, xmin=0,        yticklabel pos=upper,
        ylabel near ticks]
    \addplot[only marks , mark=o]       table[col sep=ampersand, x expr=\thisrow{nr}/\thisrow{c}, y={n2nr}] {alba.texinclude};
    \addplot[only marks , mark=triangle]       table[col sep=ampersand, x expr=\thisrow{nr}/\thisrow{c}, y={n2nr}] {madr.texinclude};
    \addplot[only marks , mark=*]       table[col sep=ampersand, x expr=\thisrow{nr}/\thisrow{c}, y={n2nr}] {ur500.texinclude};
    \addplot[only marks , mark=star]    table[col sep=ampersand, x expr=\thisrow{nr}/\thisrow{c}, y={n2nr}] {ur700.texinclude};
    \addplot[only marks , mark=+]       table[col sep=ampersand, x expr=\thisrow{nr}/\thisrow{c}, y={n2nr}] {ur1000.texinclude};

\end{axis}
\end{tikzpicture}
\caption{Average size of connected component in $G\langle R\rangle$ vs fraction of remaining vertices.}
\label{fig:to-cavg}
\end{subfigure}

\caption{Effect of data reduction of our PSAKS.}
\label{fig:effect}
\end{figure}

\paragraph{Possible improvements.}
The effectivity of our data reduction can be increased
replacing $\cost(R)$ by
\[
  \max\Biggl\{\cost(R)+\cost(M),\cost(R)+\cost(T),\cost(R)+\frac{\cost(M)+\cost(T)}2\Biggr\}
\]
in the choice of $\gamma$ in~\eqref{gamma} for the application of \cref{rrule:ballred}.
Since this also is a lower bound for $\cost(W^*)$ (recall \cref{lem:bounds}),
such a replacement will still guarantee
that a $\alpha(1+\varepsilon)$-approximation
can be lifted from a $\alpha$-approximation on the kernel.
However,
this replacement removes about one or two percents of vertices more,
whereas
computing $\cost(M)$ in the larger instances
took between 11 and 40 seconds,
so the pay\hyp off is very limited.

\section{Conclusion}
Our main algorithmic contribution
is a polynomial\hyp size approximate kernelization scheme (PSAKS)
for the Rural Postman Problem
parameterized by~$b+c$,
where $b$~is the number of vertices
incident to an odd number of required edges
and $c$~is the number of connected components
formed by the required edges.
Experiments show that
the data reduction algorithm
efficiently shrinks
problem instances with few connected components
without largely sacrificing solution quality.
We also showed a PSAKS
for the parameter $\cost(T)$,
which gives a PSAKS for the parameter~$c$
when edge weights are bounded polynomially in~$c$.
These results together naturally lead to the question
whether a PSAKS for the parameter~$c$ exists
(we conjecture ``yes'').

We think that the approach taken by
\cref{rrule:ballred},
namely reducing all vertices that
do not belong to some inclusion\hyp maximal set~$B$
of mutually sufficiently distant vertices,
might be applicable to other metric graph problems:
it ensures that,
for each deleted vertex,
some nearby representative in~$B$ is retained.
In preliminary research, for example,
we also found it to applicable
to a metric variant
of the \textsc{Min-Power Symmetric Connectivity} problem
where it is required to connect $c$~disconnected
parts of a wireless sensor network \citep{BBNN17}
and to the Location Rural Postman Problem \citep{BT19}.
Notably,
this approach does not generalize well to
asymmetric distances,
so that another vexing question
besides proving \cref{mainconj}
is whether the scheme for the parameter~$b+c$
presented in this work can be generalized
to the \emph{directed} Rural Postman Problem.
We point out that,
using known ideas \citep{BKS17},
one can reduce any instance~$I$ of the directed
or undirected RPP to an instance~$I'$
with $c$~vertices in $O(n^3\log n)$~time
such that any \(\alpha\)-approximation for~$I'$
yields an \((\alpha+1)\)-approximation for~$I$.
Given that undirected RPP is 3/2-approximable,
this is interesting only for the directed~RPP.

\paragraph{Acknowledgments.}
We thank the anonymous referees of \emph{Networks} for their constructive feedback.

\paragraph{Funding.}
\looseness=-1
R.\ van Bevern and O.\ Yu.\ Tsidulko
are supported by the Russian Foundation for Basic Research,
project~18-501-12031 NNIO\textunderscore a.
T.\ Fluschnik
is supported by the German Research Foundation,
project TORE (NI~369/18).

\small
\bibliographystyle{dr-rpp}

\begin{thebibliography}{56}
\providecommand{\natexlab}[1]{#1}
\providecommand{\url}[1]{\texttt{#1}}
\providecommand{\urlprefix}{URL }
\expandafter\ifx\csname urlstyle\endcsname\relax
  \providecommand{\doi}[1]{doi:\discretionary{}{}{}#1}\else
  \providecommand{\doi}{doi:\discretionary{}{}{}\begingroup
  \urlstyle{rm}\Url}\fi
\providecommand{\selectlanguage}[1]{\relax}
\providecommand{\eprint}[2][]{\url{#2}}

\bibitem[{Belenguer et~al.(2006)Belenguer, Benavent, Lacomme, and
  Prins}]{BBLP06}
J.-M. Belenguer, E.~Benavent, P.~Lacomme, and C.~Prins, \emph{Lower and upper
  bounds for the mixed capacitated arc routing problem}, Computers {\&}
  Operations Research \textbf{33} (2006), 3363--3383,
  \doi{10.1016/j.cor.2005.02.009}.

\bibitem[{Bentert et~al.(2020)Bentert, van Bevern, Fluschnik, Nichterlein, and
  Niedermeier}]{BBF+xx}
M.~Bentert, R.~van Bevern, T.~Fluschnik, A.~Nichterlein, and R.~Niedermeier,
  \emph{Polynomial-time data reduction for weighted problems beyond additive
  goal functions}, 2020, \urlprefix\url{https://arxiv.org/abs/1910.00277}.

\bibitem[{{Bentert} et~al.(2017){Bentert}, {van Bevern}, {Nichterlein}, and
  {Niedermeier}}]{BBNN17}
M.~{Bentert}, R.~{van Bevern}, A.~{Nichterlein}, and R.~{Niedermeier},
  \emph{Parameterized algorithms for power-efficient connected symmetric
  wireless sensor networks}, A.~F. Anta, T.~Jurdzinski, M.~A. Mosteiro, and
  Y.~Zhang (eds.), \emph{ALGOSENSORS 2017}, Springer, 2017, \emph{Lecture Notes
  in Computer Science}, vol. 10718, pp. 26--40,
  \doi{10.1007/978-3-319-72751-6_3}.

\bibitem[{van Bevern et~al.(2019)van Bevern, Fluschnik, and Tsidulko}]{BFT19}
R.~van Bevern, T.~Fluschnik, and O.~{\relax Yu}. Tsidulko, \emph{On
  $(1+\varepsilon)$-approximate data reduction for the {Rural Postman
  Problem}}, M.~Khachay, Y.~Kochetov, and P.~Pardalos (eds.), \emph{MOTOR
  2019}, Springer, \emph{Lecture Notes in Computer Science}, vol. 11548, 2019,
  pp. 279--294, \doi{10.1007/978-3-030-22629-9_20}.

\bibitem[{van Bevern et~al.(2020)van Bevern, Fluschnik, and Tsidulko}]{BFT20}
R.~van Bevern, T.~Fluschnik, and O.~{\relax Yu}. Tsidulko, \emph{Parameterized
  algorithms and data reduction for the short secluded $s$-$t$-path problem},
  Networks \textbf{75} (2020), 34--63, \doi{10.1002/net.21904}.

\bibitem[{van Bevern et~al.(2014)van Bevern, Hartung, Nichterlein, and
  Sorge}]{BHNS14}
R.~van Bevern, S.~Hartung, A.~Nichterlein, and M.~Sorge, \emph{Constant-factor
  approximations for capacitated arc routing without triangle inequality},
  Operations Research Letters \textbf{42} (2014), 290--292,
  \doi{10.1016/j.orl.2014.05.002}.

\bibitem[{van Bevern et~al.(2017)van Bevern, Komusiewicz, and Sorge}]{BKS17}
R.~van Bevern, C.~Komusiewicz, and M.~Sorge, \emph{A parameterized
  approximation algorithm for the mixed and windy capacitated arc routing
  problem: Theory and experiments}, Networks \textbf{70} (2017), 262--278,
  \doi{10.1002/net.21742}.

\bibitem[{van Bevern et~al.(2015)van Bevern, Niedermeier, Sorge, and
  Weller}]{BNSW15}
R.~van Bevern, R.~Niedermeier, M.~Sorge, and M.~Weller, \emph{Complexity of arc
  routing problems}, {\'{A}}.~Corberán and G.~Laporte (eds.), \emph{Arc
  Routing: Problems, Methods, and Applications}, SIAM, \emph{MOS-SIAM Series on
  Optimization}, vol.~20, 2015, pp. 19--52, \doi{10.1137/1.9781611973679.ch2}.

\bibitem[{van Bevern and Slugina(in press)}]{BSxxb}
R.~van Bevern and V.~A. Slugina, \emph{A historical note on the
  3/2-approximation algorithm for the metric traveling salesman problem},
  Historia Mathematica  (in press), \doi{10.1016/j.hm.2020.04.003}.

\bibitem[{van Bevern and Tsidulko(2019)}]{BT19}
R.~van Bevern and O.~{\relax Yu}. Tsidulko, \emph{Data reduction for the
  location rural postman problem}, \emph{Abstracts of the 33rd Annual
  Conference of the Belgian Operations Research Society (ORBEL 33)}, February
  7--8, 2019, Hasselt University, Belgium, 2019, pp. 41--43.

\bibitem[{Brandão and Eglese(2008)}]{BE08}
J.~Brandão and R.~Eglese, \emph{A deterministic tabu search algorithm for the
  capacitated arc routing problem}, Computers {\&} Operations Research
  \textbf{35} (2008), 1112--1126, \doi{10.1016/j.cor.2006.07.007}.

\bibitem[{Christofides(1973)}]{Chr73}
N.~Christofides, \emph{The optimum traversal of a graph}, Omega \textbf{1}
  (1973), 719--732, \doi{10.1016/0305-0483(73)90089-3}.

\bibitem[{Christofides(1976)}]{Chr76}
N.~Christofides, \emph{Worst-case analysis of a new heuristic for the traveling
  salesman problem}, Tech. Rep. 388, Carnegie-Mellon University, Pittsburgh,
  Pennsylvania, USA, 1976.

\bibitem[{Corber{\'a}n et~al.(2001)Corber{\'a}n, Letchford, and
  Sanchis}]{CLS01}
A.~Corber{\'a}n, A.~N. Letchford, and J.~M. Sanchis, \emph{A cutting plane
  algorithm for the general routing problem}, Mathematical Programming
  \textbf{90} (2001), 291--316, \doi{10.1007/PL00011426}.

\bibitem[{Corber{\'a}n et~al.(2007)Corber{\'a}n, Plana, and Sanchis}]{CPS07}
A.~Corber{\'a}n, I.~Plana, and J.~M. Sanchis, \emph{A branch \& cut algorithm
  for the windy general routing problem and special cases}, Networks
  \textbf{49} (2007), 245--257, \doi{10.1002/net.20176}.

\bibitem[{Corberán and Laporte(2015)}]{CL15}
{\' A}.~Corberán and G.~Laporte (eds.), \emph{Arc Routing: Problems, Methods,
  and Applications}, \emph{MOS-SIAM Series on Optimization}, vol.~20, SIAM,
  2015, \doi{10.1137/1.9781611973679}.

\bibitem[{Dorn et~al.(2013)Dorn, Moser, Niedermeier, and Weller}]{DMNW13}
F.~Dorn, H.~Moser, R.~Niedermeier, and M.~Weller, \emph{Efficient algorithms
  for {Eulerian Extension} and {Rural Postman}}, SIAM Journal on Discrete
  Mathematics \textbf{27} (2013), 75--94, \doi{10.1137/110834810}.

\bibitem[{Edmonds and Johnson(1973)}]{EJ73}
J.~Edmonds and E.~L. Johnson, \emph{Matching, {E}uler tours and the {C}hinese
  postman}, Mathematical Programming \textbf{5} (1973), 88--124,
  \doi{10.1007/BF01580113}.

\bibitem[{Eglese and Murdock(1991)}]{EM91}
R.~W. Eglese and H.~Murdock, \emph{Routeing road sweepers in a rural area},
  Journal of the Operational Research Society \textbf{42} (1991), 281--288,
  \doi{10.1057/jors.1991.66}.

\bibitem[{Eiben et~al.(2017)Eiben, Hermelin, and Ramanujan}]{EHR17}
E.~Eiben, D.~Hermelin, and M.~S. Ramanujan, \emph{Lossy kernels for hitting
  subgraphs}, K.~G. Larsen, H.~L. Bodlaender, and J.-F. Raskin (eds.),
  \emph{MFCS 2017}, Schloss Dagstuhl--Leibniz-Zentrum f{\"u}r Informatik,
  Dagstuhl, Germany, \emph{Leibniz International Proceedings in Informatics
  (LIPIcs)}, vol.~83, 2017, pp. 67:1--67:14, \doi{10.4230/LIPIcs.MFCS.2017.67}.

\bibitem[{Eiben et~al.(2018)Eiben, Kumar, Mouawad, Panolan, and
  Siebertz}]{EKM+18}
E.~Eiben, M.~Kumar, A.~E. Mouawad, F.~Panolan, and S.~Siebertz, \emph{Lossy
  kernels for connected dominating set on sparse graphs}, R.~Niedermeier and
  B.~Vall{\'e}e (eds.), \emph{35th Symposium on Theoretical Aspects of Computer
  Science (STACS 2018)}, Schloss Dagstuhl--Leibniz-Zentrum f{\"u}r Informatik,
  \emph{Leibniz International Proceedings in Informatics (LIPIcs)}, vol.~96,
  2018, pp. 29:1--29:15, \doi{10.4230/LIPIcs.STACS.2018.29}.

\bibitem[{Eiselt et~al.(1995)Eiselt, Gendreau, and Laporte}]{EGL95}
H.~A. Eiselt, M.~Gendreau, and G.~Laporte, \emph{Arc routing problems, part
  {II}: The {Rural Postman Problem}}, Operations Research \textbf{43} (1995),
  399--414, \doi{10.1287/opre.43.3.399}.

\bibitem[{Etscheid et~al.(2017)Etscheid, Kratsch, Mnich, and Röglin}]{EKMR17}
M.~Etscheid, S.~Kratsch, M.~Mnich, and H.~Röglin, \emph{Polynomial kernels for
  weighted problems}, Journal of Computer and System Sciences \textbf{84}
  (2017), 1--10, \doi{10.1016/j.jcss.2016.06.004}.

\bibitem[{Fellows et~al.(2018)Fellows, Kulik, Rosamond, and Shachnai}]{FKRS18}
M.~R. Fellows, A.~Kulik, F.~A. Rosamond, and H.~Shachnai, \emph{Parameterized
  approximation via fidelity preserving transformations}, Journal of Computer
  and System Sciences \textbf{93} (2018), 30--40,
  \doi{10.1016/j.jcss.2017.11.001}.

\bibitem[{Fernández et~al.(2003)Fernández, Meza, Garfinkel, and
  Ortega}]{FMGO03}
E.~Fernández, O.~Meza, R.~Garfinkel, and M.~Ortega, \emph{On the undirected
  rural postman problem: Tight bounds based on a new formulation}, Operations
  Research \textbf{51} (2003), 281--291, \doi{10.1287/opre.51.2.281.12790}.

\bibitem[{Fleischner(1991)}]{Fle91}
H.~Fleischner, \emph{Eulerian Graphs and Related Topics: Part 1, Volume 2},
  North-Holland, Amsterdam, The Netherlands, \emph{Annals of Discrete
  Mathematics}, vol.~50, 1991, pp. X.1--X.14.

\bibitem[{Flum and Grohe(2006)}]{FG06}
J.~Flum and M.~Grohe, \emph{Parameterized Complexity Theory}, Texts in
  Theoretical Computer Science, An EATCS Series, Springer, 2006,
  \doi{10.1007/3-540-29953-X}.

\bibitem[{Fomin et~al.(2019)Fomin, Lokshtanov, Saurabh, and Zehavi}]{FLSZ19}
F.~V. Fomin, D.~Lokshtanov, S.~Saurabh, and M.~Zehavi, \emph{Kernelization},
  Cambridge University Press, 2019, \doi{10.1017/9781107415157}.

\bibitem[{Frank and Tardos(1987)}]{FT87}
A.~Frank and {\'{E}}.~Tardos, \emph{An application of simultaneous diophantine
  approximation in combinatorial optimization}, Combinatorica \textbf{7}
  (1987), 49--65, \doi{10.1007/BF02579200}.

\bibitem[{Frederickson(1977)}]{Fre77}
G.~N. Frederickson, \emph{Approximation {A}lgorithms for {NP}-hard {R}outing
  {P}roblems}, Ph.D. thesis, University of Maryland Graduate School, College
  Park, Maryland, USA, 1977.

\bibitem[{Garey and Johnson(1979)}]{GJ79}
M.~R. Garey and D.~S. Johnson, \emph{Computers and Intractability: A Guide to
  the Theory of {NP}-Completeness}, Freeman, New York, USA, 1979.

\bibitem[{Ghiani and Improta(2001)}]{GI01}
G.~Ghiani and G.~Improta, \emph{The laser-plotter beam routing problem},
  Journal of the Operational Research Society \textbf{52} (2001), 945--951,
  \doi{10.1057/palgrave.jors.2601161}.

\bibitem[{Ghiani and Laporte(1999)}]{GL99}
G.~Ghiani and G.~Laporte, \emph{Eulerian location problems}, Networks
  \textbf{34} (1999), 291--302,
  \doi{10.1002/(SICI)1097-0037(199912)34:4<291::AID-NET9>3.0.CO;2-4}.

\bibitem[{Golden and Wong(1981)}]{GW81}
B.~L. Golden and R.~T. Wong, \emph{Capacitated arc routing problems}, Networks
  \textbf{11} (1981), 305--315, \doi{10.1002/net.3230110308}.

\bibitem[{Gr{\"o}tschel et~al.(1991)Gr{\"o}tschel, J{\"u}nger, and
  Reinelt}]{GJR91}
M.~Gr{\"o}tschel, M.~J{\"u}nger, and G.~Reinelt, \emph{Optimal control of
  plotting and drilling machines: A case study}, Zeitschrift f{\"u}r Operations
  Research \textbf{35} (1991), 61--84, \doi{10.1007/BF01415960}.

\bibitem[{Gutin et~al.(2017)Gutin, Wahlström, and Yeo}]{GWY17}
G.~Gutin, M.~Wahlström, and A.~Yeo, \emph{{Rural Postman} parameterized by the
  number of components of required edges}, Journal of Computer and System
  Sciences \textbf{83} (2017), 121--131, \doi{10.1016/j.jcss.2016.06.001}.

\bibitem[{Hermelin et~al.(2015)Hermelin, Kratsch, So{\l}tys, Wahlstr{\"o}m, and
  Wu}]{HKS+15b}
D.~Hermelin, S.~Kratsch, K.~So{\l}tys, M.~Wahlstr{\"o}m, and X.~Wu, \emph{A
  completeness theory for polynomial ({Turing}) kernelization}, Algorithmica
  \textbf{71} (2015), 702--730, \doi{10.1007/s00453-014-9910-8}.

\bibitem[{Hierholzer and Wiener(1873)}]{HW73}
C.~Hierholzer and C.~Wiener, \emph{{Ueber die Möglichkeit, einen Linienzug
  ohne Wiederholung und ohne Unterbrechung zu umfahren}}, Mathematische Annalen
  \textbf{6} (1873), 30--32, \doi{10.1007/BF01442866}.

\bibitem[{Hà et~al.(2014)Hà, Bostel, Langevin, and Rousseau}]{HBLR14}
M.~H. Hà, N.~Bostel, A.~Langevin, and L.-M. Rousseau, \emph{Solving the
  close-enough arc routing problem}, Networks \textbf{63} (2014), 107--118,
  \doi{10.1002/net.21525}.

\bibitem[{Jansen(1993)}]{Jan93}
K.~Jansen, \emph{Bounds for the general capacitated routing problem}, Networks
  \textbf{23} (1993), 165--173, \doi{10.1002/net.3230230304}.

\bibitem[{Karpinski et~al.(2015)Karpinski, Lampis, and Schmied}]{KLS15}
M.~Karpinski, M.~Lampis, and R.~Schmied, \emph{New inapproximability bounds for
  {TSP}}, Journal of Computer and System Sciences \textbf{81} (2015),
  1665--1677, \doi{10.1016/j.jcss.2015.06.003}.

\bibitem[{Krithika et~al.(2018)Krithika, Majumdar, and Raman}]{KMR18}
R.~Krithika, D.~Majumdar, and V.~Raman, \emph{Revisiting connected vertex
  cover: {FPT} algorithms and lossy kernels}, Theory of Computing Systems
  \textbf{62} (2018), 1690--1714, \doi{10.1007/s00224-017-9837-y}.

\bibitem[{Krithika et~al.(2016)Krithika, Misra, Rai, and Tale}]{KMRT16}
R.~Krithika, P.~Misra, A.~Rai, and P.~Tale, \emph{Lossy kernels for graph
  contraction problems}, A.~Lal, S.~Akshay, S.~Saurabh, and S.~Sen (eds.),
  \emph{FSTTCS 2016}, Schloss Dagstuhl--Leibniz-Zentrum f{\"u}r Informatik,
  Dagstuhl, Germany, \emph{Leibniz International Proceedings in Informatics
  (LIPIcs)}, vol.~65, 2016, pp. 23:1--23:14,
  \doi{10.4230/LIPIcs.FSTTCS.2016.23}.

\bibitem[{Lenstra and {Rinnooy Kan}(1976)}]{LR76}
J.~K. Lenstra and A.~H.~G. {Rinnooy Kan}, \emph{On general routing problems},
  Networks \textbf{6} (1976), 273--280, \doi{10.1002/net.3230060305}.

\bibitem[{Lokshtanov et~al.(2017)Lokshtanov, Panolan, Ramanujan, and
  Saurabh}]{LPRS17}
D.~Lokshtanov, F.~Panolan, M.~S. Ramanujan, and S.~Saurabh, \emph{Lossy
  kernelization}, H.~Hatami, P.~McKenzie, and V.~King (eds.), \emph{STOC 2017},
  {ACM}, 2017, pp. 224--237, \doi{10.1145/3055399.3055456}.

\bibitem[{Marx and V{\'e}gh(2015)}]{MV15}
D.~Marx and L.~A. V{\'e}gh, \emph{Fixed-parameter algorithms for minimum-cost
  edge-connectivity augmentation}, {ACM} Transactions on Algorithms \textbf{11}
  (2015), 27:1--27:24, \doi{10.1145/2700210}.

\bibitem[{Orloff(1974)}]{Orl74}
C.~S. Orloff, \emph{A fundamental problem in vehicle routing}, Networks
  \textbf{4} (1974), 35--64, \doi{10.1002/net.3230040105}.

\bibitem[{Reinelt et~al.(2008)Reinelt, Theis, and Wenger}]{RTW08}
G.~Reinelt, D.~O. Theis, and K.~M. Wenger, \emph{Computing finest mincut
  partitions of a graph and application to routing problems}, Discrete Applied
  Mathematics \textbf{156} (2008), 385--396, \doi{10.1016/j.dam.2007.03.022}.

\bibitem[{Schwartz(1980)}]{Sch80}
J.~T. Schwartz, \emph{Fast probabilistic algorithms for verification of
  polynomial identities}, Journal of the ACM \textbf{27} (1980), 701--717,
  \doi{10.1145/322217.322225}.

\bibitem[{Serdyukov(1974)}]{Ser74}
A.~I. Serdyukov, \emph{O zadache nakhozhdeniya minimal'nogo {Eilerova}
  mul'tigrafa dlya svyaznogo grafa so vzveshennymi rebrami}, Upravlyaemye
  sistemy \textbf{12} (1974), 61--67,
  \urlprefix\url{http://nas1.math.nsc.ru/aim/journals/us/us12/us12_008.pdf}.

\bibitem[{Serdyukov(1978)}]{Ser78}
A.~I. Serdyukov, \emph{O nekotorykh ekstremal'nykh obkhodakh v grafakh},
  Upravlyaemye sistemy \textbf{17} (1978), 76--79,
  \urlprefix\url{http://nas1.math.nsc.ru/aim/journals/us/us17/us17_007.pdf}.

\bibitem[{Sorge et~al.(2011)Sorge, van Bevern, Niedermeier, and
  Weller}]{SBNW11}
M.~Sorge, R.~van Bevern, R.~Niedermeier, and M.~Weller, \emph{From few
  components to an {Eulerian} graph by adding arcs}, P.~Kolman and
  J.~Kratochvíl (eds.), \emph{WG 2011}, Springer, \emph{Lecture Notes in
  Computer Science}, vol. 6986, 2011, pp. 307--318,
  \doi{10.1007/978-3-642-25870-1_28}.

\bibitem[{Sorge et~al.(2012)Sorge, van Bevern, Niedermeier, and
  Weller}]{SBNW12}
M.~Sorge, R.~van Bevern, R.~Niedermeier, and M.~Weller, \emph{A new view on
  {Rural Postman} based on {Eulerian Extension} and {Matching}}, Journal of
  Discrete Algorithms \textbf{16} (2012), 12--33,
  \doi{10.1016/j.jda.2012.04.007}.

\bibitem[{Ulusoy(1985)}]{Ulu85}
G.~Ulusoy, \emph{The fleet size and mix problem for capacitated arc routing},
  European Journal of Operational Research \textbf{22} (1985), 329--337,
  \doi{10.1016/0377-2217(85)90252-8}.

\bibitem[{Wøhlk(2008)}]{Woe08}
S.~Wøhlk, \emph{An approximation algorithm for the {Capacitated Arc Routing
  Problem}}, The Open Operational Research Journal \textbf{2} (2008), 8--12,
  \doi{10.2174/1874243200802010008}.

\bibitem[{Zippel(1979)}]{Zip79}
R.~Zippel, \emph{Probabilistic algorithms for sparse polynomials}, E.~W. Ng
  (ed.), \emph{EUROSAM 1979}, Springer, \emph{Lecture Notes in Computer
  Science}, vol.~72, 1979, pp. 216--226, \doi{10.1007/3-540-09519-5_73}.

\end{thebibliography}

\end{document}